\newtheorem{definition}{Definition}[section]
\newtheorem{theorem}{Theorem}[section]
\newtheorem{proof}{Proof}[section]
\newtheorem{lemma}{Lemma}[section]
\newtheorem{example}{Example}[section]
\newtheorem{observation}{Observation}[section]
\newtheorem{stipulation}{Stipulation}[section]
\journal{Information Sciences}
\begin{document}

\begin{frontmatter}



\title{Efficient Semi-External Depth-First Search}

\author[hit]{Xiaolong Wan}
\ead{wxl@hit.edu.cn}
\author[hit]{Hongzhi Wang\corref{cor1}}
\ead{wangzh@hit.edu.cn}

\cortext[cor1]{Corresponding author}
\address[hit]{School of Computer Science and Technology, Harbin Institute of Technology, China}

\begin{abstract}
	As the sizes of graphs grow rapidly, currently many real-world graphs can hardly be loaded in the main memory. 
	It becomes a hot topic to compute depth-first search~(DFS) results, i.e., depth-first order or DFS-Tree, on semi-external memory model. Semi-external algorithms assume the main memory could at least hold a spanning tree $T$ of a graph $G$, and gradually restructure $T$ into a DFS-Tree, which is non-trivial. In this paper, we present a comprehensive study of semi-external DFS problem. Based on our theoretical analysis of its main challenge, we introduce a new semi-external DFS algorithm, named EP-DFS, with a lightweight index $\mathcal{N}^+\negthickspace$-\textit{index}. Unlike traditional algorithms, we focus on addressing such complex problem efficiently not only with less I/Os, but also with simpler CPU calculation (implementation-friendly) and less random I/O accesses~(key-to-efficiency). Extensive experimental evaluation is conducted on both synthetic and real graphs. The experimental results confirm that our EP-DFS algorithm significantly outperforms existing algorithms.
\end{abstract}

\begin{keyword}
Depth-first search \sep Semi-external memory \sep Graph algorithm

\end{keyword}

\end{frontmatter}

\section{Introduction}\label{sec:introduction}

Depth-first Search~(DFS) is a basic way to learn graph properties from node to node, which is widely utilized in the graph field~\cite{DBLP:books/daglib/0037819}. To visit a node $u$ in a graph $G$, DFS first marks $u$ as \textit{visited}. Then, it recursively visits all the adjacent nodes of $u$ that are unmarked. Specifically, if DFS starts from a node that connects with all the others, the total order that DFS visits the nodes of $G$ is called depth-first order. Besides, the path that DFS walks on is a spanning tree~\cite{DBLP:books/daglib/0037819}, known as DFS-Tree. Below, we present an example about depth-first order and DFS-Tree.

\begin{example}
	\label{example:1.1}
	We draw a graph $G_0$ by three different ways shown in Figure~\ref{fig:introductionFigure}, where $T_a$, $T_b$ and $T_c$ are the spanning trees of $G_0$ constituted by the solid lines. In each subfigure of Figure~\ref{fig:introductionFigure}, the numbers around the nodes are the depth-first orders of the nodes on the spanning tree~(not $G_0$) related to the subfigure. According to the above discussion of DFS algorithm, $T_a$ is not a DFS-Tree of $G_0$, because after visiting node $p$, $T_a$ visits node $b$ instead of node $f$. $T_b$ and $T_c$ are both DFS-Trees of $G_0$, and the depth-first orders of the nodes in $G_0$ could be either $r,a,d,p,f,b,g,q,c,h$ or $r,b,g,q,f,c,h,a,d,p$.
\end{example}

Computing depth-first order or DFS-Tree is a key operation for many graph problems, such as finding strongly connected components~\cite{DBLP:conf/sigmod/ZhangYQCL13}, topology sort~\cite{10.5555/500824}, reachability query~\cite{DBLP:conf/sigmod/JinHWRX10}, etc. That makes DFS a fundamental operation in the graph field. For example, current algorithms for finding strongly connected components~(SCCs) have to find the DFS-Tree $T$ of $G$ or compute its total depth-first order. For example, Kosaraju-Sharir algorithm~\cite{DBLP:books/daglib/0037819} executes DFS to obtain a total depth-first order of $G$, and executes DFS again on the transposed $G$ for finding all the SCCs of $G$. Furthermore, topology sort occurs from a common problem with $n$ variables, known that some of them are less than some others. One has to check whether the given constraints are contradictory, and if not, an ascending order of these $n$ variables is required. Current solutions have to first obtain a DFS-Tree $T$ of $G$. One reason is when $G$ has a cycle, it has no topology order, and to check that, normally a SCC algorithm is used. Another is when $G$ has no cycle, a topology order of $G$ is the reversed postorder of the nodes on $T$~\cite{DBLP:books/daglib/0037819}.

Given a graph $G$ with $n$ nodes and $m$ edges, an in-memory DFS algorithm requires $O(n+m)$~\cite{DBLP:books/daglib/0037819} time. However, as the sizes of graphs grow rapidly in practical applications, loading many large-scale graphs into the main memory can hardly be done~\cite{LKAQ}. For example, at the end of 2014, Freebase~\cite{Freebase} included 68 million entities, 1 billion pieces of relationships and more than 2.4 billion factual triples. Currently, except in-memory algorithms, there are also two kinds of DFS algorithms: \textit{external algorithms} and \textit{semi-external algorithms}.

\textit{External algorithms.} Supposing one block of disk contains $B$ elements. External DFS algorithms assume that memory could hold at most $M$ elements, while the others would be stored on disk. Each I/O operation reads a block. However, as DFS may access the nodes of $G$ randomly, the fastest external DFS algorithm~\cite{DBLP:conf/soda/BuchsbaumGVW00, DBLP:conf/sigmod/ZhangYQS15} still needs $O((n+\frac{m}{B})log{\frac{n}{B}}+ sort(m))$ I/Os, where $sort(m)= O(\frac{m}{B}\log_{\frac{M}{B}}{\frac{n}{B}})$. Thus, when $G$ is relatively large, these external algorithms can hardly be utilized in practice in terms of their high time and I/O consumption.

\begin{figure}[t]
	\centering
\renewcommand{\thesubfigure}{}
\subfigure[(a) $T_a$]{
	\includegraphics[scale = 0.4]{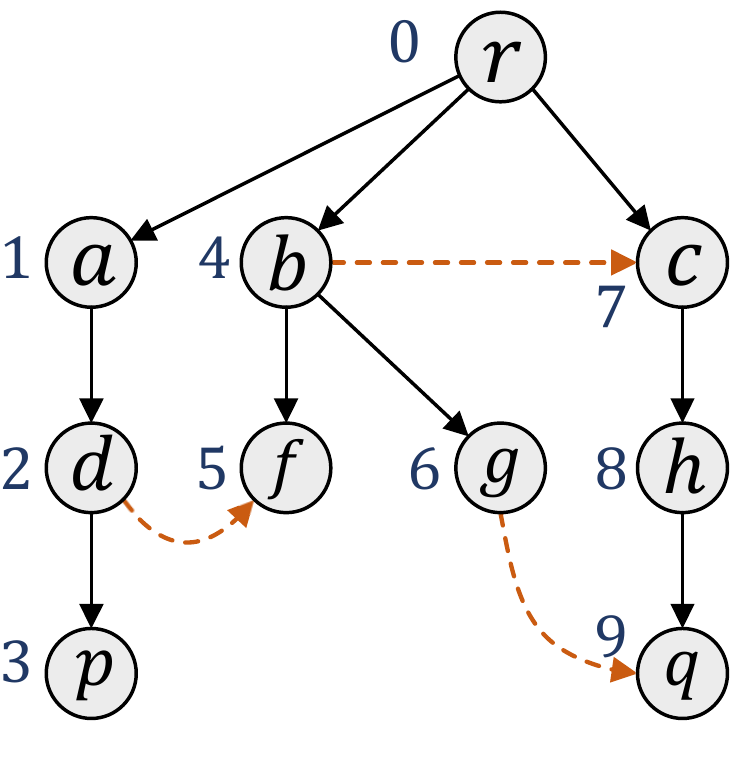}}
\quad
\subfigure[(b) $T_b$]{
	\includegraphics[scale = 0.4]{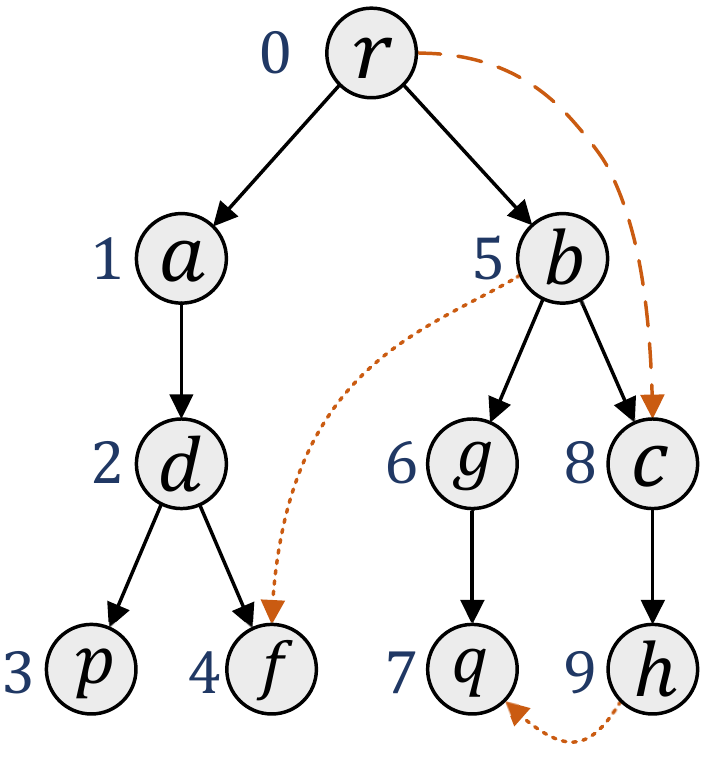}}
\quad
\subfigure[(c) $T_c$]{
	\includegraphics[scale = 0.4]{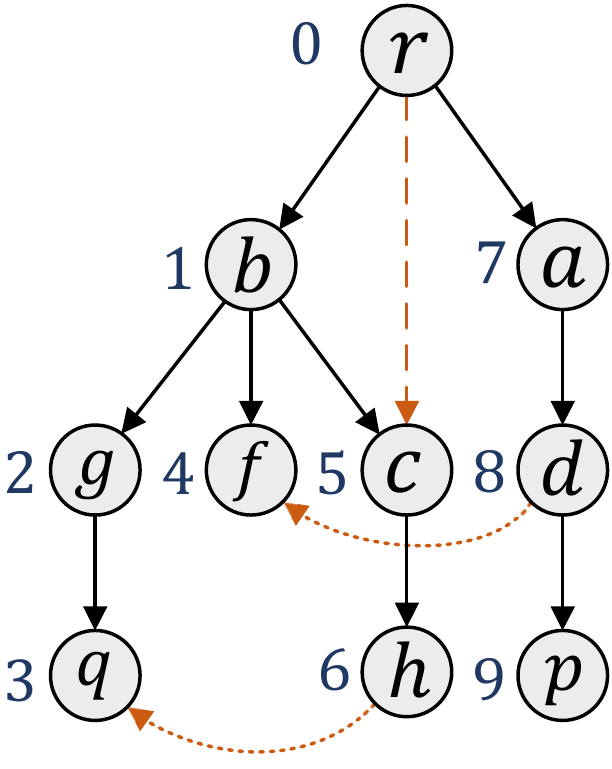}}

\caption{Schematic views of the spanning trees of a given graph $G_0$.}
\label{fig:introductionFigure}
\end{figure} 

\textit{Semi-external algorithms.} They assume that $c\times n$ elements could be maintained in the main memory, where $c$ is very small constant, e.g. $c=3$. Besides, instead of loading the whole graph, they construct an in-memory spanning tree $T$ of $G$, and convert the DFS problem into the problem of finding a DFS-Tree by restructuring $T$ with iterations. Computing DFS results on semi-external memory model is important, since the large-scale graphs, e.g. Freebase, can hardly be processed in the external memory.

Traditional technique for constructing DFS-Tree of a graph $G$ is introduced by Sibeyn et al.~\cite{DBLP:conf/spaa/SibeynAM02}, named \textit{EdgeByBatch}~(EB-DFS). It gradually adjusts $T$ into a DFS-Tree by iteratively scanning $G$ sequentially. Each iteration is called a round. EB-DFS ends in $k$th round when $T$ does not change from $(k-1)$th round to $k$th round, i.e. a DFS-Tree is obtained. Regardless the other I/O or computing cost, in practice, that is too expensive and meaningless to scan the entire $G$ for an extra round~($k$th round), as the size of $G$ is often relatively large. To address the high I/O cost of EB-DFS, Zhang et al.~\cite{DBLP:conf/sigmod/ZhangYQS15} devise a divide-and-conquer algorithm, named \textit{DivideConquerDFS}~(DC-DFS). After first executing one round of the above EB-DFS algorithm to construct a spanning tree $T$ of $G$, they divide $G$ into several small subgraphs, and recursively process such subgraphs. Unfortunately, in the graph division process, since the main memory cannot hold the entire graph, the division results need to be recorded in the external memory. DC-DFS is inevitably related to numerous random I/O accesses, especially when the number of the divided subgraphs is huge~\cite{ComputerSystems3}.

\textbf{Contributions.} In this paper, we provide a comprehensive study of the DFS problem on semi-external environment, which is crucial according to the above discussions. 

Even though the DFS problem has been studied over decades~\cite{DBLP:books/daglib/0037819}, addressing it on semi-external memory model is quite new, and only a few works in literature have been proposed for that because of its difficulty. To demonstrate why DFS problem on semi-external memory model is non-trivial, this paper presents a detailed discussion for its main challenge, from the perspective of theoretical analysis. 

Our analysis shows that the inefficiency of traditional semi-external DFS algorithms comes from a chain reaction during the process of iteratively restructuring $T$ to a DFS-Tree of $G$, where $T$ is initialized as a spanning tree of $G$. The chain reaction refers to that the depth-first orders of the nodes on $T$ are changed without predictability in traditional algorithms. Due to the chain reaction, in traditional algorithms, telling whether these edges belong to $T$ or not when $T$ is a DFS-Tree of $G$ is hard. Hence, to prune edges from $G$, many complex operations are devised in traditional algorithms to ensure their efficiencies, which may involve numerous random I/Os, or scanning $G$ many times meaninglessly, etc.

Motivated by that, a novel semi-external DFS algorithm is proposed, named EP-DFS. Firstly, EP-DFS can efficiently and effectively discard certain edges from $G$, even though the chain reaction exists in the process of iteratively restructuring $T$. The reason is that, in each iteration of EP-DFS, only a batch of edges $B$ is loaded into the main memory, where (\romannumeral1) $T$ is restructured into the DFS-Tree of the graph composed by $T$ and $B$. Here, $B$ contains a subset of edges in $G$, and each edge in $B$ is related to a node whose depth-first order on $T$ is in a certain range. It can ensure a large number of nodes in $G$ have fixed depth-first orders on $T$ in the early stage of the restructuring process. Hence, the edges related to such nodes could be discarded effectively. Secondly, to avoid numerous random disk accesses when loading $B$, a lightweight index, named $\mathcal{N}^+\negmedspace$-\textit{index}, is devised in EP-DFS. The key of this index is selecting the right set of edges from $G$, since EP-DFS only loads $B$ from $\mathcal{N}^+\negmedspace$-\textit{index}. The smaller the number of edges contained in $\mathcal{N}^+\negmedspace$-\textit{index}, the faster $B$ can be obtained from $\mathcal{N}^+\negmedspace$-\textit{index}, and the less I/O EP-DFS consumes.  To evaluate the performance of EP-DFS, we conduct extensive experiments on both real and synthetic datasets. Experimental results show that EP-DFS is efficient, and significantly outperforms traditional algorithms on various conditions.

Our main contributions are as follows:

\begin{itemize}	
	\item[-] This paper presents a detailed discussion about why the semi-external DFS problem is non-trivial.
	
	\item[-] This paper proposes a novel semi-external DFS algorithm, named EP-DFS, in which a large number of nodes on $T$ have fixed depth-first orders so that EP-DFS can efficiently prune edges from $G$.
	
	\item[-] This paper devises a lightweight index, named $\mathcal{N}^+\negmedspace$-\textit{index}, with which EP-DFS could access the edges in $G$ fast.
	
	\item[-] Extensive experiments conducted on both real and synthetic datasets confirm that the performance of our EP-DFS considerably surpasses that of the state-of-the-art algorithms. 
	
\end{itemize}

The rest of this paper is organized as follows. We introduce the preliminaries about the problem of semi-external DFS in  Section~\ref{sec:preliminaries}. Section~\ref{sec:related_works} gives a brief review of the existing solutions. The discussion about the main challenge of semi-external DFS problem is presented in Section~\ref{sec:chain_reaction}, to conquer which a naive algorithm is proposed in Section~\ref{sec:overview}. After that, we illustrate our EP-DFS algorithm in Section~\ref{sec:algorithm}. The experimental results are demonstrated in Section~\ref{sec:experiments}. We draw a conclusion in Section~\ref{sec:conclusion}.

\section{Preliminaries} \label{sec:preliminaries}
We study the problem of semi-external DFS on directed disk-resident graphs. Figure~\ref{fig:edge_order_example_T1} depicts an example of this section; Table~\ref{tab:notation} summarizes the frequently-used notations of this paper.

\begin{figure}[h]
	\centerline{\includegraphics[scale = 0.4]{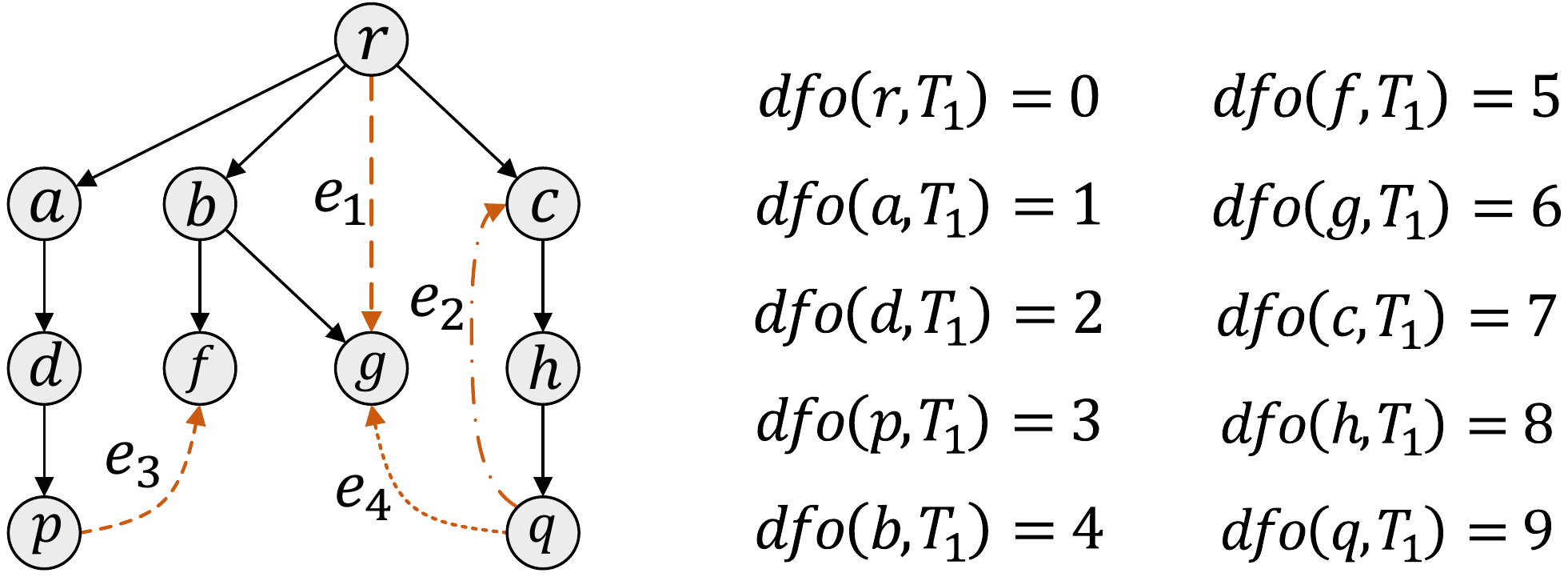}}
	\caption{A schematic view of a spanning tree $T_1$ of a graph $G_1$, where the black solid lines are the tree edges of $T_1$.}
	\label{fig:edge_order_example_T1}
\end{figure}

\begin{definition}\label{def:directedGraph}
	A directed graph $G$ is a tuple $(V,E)$, where 
	
	(\romannumeral1) $V$ and $E$ denote the node set and edge set of $G$, respectively, and $n=|V|$ and $m=|E|$; 
	
	(\romannumeral2) $E\subseteq V\times V$, each entry $e$ in $E$ is a directed edge, denoted by $(u,v)$, and $\forall e=(u,v)\in E$, $u$ is the tail of $e$, and $v$ is the head of $e$.
\end{definition}

For simplicity, we let $V(G)$ denote the node set of $G$, and let $E(G)$ denote the edge set of $G$. For instance, $G_1$ of Figure~\ref{fig:edge_order_example_T1} is a directed graph with $10$ nodes~($|V(G_1)|=10$), and $12$ directed edges~($|E(G_1)|=12$). Note that, we assume $G$ does not contain self-loops and multiple edges, since these edges are irrelevant to the correctness of the constructed DFS-Tree. Thus, each node has less than $n$ out-neighbors.

\begin{table}[t]
	\centering \caption{Frequently-used notations.} \label{tab:notation}
	\small
	\begin{tabular}{c|c|p{8cm}}
		\hline
		\textbf{Notation} & \textbf{Definition} & \textbf{Description}\\%
		\hline
		$G$ & \ref{def:directedGraph}		&	The input graph with $n$ nodes and $m$ edges.\\
		
		$d\negthinspace f\negthinspace o(v,T)$	&\ref{def:lambda}&	The depth-first order of $v$ on tree $T$.\\
		
		$T$	& \ref{def:spanningTree}&The in-memory spanning tree of $G$ rooted at node $r$.\\
		
		$\mathcal{C}(T,T')$  &\ref{def:S(T,T_i)}& A concrete number related to two spanning trees $T$ and $T'$ of $G$.\\
		
		$\Upsilon(T)$ &\ref{def:U(T)}& A concrete number related to $T$.\\
		
		$FNN$ &\ref{def:FNN}& A parameter utilized in EP-DFS.\\
		
		$\mathcal{B}$, $\mathcal{B}^+$&\ref{def:B}, \ref{def:B^+}& The edge batches of $G$, which are subsets of $E$.\\
		
		\hline
	\end{tabular}
\end{table}

\textit{Depth-first search~(DFS).} For a given graph $G$, DFS first visits a node $u$ and \textit{declares} $u$ as \textit{visited}. Then, it picks up an undeclared node $v$, from the out-neighborhood of the most recently visited node $w$. We say DFS \textit{walks} on such edges $(w,v)$. In general, we assume that $G$ has a node $r$ which connects to all the others. If DFS visits $G$ from $r$, then it could visit each node in $G$ once. The total order that DFS declares the nodes as \textit{visited} is known as \textbf{depth-first order}. This order is not unique, as demonstrated in Example~\ref{example:1.1}. For a depth-first order of $G$, there is a corresponding \textbf{DFS-Tree} $T$, composed by the edges that are passed by DFS. The examples of the DFS results, i.e. depth-first order and DFS-Tree, are demonstrated in Example~\ref{example:1.1}.

In the rest of this paper, notation $d\negthinspace f\negthinspace o(u,T)$ is utilized, as defined below.

\begin{definition}
	\label{def:lambda}
	$d\negthinspace f\negthinspace o(v,T)=k$ represents the depth-first order of $v$ on $T$ is $k$, where $T$ is a tree, and $v$ is a node of $T$.
\end{definition}

As an example, for each node $v$ in the graph $G_1$ depicted in Figure~\ref{fig:edge_order_example_T1}, the value of $d\negthinspace f\negthinspace o(v,T_1)$ is presented on the right side of  Figure~\ref{fig:edge_order_example_T1}.

In order to make sure that a DFS-Tree of $G$ only corresponds to a total depth-first order of $G$, we say that the spanning trees $T$ of $G$ are \textit{ordered} trees. 

\begin{definition}
	\label{def:spanningTree}
	A spanning tree $T$ of $G$ is an ordered tree, where the out-neighbors of each node $u$ on $T$ is arranged from left to right, by the following rule: $\forall u, v\in V(T)$, $u$ is the \textbf{left brother} of $v$, or $v$ is the \textbf{right brother} of $u$, iff, (\romannumeral1) $u$ and $v$ have the same parent node on $T$ and (\romannumeral2) $d\negthinspace f\negthinspace o(u,T)<d\negthinspace f\negthinspace o(v,T)$.
\end{definition}

That is, there is an order among all the children $u_1,u_2,\dots,u_k$ of each non-leaf node $u$ of $T$. If $d\negthinspace f\negthinspace o(u_1,T)<d\negthinspace f\negthinspace o(u_2,T)<\dotsb<d\negthinspace f\negthinspace o(u_k,T)$, then $u_1$ is the leftmost child of node $u$ in $T$ while $u_k$ is the rightmost child of node $u$ in $T$. The running examples, in Figures~\ref{fig:introductionFigure}-\ref{fig:rearrangement_example}, are drawn, according to the above order.

In general, when a semi-external algorithm uses an in-memory DFS algorithm to obtain a DFS-Tree of a graph $G_{T\cup\mathcal{E}}$ composed by $T$ and $\mathcal{E}$~($\mathcal{E}\subseteq E(G)$), it should follow Stipulation~\ref{sti:stipulation}. 
\begin{stipulation}
	\label{sti:stipulation}
	For each node $u$ in $G_{T\cup\mathcal{E}}$, after visiting $u$, (\romannumeral1) if a node in the out-neighborhood of $u$ on $T$ has not been visited, then DFS first visits the leftmost child of $u$ on $T$ which has not been visited; (\romannumeral2) otherwise, DFS picks a node $v$ from $\mathcal{E}$, where there is an edge $(u,v)$ in $\mathcal{E}$, and $v$ has not been visited.
\end{stipulation}

For instance, as shown in Figure~\ref{fig:introductionFigure}, $G_0$ is a graph composed by $T_a$ and an edge list $\mathcal{E}=\{(d,f),(g,q),(b,c)\}$. $T_b$ is the DFS-Tree of $G_0$ obtained under Stipulation~\ref{sti:stipulation}, while $T_c$ is not. Note that, in Figure~\ref{fig:introductionFigure}, the numbers related to the nodes are their depth-first orders on $T_a$, $T_b$ or $T_c$.

Furthermore, given a spanning tree $T$ of $G$, the edges in $G$ could be classified, as illustrated in Definition~\ref{def:edge_types}. 

\begin{definition}
	\label{def:edge_types}
	If an edge $e$ in $T$, $e$ is a tree edge, otherwise, $e$ is a non-tree edge.  For a non-tree edge $e=(u,v)$,  
	
	(\romannumeral1) $e$ is a forward edge, iff, the LCA~(Least Common Ancestor) of $u$ and $v$ on $T$ is $u$ and $(u,v)\notin E(T)$; 
	
	(\romannumeral2) $e$ is a backward edge, iff, the LCA of $u$ and $v$ on $T$ is $v$;
	
	(\romannumeral3) $e$ is a forward cross edge, iff, $d\negthinspace f\negthinspace o(u,T)<d\negthinspace f\negthinspace o(v,T)$ and the LCA of $u$ and $v$ on $T$ is not $u$; 
	
	(\romannumeral4) $e$ is a backward cross edge, iff, $d\negthinspace f\negthinspace o(u,T)>d\negthinspace f\negthinspace o(v,T)$ and the LCA of $u$ and $v$ on $T$ is not $v$.
\end{definition}
For instance, in Figure~\ref{fig:edge_order_example_T1}, $(a,d)$ is a tree edge, $e_1$ is a forward edge, $e_2$ is a backward edge, $e_3$ is a forward cross edge, and $e_4$ is a backward cross edge, as classified by $T_1$ of $G_1$.

\textbf{Problem statement.} We study the semi-external DFS problem with the restriction that \textit{at most $2n$ edges could be hold in the main memory}. With such restriction, the semi-external DFS problem is much more complicated and valuable, in that: (\romannumeral1)  the greater the number of edges that can be maintained in memory, the better the performance of a semi-external DFS algorithm; (\romannumeral2) the efficient state-of-the-art algorithms~(EB-DFS and DC-DFS), require to maintain at least $2n$ edges in the main memory, which is discussed in Section~\ref{sec:related_works}.

\section{Existing solutions}\label{sec:related_works}
The problem of semi-external DFS is originated in \cite{DBLP:conf/spaa/SibeynAM02}, which assumes that only $c\times n$ elements could be loaded into memory~($c$ is a small constant). That assumption is important. For one thing, with the increases of the sizes of current graph databases, loading a large-scale graph into the main memory becomes harder and harder. For another, to process a large-scale graph, external DFS algorithms are extremely inefficient, as discussed in Section~\ref{sec:introduction}. According to the fact that ``\textit{Given a spanning Tree $T$ of $G$, $T$ is a DFS-Tree of $G$, iff, $G$ has no forward cross edge as classified by $T$}''~\cite{DBLP:conf/spaa/SibeynAM02,DBLP:conf/sigmod/ZhangYQS15}, three main algorithms are proposed, i.e. \textit{EE-DFS}~(EdgeByEdge), \textit{EB-DFS}~(EdgeByBatch) and \textit{DC-DFS}~(DivideConquerDFS). 

\textit{EE-DFS.} It is proposed in \cite{DBLP:conf/spaa/SibeynAM02}, which computes the DFS-Tree of $G$ by iterations. In each iteration, it sequentially scans $G$ a time. Assuming $w$ is the parent of $v$ on $T$. In each iteration, if an edge $e=(u,v)$ is a forward cross edge as classified by $T$, then EE-DFS lets $v$ be the rightmost child $u$ on $T$, and removes edge $(w,v)$ from $T$. EE-DFS ends in $k$th iteration, if, $\forall e\in E$, $e$ is not a forward cross edge as classified by $T$, during the $k$th scanning process. EE-DFS is inefficient, whose major drawback is that, for each scanned edge $e=(u,v)$, it needs to compute the LCA of $u$ and $v$ on $T$, in order to determine whether $e$ is a forward cross edge as classified by $T$. As $T$ is changed dynamically~(no preprocess is allowed), the time complexity for answering LCA queries of each edge in $E$ is too high to be afforded\footnote{https://algotree.org/algorithms/lowest\_common\_ancestor/}~\cite{DBLP:conf/spaa/SibeynAM02}.

\textit{EB-DFS.} Different from EE-DFS, EB-DFS~\cite{DBLP:conf/spaa/SibeynAM02} processes the edges in $G$ by batch to avoid the operation of computing the edge types. It calls a sequential scan of $G$ a round. In each round, for each $n$ edges $E_{n}$, EB-DFS replaces $T$ with the DFS-Tree of graph $G'$, where $G'$ is composed of $T$ and $E_{n}$. In addition, a function \textit{Reduction-Rearrangement} is utilized in EB-DFS, which requires to scan the entire input graph one more time, in order to classify the nodes on $T$ into two kinds: \textit{passive} and \textit{non-passive}. The out-going edges of the passive nodes could be passed in the later round. For each non-passive node $v$ with $k$ children $u_1,\dots,u_k$, they rank the children of $v$ according to their weights. Here, the weight of a node $u_i$ in $T$ is the size of the biggest subtree of $T$ rooted at $u_i$.

EB-DFS requires maintaining $2n$ edges in memory by default, for ensuring its efficiency. Even though it is possible to implement EB-DFS by letting it only maintaining $1.5n$ edges or even fewer edges in the main memory, its performance decreases greatly. When it only loads $n+1$ edge in the main memory~(the smallest number), its performance will be worse than that of EE-DFS, as discussed in \cite{DBLP:conf/spaa/SibeynAM02}. 

In EB-DFS, each node in $T$ has three attributes~\cite{DBLP:conf/spaa/SibeynAM02} required by its procedure Reduction-Rearrangement. As far as we know, there is no technique proposed for packing all the node attributes of a semi-external algorithm. That is because, the values of the node attributes change constantly, and it is  difficult to estimate them in advance. To ensure efficiency, semi-external algorithms prefer to apply for a contiguous memory space to maintain their node attributes, instead of packing them. 

EB-DFS, in the worst case, needs $n$ times graph scanning, when $G$ is a strongly connected graph with $n$ nodes and $2n$ edges positioned as a cycle. To address that, two additional functions are developed, which, unfortunately, \textit{is quite expensive, and should be performed only when necessary},  according to \cite{DBLP:conf/spaa/SibeynAM02}.

\textit{DC-DFS.} That approach~\cite{DBLP:conf/sigmod/ZhangYQS15} is a divide-and-conquer algorithm. DC-DFS aims to iteratively divide $G$ into several small graphs equally and correctly, with two division algorithms \textit{Divide-Star} and \textit{Divide-TD} which both utilize the data structure \textit{S-Graph} $\Sigma$. The former constructs $\Sigma$ based on $T$, where (\romannumeral1) initially $\Sigma$ contains the root $r$ of $T$ and all the children of $r$ in $T$; (\romannumeral2) scanning all the edges of $G$ from disk, and, $\forall e =(u,v)\in E$, if the LCA of $u$ and $v$ on $T$ is $r$~($r\neq u$ and $r\neq v$), computing the \textit{S-edge} $e'$ of $e$ and adding $e'$ into $\Sigma$; (\romannumeral3) restructuring $\Sigma$ if $\Sigma$ is not a \textit{directed acyclic graph} by a \textit{node contraction operation} \cite{DBLP:conf/sigmod/ZhangYQS15}. Then the $G$ is divided based on $\Sigma$. The latter is similar to Divide-Star, except that it initializes $\Sigma$ by a \textit{cut-tree} $T_c$ of $T$, where (\romannumeral1) the complete graph composed by the nodes in $T_c$ can fit into the main memory; (\romannumeral2) if a node $u$ in $T_c$, then $u$ is also in $T$; (\romannumeral3) if the children of $u$ in $T_c$ are $v_1,v_2,\dots,v_k$, then the children of $u$ in $T$ are $v_1,v_2,\dots,v_k$. 

DC-DFS involves numerous random I/O accesses when the number of the divided subgraphs is huge, comparing to EB-DFS which only accesses $G$ sequentially.  The reason is as follows. $G$ is disk-resident e.g. in a file $F_G$. The division process of $G$, in other words, could be treated as sequentially splitting and restoring $F_G$ into several small files, where the elements in each small file are not necessarily continuous in $F_G$. In addition, in such case, the scales of the most divided subgraphs are relatively small, as a DFS-Tree tends to a left deep tree according to Stipulation~\ref{sti:stipulation}. Hence, the elements of a large-scale divided subgraph are hard to be stored continuously on disk, according to~\cite{ComputerSystems3}.

DC-DFS needs at least to maintain $2n$ edges of $G$ in memory for the in-memory spanning tree $T$ and the S-Graph $\Sigma$, when the recursion depth of DC-DFS is relatively large. In addition, it requires at least $5n$ space to support its division process, as discussed below. DC-DFS highly relies on certain in-memory graph algorithms. Especially, it requires to parse all the edge types, in each division process. Hence, it is inevitable to compute the LCA of $u$ and $v$ on $T$ for each edge $(u,v)$ in $G$ at least once. With the consideration that the structure of $T$ is static during the parsing process of the edge types, and $m$ is often huge, \textit{Farach-Colton and Bender} algorithm~\cite{DBLP:journals/jal/BenderFPSS05} is the only option\footnote{Many thanks for the website ``https://cp-algorithms.com/''.}. Based on the data structure in \cite{RMQDataStructure}, \textit{Farach-Colton and Bender} requires at least $5n$ memory space.

\textbf{Remark.} The efficient traditional semi-external DFS algorithms, i.e. EB-DFS and DC-DFS, maintain at least $2n$ edges in the main memory. Besides, certain CPU calculations that they request are expensive, since (\romannumeral1) EB-DFS proposes two functions to address the problem caused by the worst case, but such functions are too expensive to be utilized in practical applications~(\textit{expensive CPU calculation}); (\romannumeral2) DC-DFS relies on a certain sophisticated algorithm, causing it requires $2n$ more additional memory space than EB-DFS~(\textit{expensive memory space consumption}). In addition, DC-DFS might be related to numerous random I/Os when the input graph is huge~(\textit{expensive disk access}).

Motivated by the limitations of the state-of-the-art approaches, we aim to address the DFS problem on semi-external environment with simpler CPU calculation, lower memory space consumption and fewer random disk accesses in practice. To achieve that goal, we put a lot of efforts to find out why the semi-external DFS problem is difficulty, as discussed in Section~\ref{sec:chain_reaction}.


\section{Problem Analysis: the main challenge ``chain reaction''}\label{sec:chain_reaction}

A semi-external DFS algorithm can easily find a total depth-first order or a DFS-Tree for $G$, when $G$ has less than $n$ edges. That is because, semi-external algorithms assume that the main memory at least has the ability to maintain a spanning tree $T$ of $G$. When $G$ has less than $n$ edges, all its edges could be loaded into the main memory. However, for a real-word graph $G$, $m$ is normally dozens of times or even hundreds of times larger than $n$. When $\frac{m}{n}$ is quite large, it is hard to compute the DFS results on semi-external memory model. Because, in this case, a semi-external algorithm can only maintain a portion of edges of $G$ in the main memory simultaneously. Some edges that are currently resided in the main memory must be removed and replaced by the new scanned edges, as discussed in Section~\ref{sec:related_works}.  

The difficulty of the semi-external DFS problem is caused by the total depth-first order of the nodes on $T$ is changed without predictability, when $T$ is replaced to the DFS-Tree of a subgraph of $G$, even under Stipulation~\ref{sti:stipulation}. That is as $T$ is changed constantly, the depth-first orders of the nodes on $T$ are changed correspondingly, so that the edge types as classified by $T$ are changed constantly. 

To present an efficient semi-external DFS algorithm, in this section, we present an in-depth study for how the edge types change when $T$ is replaced. Specifically, we assume that $G$ is divided into a series of edge batches, which are $B_1, B_2,\dots, B_i,\dots, B_k$, in that $G$ is stored on disk as an edge list. $T_i$ represents the DFS-Tree of the graph composed by $T$ and $B_i$. Without loss of generality, we first assume that $B_i=\{e\}$, where (\romannumeral1) $e$ happens to be a forward cross edge of $G$ as classified by $T$, and (\romannumeral2) $e=(x,u)$ or $e=(x,v)$. Nine groups of examples are drawn to show what happens to the types of the other edges when we restructure $T$ to $T_i$, as depicted in Figure~\ref{fig:chain_reaction}. The solid lines of Figure~\ref{fig:chain_reaction}(a), Figure~\ref{fig:chain_reaction}(b), $\dots$, and Figure~\ref{fig:chain_reaction}(i) are the tree edges of $T$, while edge $(u,v)$ is an edge of $G$. Correspondingly, the solid lines in Figure~\ref{fig:chain_reaction}(A), Figure~\ref{fig:chain_reaction}(B), $\dots$, and Figure~\ref{fig:chain_reaction}(I) are the tree edges of $T_i$.

Besides, one should know that the positional relationship between two nodes $u$ and $x$ on $T$ has only two possibilities, where (\romannumeral1) assuming $u$, $v$ and $x$ are the nodes of $T$; (\romannumeral2) $(x,v)$ is a forward cross edge as classified by $T$; (\romannumeral3) $u$ is an ancestor of $v$ on $T$. One is, if $d\negthinspace f\negthinspace o(x,T)$ is larger than $d\negthinspace f\negthinspace o(u,T)$, then $u$ is also an ancestor of $x$ on $T$. Another is, if $d\negthinspace f\negthinspace o(x,T)$ is smaller than $d\negthinspace f\negthinspace o(u,T)$, then $u$ and $x$ belongs to two different subtrees of $T$. The above discussion is formally defined the following Lemma~\ref{lemma:chain_reaction_lemma}. 

\begin{lemma}\label{lemma:chain_reaction_lemma}
Given nodes $u,v,x$ in $T$, $u$ is an ancestor of $v$, and $(x,v)$ is a forward cross edge as classified by $T$. If $d\negthinspace f\negthinspace o(x,T)>d\negthinspace f\negthinspace o(u,T)$, then the  LCA of $u$ and $x$ on $T$ is $u$; if $d\negthinspace f\negthinspace o(x,T)<d\negthinspace f\negthinspace o(u,T)$, then the LCA of $u$ and $x$ on $T$ is neither $u$ nor $x$.	
\end{lemma}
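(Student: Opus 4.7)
The plan is to prove the lemma by a direct case analysis on the comparison between $d\negthinspace f\negthinspace o(x,T)$ and $d\negthinspace f\negthinspace o(u,T)$, relying on two basic facts. First, for any node $w$ in an ordered tree $T$, the descendants of $w$ are exactly the nodes whose depth-first orders fall into the contiguous interval $[d\negthinspace f\negthinspace o(w,T),\, d\negthinspace f\negthinspace o(w,T)+s(w)-1]$, where $s(w)$ is the size of the subtree of $T$ rooted at $w$; this is the standard property of DFS numbering. Second, since $(x,v)$ is a forward cross edge as classified by $T$, Definition~\ref{def:edge_types}(iii) gives $d\negthinspace f\negthinspace o(x,T)<d\negthinspace f\negthinspace o(v,T)$ and the LCA of $x$ and $v$ on $T$ is not $x$ (so in particular $x$ is not an ancestor of $v$). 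Also, because $u$ is an ancestor of $v$, we have $d\negthinspace f\negthinspace o(u,T)\le d\negthinspace f\negthinspace o(v,T)$ and every descendant of $u$, including $v$, lies within the dfo-interval of $u$.

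For the first case, assume $d\negthinspace f\negthinspace o(x,T)>d\negthinspace f\negthinspace o(u,T)$. Combining this with $d\negthinspace f\negthinspace o(x,T)<d\negthinspace f\negthinspace o(v,T)$ and the fact that $v$'s dfo is inside the interval $[d\negthinspace f\negthinspace o(u,T),\, d\negthinspace f\negthinspace o(u,T)+s(u)-1]$, I would show that $d\negthinspace f\negthinspace o(x,T)$ also lies in this interval, hence $x$ is a descendant of $u$. Then $u$ is an ancestor of both $u$ and $x$, and no proper descendant of $u$ can be, so the LCA of $u$ and $x$ is $u$.

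For the second case, assume $d\negthinspace f\negthinspace o(x,T)<d\negthinspace f\negthinspace o(u,T)$. I would rule out the two forbidden LCA candidates separately. First, $\mathrm{LCA}(u,x)\ne u$: if $u$ were an ancestor of $x$, then by the descendant characterization we would have $d\negthinspace f\negthinspace o(u,T)\le d\negthinspace f\negthinspace o(x,T)$, contradicting the case hypothesis. Second, $\mathrm{LCA}(u,x)\ne x$: if $x$ were an ancestor of $u$, then since $u$ is an ancestor of $v$, $x$ would also be an ancestor of $v$, which would force $\mathrm{LCA}(x,v)=x$ and contradict the forward-cross classification of $(x,v)$. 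These two exclusions, together with the fact that $\mathrm{LCA}(u,x)\in\{u,x\}$ iff one of $u,x$ is an ancestor of the other, give the claim.

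The argument is structurally simple; the only step I expect to need care is the first case, where I rely on the subtree-interval property rather than on a purely topological walk from $v$ up to the root. I would make sure to state and, if the paper has not done so, briefly justify this numbering property (or cite a standard reference) so that the step ``$d\negthinspace f\negthinspace o(x,T)$ lies in the dfo-interval of $u$ implies $x$ is a descendant of $u$'' is fully rigorous under Stipulation~\ref{sti:stipulation}. Beyond that, nothing deeper than the definitions is required.
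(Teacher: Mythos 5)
Your proposal is correct and follows essentially the same route as the paper: a two-case analysis where the first case shows $x$ must be a descendant of $u$ (the paper enumerates the possible positions of a non-descendant with larger depth-first order to derive $d\negthinspace f\negthinspace o(x,T)>d\negthinspace f\negthinspace o(v,T)$, which is just the contrapositive of your subtree-interval argument), and the second case rules out $u$ and $x$ as LCA exactly as you do. The only difference is cosmetic — you package the key DFS-numbering fact as the contiguous-interval property rather than as a positional enumeration — so no further comparison is needed.
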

\begin{proof}
$(x,v)$ is a forward cross edge as classified by $T$. That is $d\negthinspace f\negthinspace o(x,T)<d\negthinspace f\negthinspace o(v,T)$ and the LCA of $x$ and $v$ on $T$ is neither $x$ nor $v$. On the one hand, if $d\negthinspace f\negthinspace o(x,T)>d\negthinspace f\negthinspace o(u,T)$, assuming $x$ is not a descendant of $u$, i.e. the LCA of $u$ and $x$ on $T$ is not $u$. As $d\negthinspace f\negthinspace o(x,T)>d\negthinspace f\negthinspace o(u,T)$, $x$ can only be a right brother of $u$, a descendant of $u$'s right brothers or a descendant of a $u$'s ancestor's right brother. Thus, $d\negthinspace f\negthinspace o(x,T)>d\negthinspace f\negthinspace o(v,T)$, which contradicts the premise. One the other hand, if $d\negthinspace f\negthinspace o(x,T)<d\negthinspace f\negthinspace o(u,T)$, assuming the LCA of $u$ and $x$ on $T$ is $x$. Then, the LCA of $x$ and $v$ on $T$ is $x$. Hence, the LCA of $u$ and $x$ on $T$ is neither $u$ nor $x$, as, when $d\negthinspace f\negthinspace o(x,T)<d\negthinspace f\negthinspace o(u,T)$, the LCA of $u$ and $x$ on $T$ is not $u$.
\end{proof}

\begin{figure}[t]
	\centerline{\includegraphics[scale = 0.85]{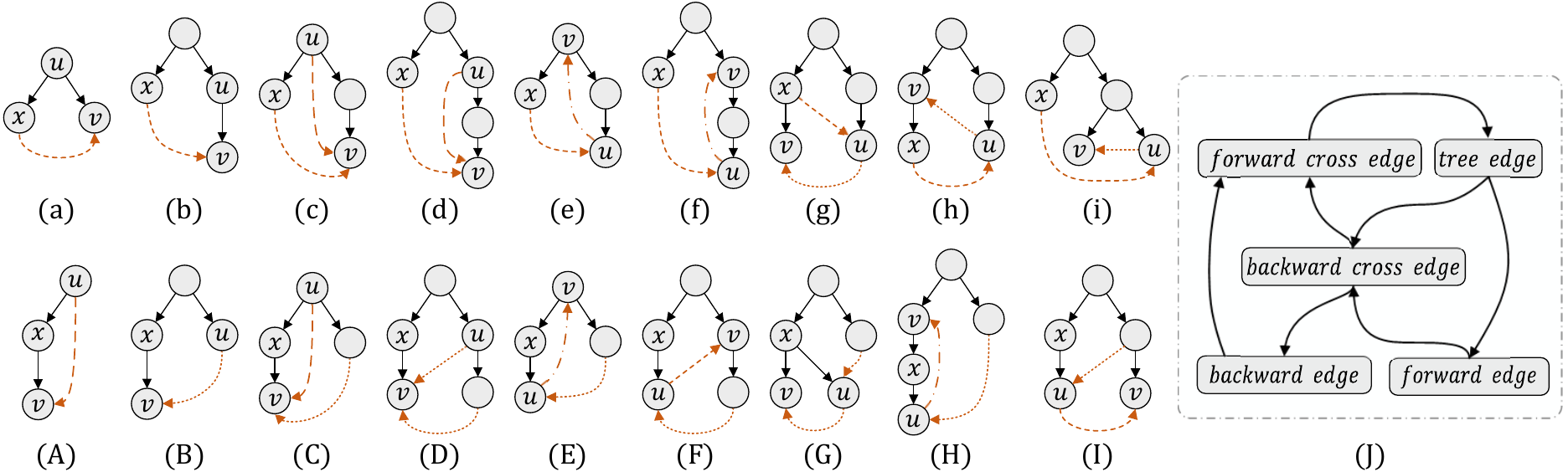}}
	\caption{The demonstration of the chain reaction of modifying the in-memory spanning tree $T$ with a forward edge $(x,v)$ or $(x,u)$. It is worth noting that all the self-loops are omitted in subfigure~(J).}
	\label{fig:chain_reaction}
\end{figure}

Firstly, we assume that $(u,v)$ is a tree edge of $T$. Then, when $B_i=\{e=(x,u)\}$, a semi-external algorithm restructures $T$ to $T_i$, by (1) letting $(x,u)$ be a tree edge of $T$; (2) removing $(w,x)$ from $T$, where node $w$ is the parent of $u$ on $T$. Thus, $(u,v)$ is still a tree edge of $T_i$. However, when $B_i=\{e=(x,v)\}$, $(u,v)$ is either a forward edge or a backward cross edge as classified by $T_i$. For one thing, when $d\negthinspace f\negthinspace o(x,T)$ is larger than $d\negthinspace f\negthinspace o(u,T)$, $(u,v)$ is a forward edge as classified by $T_i$. The reason is, according to Lemma~\ref{lemma:chain_reaction_lemma}, in this case, $u$ must be an ancestor of $x$ on $T$, as shown in Figure~\ref{fig:chain_reaction}(a). Thus, based on Stipulation~\ref{sti:stipulation}, $(x,v)$ should be a tree edge of $T$, and edge $(u,v)$ should be removed from $T$. That is $(u,v)$ is a forward edge as classified by $T_i$, as shown in Figure~\ref{fig:chain_reaction}(A). For another, when $d\negthinspace f\negthinspace o(x,T)$ is smaller than $d\negthinspace f\negthinspace o(u,T)$, $(u,v)$ is a backward cross edge as classified by $T_i$. That is because, according to Lemma~\ref{lemma:chain_reaction_lemma}, if $d\negthinspace f\negthinspace o(x,T)<d\negthinspace f\negthinspace o(u,T)$, node $u$ and node $x$ must belong to different subtrees of $T$, as shown in Figure~\ref{fig:chain_reaction}(b). Hence, a semi-external DFS algorithm should let $(x,v)$ be a tree edge of $T$, and remove $(u,v)$ from $T$, in terms of Stipulation~\ref{sti:stipulation}. That is $(u,v)$ is a backward cross edge as classified by $T_i$, as shown in Figure~\ref{fig:chain_reaction}(B).

After that, we assume $(u,v)$ is a forward edge as classified by $T$. A semi-external algorithm may let $(u,v)$ be a forward edge or a backward cross edge as classified by $T_i$. For one thing, if $B_i=\{e=(x,u)\}$, then after restructuring $T$ to $T_i$, $(u,v)$ is a forward edge as classified by $T_i$, which is obvious. In addition, when $B_i=\{e=(x,v)\}$ and $d\negthinspace f\negthinspace o(x,T)$ is larger than $d\negthinspace f\negthinspace o(v,T)$, then $(u,v)$ is also a forward edge as classified by $T_i$. Because according to Lemma~\ref{lemma:chain_reaction_lemma}, $u$ must be an ancestor of $x$ on $T$, as shown in Figure~\ref{fig:chain_reaction}(c) and Figure~\ref{fig:chain_reaction}(C). For another, if $B_i=\{e=(x,v)\}$ and $d\negthinspace f\negthinspace o(x,T)$ is smaller than $d\negthinspace f\negthinspace o(v,T)$, then $(u,v)$ is a backward cross edge as classified by $T_i$, in that according to Lemma~\ref{lemma:chain_reaction_lemma}, $u$ and $x$ must belong to different subtrees of $T$. Figure~\ref{fig:chain_reaction}(d) and Figure~\ref{fig:chain_reaction}(D) is a demonstration for this  case.

Moreover, if $(u,v)$ is a backward edge as classified by $T$, then $(u,v)$ is either a backward edge or forward cross edge as classified by $T_i$, which is discussed as follows. On the one hand, if $B_i=\{e=(x,v)\}$, $(u,v)$ must be a backward edge as classified by $T_i$, because restructuring $T$ to $T_i$ does not change the descendants of $v$ on the tree. On the other hand, when $B_i=\{e=(x,u)\}$, there are two cases. One is when $d\negthinspace f\negthinspace o(x,T)$ is larger than $d\negthinspace f\negthinspace o(v,T)$, according to Lemma~\ref{lemma:chain_reaction_lemma}, $v$ must be an ancestor of $x$ on $T$, as shown in Figure~\ref{fig:chain_reaction}(e). Thus, according to  Figure~\ref{fig:chain_reaction}(E), $(u,v)$ is a backward edge, as classified by $T_i$. Another is that if $d\negthinspace f\negthinspace o(x,T)$ is smaller than $d\negthinspace f\negthinspace o(v,T)$, then according to Lemma~\ref{lemma:chain_reaction_lemma}, $x$ and $v$ are in different subtrees of $T$, as shown in Figure~\ref{fig:chain_reaction}(f). Then, $(u,v)$ is a forward cross edge as classified by $T_i$, as demonstrated in Figure~\ref{fig:chain_reaction}(F).

Furthermore, edge $(u,v)$ is given as a backward cross edge. $(u,v)$ can be any type of the non-tree edges, except the type of forward edges, as classified by $T_i$. The reason is discussed below. When $(u,v)$ is a backward cross edge as classified by $T$, then $d\negthinspace f\negthinspace o(u,T)$ is smaller than $d\negthinspace f\negthinspace o(v,T)$, and nodes $u$ and $v$ belong to two different subtrees of $T$. On the one hand, assuming $B_i=\{e=(x,v)\}$, then $d\negthinspace f\negthinspace o(v,T_i)$ must be smaller than $d\negthinspace f\negthinspace o(v,T)$, and $d\negthinspace f\negthinspace o(u,T_i)$ is equal to $d\negthinspace f\negthinspace o(u,T)$. That is, $d\negthinspace f\negthinspace o(v,T_i)$ is smaller than $d\negthinspace f\negthinspace o(u,T_i)$. Thus, $(u,v)$ is not a forward edge as classified by $T_i$. On the other hand, when $B_i=\{e=(x,u)\}$, then $u$ cannot be an ancestor of $v$ on $T_i$, so that $(u,v)$ is not a forward cross edge as classified by $T_i$. Figure~\ref{fig:chain_reaction}(g)-Figure~\ref{fig:chain_reaction}(G), Figure~\ref{fig:chain_reaction}(h)-Figure~\ref{fig:chain_reaction}(H), and Figure~\ref{fig:chain_reaction}(i)-Figure~\ref{fig:chain_reaction}(I) are three groups of instances. Edge $(u,v)$ is a backward cross edge as classified by $T$ in Figure~\ref{fig:chain_reaction}(g), Figure~\ref{fig:chain_reaction}(h) and Figure~\ref{fig:chain_reaction}(i). Edge $(u,v)$ is a backward cross edge, a backward edge, and a forward cross edge as classified by $T_i$, as demonstrated in Figure~\ref{fig:chain_reaction}(G), Figure~\ref{fig:chain_reaction}(H) and Figure~\ref{fig:chain_reaction}(I), respectively.

Therefore, when $B_i$ has a forward cross edge as classified by $T$, there may be a set of edges in $G$ whose edge types as classified by $T$ have been changed, after $T$ is restructured to $T_i$, the DFS-Tree of the graph composed by $T$ and $B_i$. Since the distribution of the nodes and edges in $G$ is unknown, and the order of the edges stored on disk is also unknown, after restructuring $T$, it is impossible to predict which edges in $G$ will have a changed edge type, and what edge type will they become. Not to mention, it is very likely that $B_i$ contains more than one forward cross edge, when more than one edge is allowed to be loaded into memory. We notice that there is a ``chain reaction'' in the process of restructuring $T$, which refers to the cycles of Figure~\ref{fig:chain_reaction}(J). Here, Figure~\ref{fig:chain_reaction}(J) is summarized based on the above discussion, in which the self-loops are omitted. 

With the existence of the chain reaction, it is hard to determine whether an edge will be used in the future or not. Thus, traditional algorithms have to devise a series of complex operations to prune edges from $G$ or be related to numerous random I/O accesses~(discussed in Section~\ref{sec:related_works}). For example, in order to select the nodes whose out-neighborhoods should be rearranged, EB-DFS has to scan $G$ one more time at the end of each iteration. Besides, to improve the performance of EB-DFS when $G$ may contain a large complex cycle, two additional functions are devised for EB-DFS, which are quite expensive and can only be used when necessary. DC-DFS is presented for computing the DFS results to reduce the high time and I/O costs of EB-DFS. Unfortunately, since real-world graphs are complex, DC-DFS may have to divide the input graph $G$ into a large set of small subgraphs, so that it involves numerous random disk accesses.

\section{A naive algorithm}\label{sec:overview}

As the chain reaction exists, given a complex network $G$, a semi-external DFS algorithm may have to scan $G$ $n$ times, or be related to numerous random disk accesses, as discussed in Section~\ref{sec:related_works}. This cannot be afforded in practice. In order to reduce the effect of the chain reaction on the performance of restructuring $T$ to a DFS-Tree of $G$, an in-depth study is conducted, in which we have a simple but very important observation: 
\begin{itemize}
	\item[-] Given an edge batch $B_i=\{(u,v)\}$, assuming $T_i$ is the DFS-Tree of the graph composed by $T$ and $B_i$ under Stipulation~\ref{sti:stipulation}. Then, $\forall x\in V(T)$, $d\negthinspace f\negthinspace o(x,T)$ is equal to $d\negthinspace f\negthinspace o(x,T_i)$, if $d\negthinspace f\negthinspace o(x,T)$ is smaller than $d\negthinspace f\negthinspace o(v,T_i)$.   
\end{itemize}
In other words, the edge types of $e$ as classified by $T$ and $T_i$ are the same, when $e=(s,t)$ and the depth-first orders of $s$ and $t$ on $T$ are smaller than that of $v$ on $T_i$. For example, when $T$ is in the form of $\mathcal{T}_0$~(Figure~\ref{fig:basic_idea_example_G}(a)), given an edge batch $B_i=\{e_2=(b,c)\}$, a semi-external algorithm will restructure $T$ to $T_i$, where $T_i$ is in the form of $\mathcal{T}_1$~(Figure~\ref{fig:basic_idea_example_G}(b)). Then, the depth-first order of $c$ on $T_i$~($\mathcal{T}_1$) is $7$, and if a node has a depth-first order on $T$ that is smaller than $7$, then it has the same depth-first orders on $T$ and on $T_i$.

\begin{figure}[t]
	
	\centering
	\renewcommand{\thesubfigure}{}
	\subfigure[(a) $\mathcal{T}_0$]{
		\includegraphics[scale = 0.35]{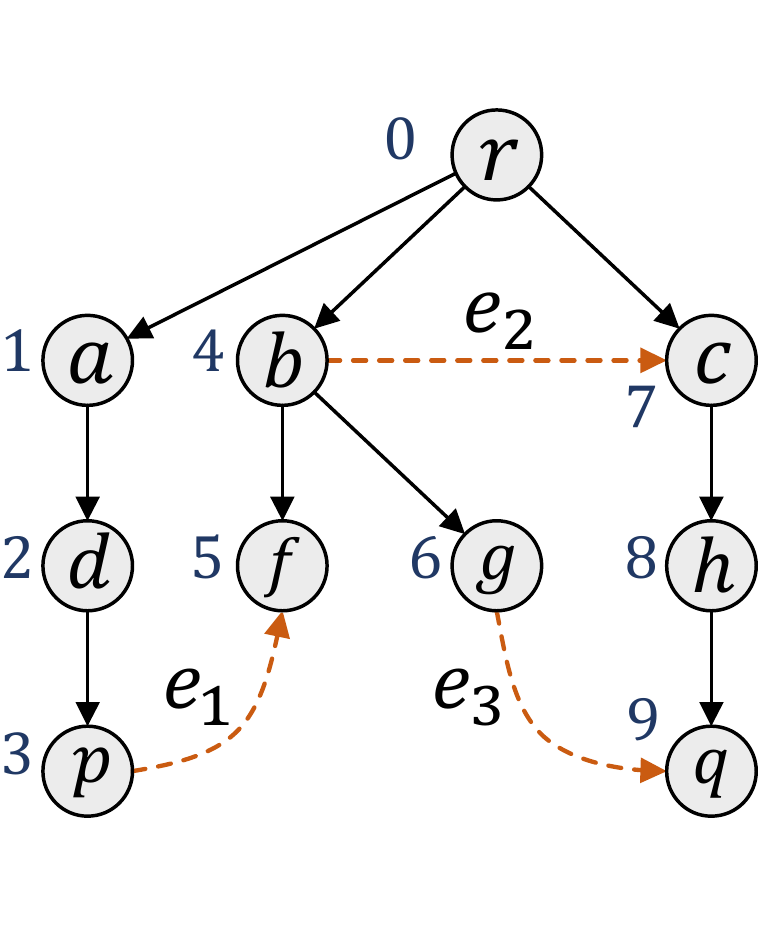}}
	\quad
	\subfigure[(b) $\mathcal{T}_1$]{
		\includegraphics[scale = 0.35]{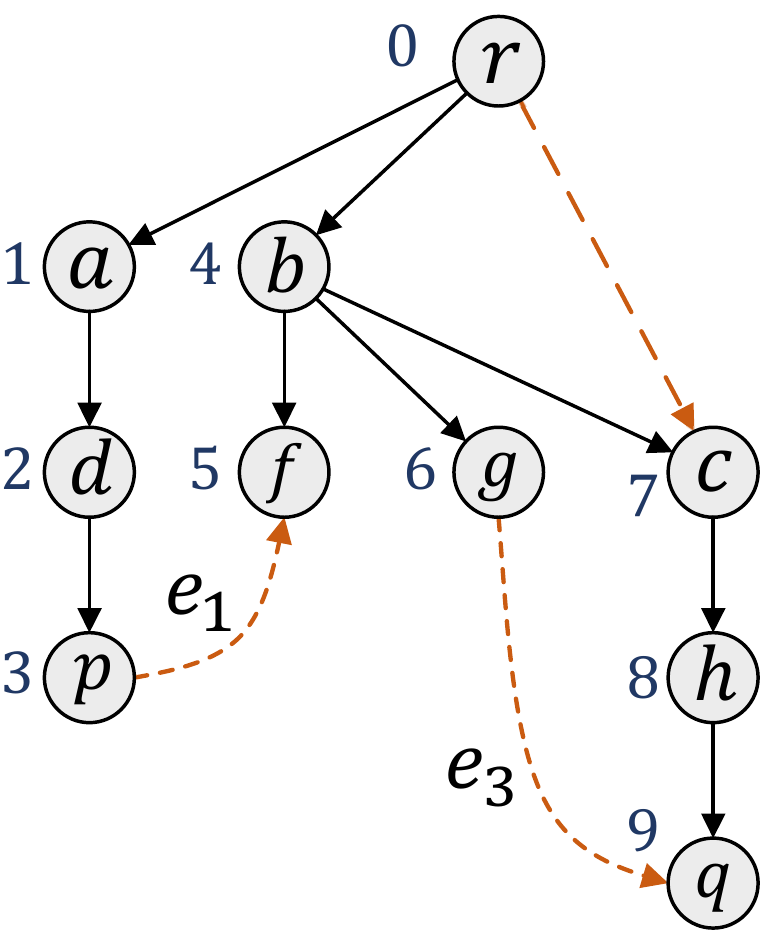}}
	\quad
	\subfigure[(c) $\mathcal{T}_2$]{
		\includegraphics[scale = 0.35]{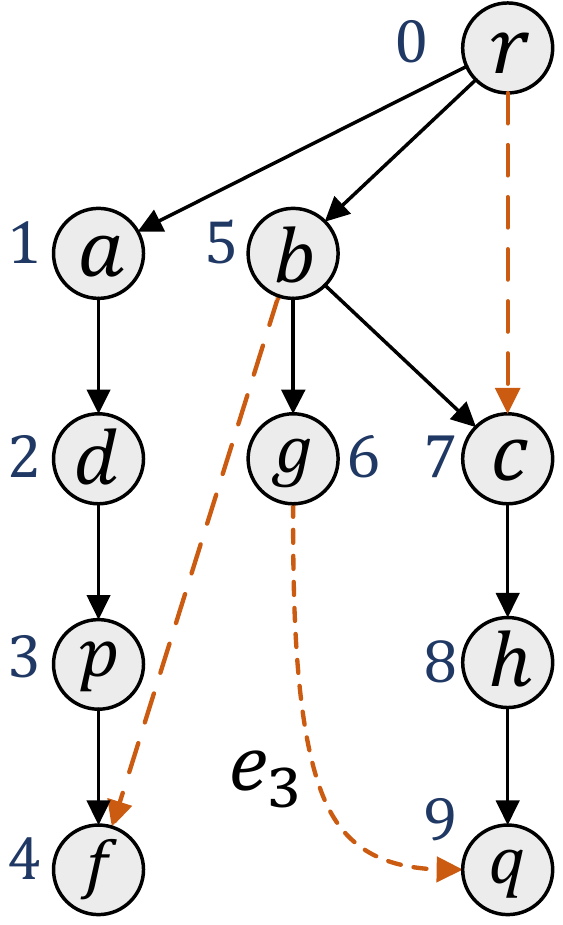}}
	\quad
	\subfigure[(d) $\mathcal{T}_3$]{
		\includegraphics[scale = 0.35]{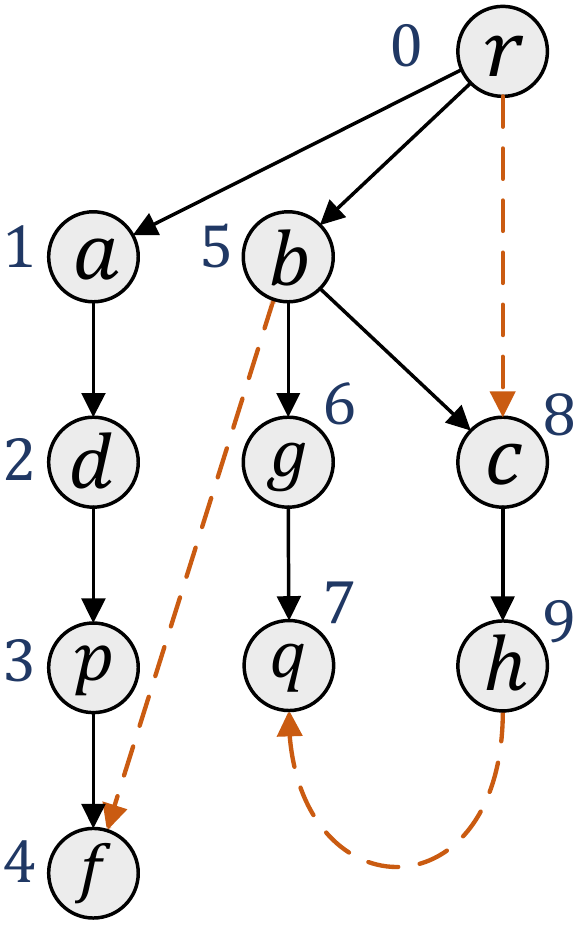}}
	
	\caption{We draw a graph $\mathcal{G}$ in four forms as shown in subfigures (a)-(d), where $\mathcal{T}_0,\mathcal{T}_1,\mathcal{T}_2$ and $\mathcal{T}_3$ are spanning trees of $\mathcal{G}$ constituted by solid lines. }
	\label{fig:basic_idea_example_G}
\end{figure}

To formally define our above observation~(Observation~\ref{obs:first}), we use a notation $\mathcal{C}(T,T_i)$, as defined in Definition~\ref{def:S(T,T_i)}.

\begin{definition}
\label{def:S(T,T_i)}
$\mathcal{C}(T,T_i)$ refers to a concrete number $k$, where (\romannumeral1) $T$ and $T_i$ are spanning trees of $G$; (\romannumeral2) $\forall u\in V(G)$, if $d\negthinspace f\negthinspace o(u,T)<k$, then $d\negthinspace f\negthinspace o(u,T)=d\negthinspace f\negthinspace o(u,T_i)$, otherwise $d\negthinspace f\negthinspace o(u,T)\neq d\negthinspace f\negthinspace o(u,T_i)$.
\end{definition} 

For instance, as depicted in Figure~\ref{fig:basic_idea_example_G}, $\mathcal{C}(\mathcal{T}_0,\mathcal{T}_1)=10$ and $\mathcal{C}(\mathcal{T}_0, \mathcal{T}_2)= 4$.

\begin{observation}\label{obs:first}
	$\mathcal{C}(T,T_i)\geq d\negthinspace f\negthinspace o(v,T_i)$, where $T_i$ is the DFS-Tree of the graph composed by $T$ and an edge batch $B_i=\{(u,v)\}$. 
\end{observation}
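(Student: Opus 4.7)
The plan is to carry out a case analysis on the edge type of $(u,v)$ as classified by $T$ and to concentrate effort on the only nontrivial case. By Definition~\ref{def:edge_types}, $(u,v)$ is either a tree edge, a forward edge, a backward edge, a backward cross edge, or a forward cross edge as classified by $T$. I would first dispose of the four easy cases by showing that in each of them $T_i=T$, so that $d\negthinspace f\negthinspace o(x,T_i)=d\negthinspace f\negthinspace o(x,T)$ for every $x\in V(G)$ and the observation is immediate. Concretely, if $(u,v)\in E(T)$ there is nothing to do; if $(u,v)$ is a forward edge then $v$ is a descendant of $u$ in $T$, so Stipulation~\ref{sti:stipulation}(i) visits $v$ through tree edges before $(u,v)$ is ever consulted; and if $(u,v)$ is a backward edge or a backward cross edge then $d\negthinspace f\negthinspace o(v,T) < d\negthinspace f\negthinspace o(u,T)$, so $v$ is already marked by the time $u$ is reached and Stipulation~\ref{sti:stipulation}(ii) never selects it.

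For the remaining case, where $(u,v)$ is a forward cross edge, I would trace the DFS that builds $T_i$ on $T\cup\{(u,v)\}$ under Stipulation~\ref{sti:stipulation}. Because $(u,v)$ is the only non-tree edge available, clause~(i) of the Stipulation keeps this DFS in lockstep with the DFS on $T$ alone up until $u$'s entire $T$-subtree has been visited: so long as some $T$-child of the current vertex is unvisited, the two runs choose the same successor. In particular, every node $x$ whose depth-first order in $T$ is at most the largest order appearing in $u$'s $T$-subtree satisfies $d\negthinspace f\negthinspace o(x,T_i)=d\negthinspace f\negthinspace o(x,T)$. At the moment $u$'s $T$-subtree has been exhausted, clause~(i) no longer applies at $u$, whereas $v$ is still unvisited because $v$ is not a descendant of $u$ in $T$. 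Clause~(ii) therefore selects $v$ as the next vertex, so $d\negthinspace f\negthinspace o(v,T_i)$ equals exactly one more than the largest depth-first order in $u$'s $T$-subtree.

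Combining the two facts closes the argument: any $x$ with $d\negthinspace f\negthinspace o(x,T) < d\negthinspace f\negthinspace o(v,T_i)$ has its $T$-order either strictly before $u$'s $T$-subtree or inside it, and in both situations $d\negthinspace f\negthinspace o(x,T_i)=d\negthinspace f\negthinspace o(x,T)$; by Definition~\ref{def:S(T,T_i)} this is precisely $\mathcal{C}(T,T_i)\geq d\negthinspace f\negthinspace o(v,T_i)$. The main obstacle I anticipate is not a clever combinatorial trick but a careful accounting of the DFS execution: one must justify that the single non-tree edge $(u,v)$ is the only opportunity for the $T\cup\{(u,v)\}$-DFS to deviate from the $T$-DFS, and that this deviation first occurs precisely when $u$'s $T$-subtree has been exhausted, so that every node visited strictly before $v$ in $T_i$ occupies the identical position in $T$.
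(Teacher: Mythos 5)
Your proof is correct and follows essentially the same route as the paper: a case split on whether $(u,v)$ is a forward cross edge as classified by $T$, with the non-forward-cross cases dismissed because $T_i$ coincides with $T$, and the forward-cross case handled by observing that the restructuring only affects nodes visited at or after $v$ in $T_i$. In fact your lockstep argument for the forward-cross case supplies the detail that the paper's proof merely asserts ("thus, the above statement is valid"), so it is, if anything, a more careful rendering of the same idea.
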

\begin{proof}
	As mentioned in Section~\ref{sec:preliminaries}, $T$ is restructured to $T_i$ based on Stipulation~\ref{sti:stipulation}. For one thing, if the edge $(u,v)$ in $\mathcal{B}_i$ is a forward cross edge, then $T$ is restructured to $T_i$ by adding an edge $(u,v)$ into $T$ and removing the edge $(w,v)$ from $T$, where $w$ is the parent node of $v$ on $T$. Thus, the above statement is valid. For another, if $(u,v)$ is not a forward cross edge, then the total depth-first order of $T$ and that of $T_i$ are the same. \hfill $\Box$
\end{proof}

Based on Observation~\ref{obs:first}, when $B_i$ refers to a subset of $E(G)$, we find out some nodes of $G$ could have the same depth-first orders on $T$ and on $T_i$, if their depth-first orders on $T$ are smaller $\Upsilon(T)$ as defined below. This observation is formally presented in Observation~\ref{obs:second}.

\begin{definition}
	\label{def:U(T)}
	$\Upsilon(T)=d\negthinspace f\negthinspace o(u,T)$, where (\romannumeral1) an edge $e$ in $G$ whose tail is $u$ is a forward cross edge as classified by $T$, and (\romannumeral2) for any other forward cross edge $e'$ in $G$ as classified by $T$, the depth-first order of the tail of $e'$ is larger than $d\negthinspace f\negthinspace o(u,T)$. If there is no forward cross edge in $G$ as classified by $T$ then $\Upsilon(T)=+\infty$.
\end{definition}

As depicted in Figure~\ref{fig:basic_idea_example_G},   $\Upsilon(\mathcal{T}_0)=\Upsilon(\mathcal{T}_1)=3$, $\Upsilon(\mathcal{T}_2)=6$ and $\Upsilon(\mathcal{T}_3)=+\infty$.

\begin{observation}
	\label{obs:second}
	$\forall B_i\subseteq E(G)$, $\mathcal{C}(T,T_i)\geq \Upsilon(T)$, where $T_i$ is the DFS-Tree of the graph composed by $T$ and $B_i$.
\end{observation}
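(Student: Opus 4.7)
The plan is to compare two DFS runs under Stipulation~\ref{sti:stipulation} step by step: one on $T$ alone (which trivially returns $T$) and one on the graph composed by $T$ and $B_i$ (which returns $T_i$), and then show that these two runs agree on at least the first $\Upsilon(T)$ visits. By Definition~\ref{def:S(T,T_i)} this immediately yields $\mathcal{C}(T,T_i)\geq\Upsilon(T)$. The corner case $\Upsilon(T)=+\infty$ will be dispatched first: here $G$ contains no forward cross edge as classified by $T$, so neither does $B_i\subseteq E(G)$, and the argument below then forces $T_i=T$ so that all depth-first orders agree vacuously.

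The key step is to pin down the very first moment at which the two DFSes can diverge. As long as the current node $u$ still has an unvisited tree child on $T$, Stipulation~\ref{sti:stipulation}(i) compels both runs to descend to the same leftmost such child, so no divergence is possible there. A first divergence can therefore occur only at a moment governed by Stipulation~\ref{sti:stipulation}(ii): at some current node $u$ the DFS on $T\cup B_i$ jumps to an unvisited node $z$ via an edge $(u,z)\in B_i$, while the DFS on $T$ would simply backtrack. This requires both that $u$'s entire $T$-subtree has already been explored (otherwise Stipulation~\ref{sti:stipulation}(i) would have fired) and that $z$ is still unvisited. I then plan to rule out each non-forward-cross type of $(u,z)$ with respect to $T$: a tree edge would put $z$ among the just-exhausted tree children of $u$; a forward edge would put $z$ inside the subtree of $u$ that we have just finished exploring; and a backward or backward cross edge has $d\negthinspace f\negthinspace o(z,T)<d\negthinspace f\negthinspace o(u,T)$, so $z$ was visited before $u$. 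Hence $(u,z)$ must be a forward cross edge of $G$ as classified by $T$, making $u$ the tail of such an edge, so by Definition~\ref{def:U(T)} we obtain $d\negthinspace f\negthinspace o(u,T)\geq\Upsilon(T)$.

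The conclusion then follows by a short induction on step count. Before the first divergence, both DFSes visit exactly the same nodes in the same order, so every node visited in this common prefix receives identical depth-first orders in $T$ and in $T_i$. Since the divergent moment is reached only after $u$'s $T$-subtree has been fully explored, every node $x$ with $d\negthinspace f\negthinspace o(x,T)<\Upsilon(T)\leq d\negthinspace f\negthinspace o(u,T)$ has already been visited by both processes in that common prefix, and therefore satisfies $d\negthinspace f\negthinspace o(x,T_i)=d\negthinspace f\negthinspace o(x,T)$, which is precisely $\mathcal{C}(T,T_i)\geq\Upsilon(T)$. The main obstacle I anticipate is phrasing the induction crisply around Stipulation~\ref{sti:stipulation}, whose interleaving of leftmost-child descent, $\mathcal{E}$-jumps, and backtracking makes the notion of ``same state'' slightly delicate; but once the case analysis above eliminates every non-forward-cross source of divergence, the remaining verification is routine.
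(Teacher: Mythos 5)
Your proof is correct and rests on the same key fact as the paper's: a node's depth-first order can only be disturbed by a forward cross edge of $B_i$ as classified by $T$, whose tail by Definition~\ref{def:U(T)} has depth-first order at least $\Upsilon(T)$, so the prefix of nodes with order below $\Upsilon(T)$ is untouched. The paper argues this by contradiction and simply asserts (citing Stipulation~\ref{sti:stipulation}) that a changed order forces such an edge in $B_i$, whereas your first-divergence case analysis of the two DFS runs actually substantiates that step; this is a fleshed-out version of the same argument rather than a different route.
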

\begin{proof}
	This statement is equivalent to ``$\forall u\in V(G)$, if $d\negthinspace f\negthinspace o(u,T)= k\leq \Upsilon(T)$, then $d\negthinspace f\negthinspace o(u,T_i)=k$'', which could be proved by contradiction.  Assuming there is a node $x$ in $G$ where $d\negthinspace f\negthinspace o(x,T)=k_x\leq\Upsilon(T)$, but $d\negthinspace f\negthinspace o(x,T_i)\neq k_x$. Since $d\negthinspace f\negthinspace o(x,T)\leq\Upsilon(T)$, according to Definition~\ref{def:U(T)}, $\forall e\in E(G)$, if $x$ is the tail or the head of $e$, then $e$ is not a forward cross edge as classified by $T$. Thus, if our assumption is correct, then there must be an edge $e'$ in $G$, where (1) $e'\in B_i$ is a forward cross edge as classified by $T$; (2) the depth-first order of the tail of $e'$ on $T$ must be smaller than $d\negthinspace f\negthinspace o(x,T)$. Otherwise, the assumption cannot be right, according to Stipulation~\ref{sti:stipulation}. However, if there is an edge $e'$ in $G$, $\Upsilon(T)$ must be smaller than the depth-first order of the tail of $e'$, which contradicts the premise.  \hfill $\Box$
\end{proof}

For example, given a batch of edges $B_i=\{e_1,e_2,(a,d)\}$, if $T$ is in the form of $\mathcal{T}_0$~(Figure~\ref{fig:basic_idea_example_G}(a)), $\mathcal{T}_2$ is the DFS-Tree of the graph composed by $T$ and $B_i$, where $\mathcal{T}_2$ is depicted in Figure~\ref{fig:basic_idea_example_G}(c).  $\mathcal{C}(\mathcal{T}_0,\mathcal{T}_2)=4\geq \Upsilon(\mathcal{T}_0)=3$.

Based on Observation~\ref{obs:second}, we find out that \textit{not all depth-first orders of the nodes on $T$ will keep changing, during the process of restructuring $T$ to a DFS-Tree of $G$, especially when we limit the elements that can be contained in $B_i$.}  A naive algorithm is presented, as shown in Algorithm~\ref{algo:naive_up_down}.

	\begin{algorithm}[t]
		\caption{Naive EP-DFS$(G,r)$}
		\label{algo:naive_up_down}\small
		\begin{algorithmic}[1]
			\renewcommand{\algorithmicrequire}{\textbf{Input:}}
			\renewcommand{\algorithmicensure}{\textbf{Output:}}
			\renewcommand{\algorithmiccomment}[1]{  #1}
			
			\REQUIRE $G=(V,E)$ is an input graph; $r$ is a node of $G$ connected to all the other nodes in $G$.
			\ENSURE  A DFS-Tree of $G$.
			\STATE $T\leftarrow$ a spanning tree of $G$ rooted at $r$, $FNN\gets1$ \label{line:nepscc:initial_T}
			\STATE \textbf{while} $FNN<n$ \textbf{do}\label{line:nepscc:while_start}
			\STATE \quad  $\mathcal{B} \gets$Scanning$(G, FNN)$\label{line:nepscc:scannG}

			\STATE \quad Replacing $T$ with the DFS-Tree $T_\mathcal{B}$ of the graph composed by $T$ and $\mathcal{B}$\label{line:nepscc:ReplaceT}

			\STATE \quad $FNN\gets \mathcal{C}(T,T_\mathcal{B})$\label{line:nepscc:FNNget}
			\STATE \quad \textbf{if} $\mathcal{C}(T,T_\mathcal{B})>Max(\mathcal{B})+1$ \textbf{then} \label{line:nepscc:while_if}
			\STATE \quad \quad $F\negthinspace N\negthinspace N\gets Max(\mathcal{B})+1$\label{line:nepscc:if_FNN}
			\STATE \quad \textbf{end if}\label{line:nepscc:end_if}
			\STATE \textbf{end while}\label{line:nepscc:end_while}
			\STATE \textbf{return} $T$ \label{line:nepscc:return}	
		\end{algorithmic}
	\end{algorithm}

The naive algorithm restructures $T$ to a DFS-Tree of $G$ with iterations, which is a natural way to address the DFS problem on semi-external memory model. In each iteration, it sequentially scans $G$ once, in order to obtain a batch $\mathcal{B}$ of edges from $G$. Specifically, there is a parameter utilized in the naive EP-DFS, i.e. $FNN$ as defined below.

\begin{definition}
	\label{def:FNN}
	$FNN$~(Fixed Node Number) represents a concrete number, where, $\forall u\in V(T)$, if $d\negthinspace f\negthinspace o(u,T)<FNN$ in the $i$th iteration of EP-DFS~(Algorithm~\ref{algo:naive_up_down} and Algorithm~\ref{algo:up_down}), then the value of $d\negthinspace f\negthinspace o(u,T)$ will not be changed in any following iteration.
\end{definition}

Of course, during the whole process, we cannot change the root of $T$, so that $FNN$ is initialized to $1$, as demonstrated in Line~\ref{line:nepscc:initial_T}. 

Line~\ref{line:nepscc:scannG} loads an edge batch $\mathcal{B}$, as defined in Definition~\ref{def:B}. The reason is, in the $i$th iteration, if an edge $e$ related to a node $s$ whose depth-first order on $T$ is smaller than $FNN$, $e$ cannot be a forward cross edge as classified by $T$ in the $j$th iteration, where $j\geq i$. The above statement is proved by Theorem~\ref{theorem:reduction_edges}.

\begin{definition}
	\label{def:B}
	$\mathcal{B}$ is an edge batch of $G$, where 
	
	(\romannumeral1) $\mathcal{B}$ contains an edge $e=(u,v)$ of $G$, if $d\negthinspace f\negthinspace o(u,T)\geq FNN$ or $d\negthinspace f\negthinspace o(v,T)\geq FNN$; 
	
	(\romannumeral2) if $\exists e=(u,v)\in \mathcal{B}$ and the tail or the head of an edge $e'$ of $G$ has a depth-first order on $T$ that is no larger than $d\negthinspace f\negthinspace o(u,T)$ or $d\negthinspace f\negthinspace o(v,T)$, then $e'$ belongs to $\mathcal{B}$
	
	(\romannumeral3) there is an edge in $\mathcal{B}$ whose tail or head has a depth-first order on $T$ which is equal to $Max(\mathcal{B})$, and $\forall e=(u,v)\in \mathcal{B}$, either $d\negthinspace f\negthinspace o(u,T)\leq Max(\mathcal{B})$ or $d\negthinspace f\negthinspace o(v,T)\leq Max(\mathcal{B})$.
\end{definition}

As we limit the main memory maintains at most $2n$ edges of $G$, $\mathcal{B}$ has at most $n$ edges. After obtaining the edge batch $\mathcal{B}$ in Line~\ref{line:nepscc:scannG}, Line~\ref{line:nepscc:ReplaceT} restructures $T$ into the DFS-Tree $T_\mathcal{B}$ of the graph composed by $T$ and $\mathcal{B}$ according to Stipulation~\ref{sti:stipulation}. It could be proved in Theorem~\ref{theorem:reduction_edges} that parameter $FNN$ could be updated by the smallest value of $\mathcal{C}(T, T_\mathcal{B})$ and $Max(\mathcal{B})+1$, in case all the edges contained in $\mathcal{B}$ are not forward cross edges as classified by $T$.

\begin{theorem}\label{theorem:reduction_edges}
	$\Upsilon(T_\mathcal{B})\geq\, $Min$\big(\mathcal{C}(T,T_\mathcal{B}),Max(\mathcal{B})+1\big)$, if $\Upsilon(T)\geq FNN$ and $T_\mathcal{B}$ is the DFS-Tree of the graph composed by $T$ and an edge batch $\mathcal{B}$. Min$(i,j)$ represents the smaller value between $i$ and $j$.
\end{theorem}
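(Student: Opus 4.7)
The plan is to proceed by contradiction. Suppose $\Upsilon(T_\mathcal{B}) < \mathrm{Min}\bigl(\mathcal{C}(T,T_\mathcal{B}),Max(\mathcal{B})+1\bigr)$. Then by Definition~\ref{def:U(T)} there exists an edge $e=(u,v)\in E(G)$ that is a forward cross edge as classified by $T_\mathcal{B}$ with $d\negthinspace f\negthinspace o(u,T_\mathcal{B}) < \mathcal{C}(T,T_\mathcal{B})$ and $d\negthinspace f\negthinspace o(u,T_\mathcal{B}) \le Max(\mathcal{B})$. By Definition~\ref{def:S(T,T_i)}, the first inequality immediately yields $d\negthinspace f\negthinspace o(u,T) = d\negthinspace f\negthinspace o(u,T_\mathcal{B}) \le Max(\mathcal{B})$.

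Next I would split on whether $e\in\mathcal{B}$. If $e\in\mathcal{B}$, then $T_\mathcal{B}$ is the DFS-Tree of the graph $T\cup\mathcal{B}$, which already contains $e$; by the characterization recalled at the start of Section~\ref{sec:related_works}---a spanning tree is a DFS-Tree of a graph iff that graph has no forward cross edge as classified by it---$e$ cannot be a forward cross edge in $T_\mathcal{B}$, an immediate contradiction.

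Otherwise $e\notin\mathcal{B}$, and the contrapositive of clause~(i) of Definition~\ref{def:B} forces $d\negthinspace f\negthinspace o(u,T)<FNN$ and $d\negthinspace f\negthinspace o(v,T)<FNN$ (otherwise $e$ would have been pulled into $\mathcal{B}$). Combining this with the hypothesis $\Upsilon(T)\ge FNN$ and Definition~\ref{def:U(T)}, no edge whose tail has depth-first order below $\Upsilon(T)$ on $T$ can be a forward cross edge classified by $T$; hence $e$ itself is not a forward cross edge classified by $T$. Moreover, Observation~\ref{obs:second} gives $\mathcal{C}(T,T_\mathcal{B})\ge\Upsilon(T)\ge FNN$, so both $d\negthinspace f\negthinspace o(u,T)$ and $d\negthinspace f\negthinspace o(v,T)$ are preserved in $T_\mathcal{B}$.

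The main obstacle is bridging ``depth-first order is preserved at both endpoints of $e$'' to ``the classification of $e$ is preserved''. I would discharge this in the spirit of the proof of Observation~\ref{obs:second}, by appealing to the chain-reaction case analysis of Section~\ref{sec:chain_reaction} summarized in Figure~\ref{fig:chain_reaction}(J): any change in the type of $e$ between $T$ and $T_\mathcal{B}$ would have to be triggered by the incorporation of some forward cross edge of $\mathcal{B}$ whose endpoints lie high enough on $T$ to reshape the subtree containing $u$ or $v$, and by Definition~\ref{def:U(T)} such a triggering edge must have its tail at depth-first order at least $\Upsilon(T)$ on $T$. Because both endpoints of $e$ have depth-first order strictly below $FNN\le\Upsilon(T)$, no such trigger can reach them, so the classification of $e$ must coincide in $T$ and $T_\mathcal{B}$. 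Hence $e$ is still not a forward cross edge in $T_\mathcal{B}$, contradicting the choice of $e$ and completing the proof.
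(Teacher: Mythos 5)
Your core argument is the paper's argument: assume a forward cross edge $e=(u,v)$ of $T_\mathcal{B}$ with $d\negthinspace f\negthinspace o(u,T_\mathcal{B})<\mathrm{Min}\bigl(\mathcal{C}(T,T_\mathcal{B}),Max(\mathcal{B})+1\bigr)$, use Definition~\ref{def:S(T,T_i)} to transfer $u$'s order back to $T$, conclude $(u,v)\in\mathcal{B}$, and contradict the fact that a DFS-Tree of $T\cup\mathcal{B}$ admits no forward cross edge from $T\cup\mathcal{B}$. The paper runs this as a single case: it first asserts, citing Observation~\ref{obs:second} and the hypothesis $\Upsilon(T)\geq FNN$, that $d\negthinspace f\negthinspace o(u,T_\mathcal{B})\geq FNN$, which together with $d\negthinspace f\negthinspace o(u,T)=d\negthinspace f\negthinspace o(u,T_\mathcal{B})\leq Max(\mathcal{B})$ forces $(u,v)\in\mathcal{B}$ directly. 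You instead split on $e\in\mathcal{B}$ versus $e\notin\mathcal{B}$, which relocates exactly the same difficulty into your second case.

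That second case is the soft spot. Having derived $d\negthinspace f\negthinspace o(u,T)<FNN\leq\Upsilon(T)$, you know $e$ is not a forward cross edge as classified by $T$, and you must conclude it is not one as classified by $T_\mathcal{B}$. Your bridge --- both endpoints keep their depth-first orders, hence the classification is preserved --- does not follow from order preservation alone: the type of a non-tree edge depends on the LCA/ancestor relation (Definition~\ref{def:edge_types}), and preserving $d\negthinspace f\negthinspace o(u,\cdot)$ and $d\negthinspace f\negthinspace o(v,\cdot)$ does not formally preserve whether $u$ is an ancestor of $v$. Appealing to Figure~\ref{fig:chain_reaction}(J) does not close this, since Section~\ref{sec:chain_reaction} is an informal single-edge case analysis, not a citable lemma about batch updates. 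To be fair, the paper's own one-line claim that $d\negthinspace f\negthinspace o(u,T_\mathcal{B})\geq FNN$ conceals the very same step (Observation~\ref{obs:second} controls orders, not edge types), so you have not taken a wrong turn --- you have made explicit, and then only sketched, a step the paper silently assumes. A fully rigorous version would prove, in the style of the proof of Observation~\ref{obs:second}, that a node whose depth-first order is below $\Upsilon(T)$ and is preserved cannot acquire a new forward cross out-edge under Stipulation~\ref{sti:stipulation}. One further caution: your use of the contrapositive of clause~(\romannumeral1) of Definition~\ref{def:B} to get \emph{both} $d\negthinspace f\negthinspace o(u,T)<FNN$ and $d\negthinspace f\negthinspace o(v,T)<FNN$ reads clause~(\romannumeral1) as an ``iff,'' which clause~(\romannumeral3) contradicts for edges above $Max(\mathcal{B})$; you only legitimately get the bound on the tail $u$ (using $d\negthinspace f\negthinspace o(u,T)\leq Max(\mathcal{B})$), which is fortunately all your argument really needs.
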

\begin{proof}
	We prove the statement by contradiction, i.e. supposing $\Upsilon(T_\mathcal{B})< Min\big(\mathcal{C}(T,T_\mathcal{B}),Max(\mathcal{B})+1\big)$. In other words, $G$ contains a forward cross edge $e=(u,v)$ as classified by $T_\mathcal{B}$, where $d\negthinspace f\negthinspace o(u,T_\mathcal{B})<Min\big(\mathcal{C}(T,T_\mathcal{B}),Max(\mathcal{B})+1\big)$.  According to Observation~\ref{obs:second}, $d\negthinspace f\negthinspace o(v,T_\mathcal{B})>d\negthinspace f\negthinspace o(u,T_\mathcal{B})\geq FNN$. Hence, $FNN\leq d\negthinspace f\negthinspace o(u,T_\mathcal{B})<Min\big(\mathcal{C}(T,T_\mathcal{B})$, $Max(\mathcal{B})+1\big)$. Based on the premise of the statement, if a node $x$ has a depth-first order on $T$ that is smaller than $\mathcal{C}(T,T_\mathcal{B})$, then $d\negthinspace f\negthinspace o(x,T_\mathcal{B})=k$. As $d\negthinspace f\negthinspace o(u,T_\mathcal{B})<\mathcal{C}(T,T_\mathcal{B})$, there must be $d\negthinspace f\negthinspace o(u,T_\mathcal{B})=d\negthinspace f\negthinspace o(u,T)$ and both of them are smaller than $Min\big(\mathcal{C}(T,T_\mathcal{B}),Max(\mathcal{B})+1\big)$, i.e. $(u,v)\in\mathcal{B}$. Thus, after restructuring $T$ to $T_\mathcal{B}$, there is an edge in $\mathcal{B}$ that is a forward cross edge as classified by $T_\mathcal{B}$, which is impossible. \hfill$\Box$
\end{proof}

Example~\ref{exampe:naive_EP-DFS} takes one iteration of the naive EP-DFS for instance.

\begin{example}
	\label{exampe:naive_EP-DFS}
	Assuming at the beginning of one iteration of Algorithm~\ref{algo:naive_up_down}, $T$ is in the form of $\mathcal{T}_0$ shown in Figure~\ref{fig:basic_idea_example_G}(a), the value of $FNN$ is $1$. $\mathcal{B}=\{(r,a), (r,b), (r,c), (a,d), (d,p), (p,f),(b,f), (b,g), (b,c)\}$ so that $Max(\mathcal{B})=5$. That is, the DFS-Tree $T_\mathcal{B}$ of the graph composed by $\mathcal{T}$ and $\mathcal{B}$ is in the form of  $\mathcal{T}_2$ shown in Figure~\ref{fig:basic_idea_example_G}(c). Hence, $FNN$ is updated to $4$, as $\mathcal{C}(\mathcal{T}_0,\mathcal{T}_2)=4\leq Max(\mathcal{B})+1 = 6$.
\end{example}

\section{EP-DFS Algorithm} \label{sec:algorithm}

In practice, the implementation of Algorithm~\ref{algo:naive_up_down} is intricate, especially when $G$ is relatively large. (1) One reason is that at beginning the naive EP-DFS initializes $FNN$ to $1$. Even though, setting $FNN=1$ is correct, because during the whole process, the root of $T$ is fixed which is node $r$, and the depth-first order of leftmost child of $r$ is also fixed according to Stipulation~\ref{sti:stipulation}. This initialization of $FNN$ may let the naive EP-DFS scan the whole input graph many times meaninglessly. The reason is when $\Upsilon(T)\gg 1$, $G$ may contains a large number of edges, each of which has a tail or a head whose depth-first order on $T$ is in the range of $[1,\Upsilon(T))$, and these edges may not fit into one edge batch $\mathcal{B}$~($\mathcal{B}$ contians $n$ edges at most according to Section~\ref{sec:preliminaries}). (2) Another is, it is hard to load an edge batch $\mathcal{B}$, since the value of $Max(\mathcal{B})$ is unknown before $\mathcal{B}$ is obtained, and the distribution of the edges in $G$ is also unknown. (3) Furthermore, we find the naive EP-DFS may not be terminated when $G$ has more than $n$ edges whose tail or head is $u$, and $d\negthinspace f\negthinspace o(u,T)=FNN$. Because, in this case, according to Definition~\ref{def:B}, $\mathcal{B}$ cannot fit into the main memory under the restriction that at most $2n$ edges could be contained in the main memory.

To efficient restructure $T$ to a DFS-Tree of $G$, EP-DFS is presented in this section, and its pseudo-code is given in Algorithm~\ref{algo:up_down}. As shown in Lines~\ref{line:e+:initializeO}-\ref{line:e+:initializeOthers}, EP-DFS initializes parameter $FNN$ by a new method, which is named InitialRound and detailedly discussed in Section~\ref{sec:algo:FNNinitializatin}. Then, since loading an edge batch of $G$ with certain restrictions is hard and EP-DFS may not be terminated by using $\mathcal{B}$, Section~\ref{sec:algo:Bobtain} introduces the edge batch, named $\mathcal{B}^+$, and a light-weight index $\mathcal{N}^+\negthickspace$\textit{-index} to efficiently obtain $\mathcal{B}^+$. Section~\ref{sec:algo:FNNupdate} illustrates how to correctly update the value of $FNN$ as larger as possible when using $\mathcal{B}^+$ instead of $\mathcal{B}$. Section~\ref{sec:algo:optimization} presents an optimization algorithm for EP-DFS. Section~\ref{sec:algo:discussion} discusses the space consumption of EP-DFS and presents its implementation details. For the correctness of EP-DFS, please see Section~\ref{sec:algo:correctness}.

	\begin{algorithm}[t]
		\caption{EP-DFS$(G,r)$}
		\label{algo:up_down}\small
		\begin{algorithmic}[1]
			\renewcommand{\algorithmicrequire}{\textbf{Input:}}
			\renewcommand{\algorithmicensure}{\textbf{Output:}}
			\renewcommand{\algorithmiccomment}[1]{  #1}
			
			\REQUIRE $G=(V,E)$ is an input graph; $r$ is a node of $G$ connected to all the other nodes in $G$.
			\ENSURE  A DFS-Tree $T$.
			
			\STATE $T\gets$ a spanning tree of $G$ rooted at $r$  \label{line:star:T_initialize}
			
			\STATE $T,FNN\leftarrow$ InitialRound$(G,T)$\label{line:star:FNN} \hfill $\triangleright$ Procedure~\ref{algo:intialRound}
			
			\STATE $F_1\leftarrow FNN$  
			\STATE $\mathcal{N}^+\negthickspace$\textit{-index}$\,\leftarrow\,$Indexing($FNN,T,G$) \label{line:star:index} \hfill $\triangleright$ Procedure~\ref{algo:Indexing}
			\STATE \textbf{while} $FNN<n$ \textbf{do} \label{line:star:while_true} 
			
			\STATE \quad $\mathcal{B}^+\leftarrow\,$ObtainingEdges$(FNN$, $\mathcal{N}^+\negthickspace$\textit{-index}$, T)$ \label{line:star:initialmap}   \hfill $\triangleright$ Procedure~\ref{algo:ObtainingEdges}
			\STATE \quad Replace $T$ with the DFS-Tree $T_{\mathcal{B}^+}$ of the graph composed by $T$ and $B^+$
						
			\STATE \quad $F_2\leftarrow FNN$, $FNN\gets\mathcal{C}^+(T, T_{\mathcal{B}^+})$ \label{line:star:min1}
			\STATE \quad If $FNN>Max(\mathcal{B}^+)+1$ then $FNN=Max(\mathcal{B}^+)+1$\label{line:star:min2}
			
			\STATE \quad \textbf{if} $\frac{FNN-F_2}{n}\rightarrow0$ and $FNN-F_1>\gamma\times n$ \textbf{then} \label{line:star:while_if}
			
			\COMMENT{\quad\quad // restructuring and rewriting  $\mathcal{N}^+\negthickspace$\textit{-index}}
			
			\STATE \quad \quad $T,FNN\leftarrow$RoundI$\&$Reduction$(\mathcal{N}^+$\textit{-index}$, FNN, T)$,  $F_1\leftarrow FNN$  \label{line:star:roundIandReduction} 
			
			\STATE \quad \textbf{else if}  $\frac{FNN-F_2}{n}\rightarrow0$ \textbf{then} \label{line:star:while_else_if}
			\STATE \quad \quad $T,FNN\leftarrow$RoundI$(\mathcal{N}^+$\textit{-index}$, FNN, T)$ \label{line:star:roundI}
			
			\STATE \quad \textbf{end if}\label{line:star:while_if_end}

			\STATE \quad Rearrangement($T,FNN$) \label{line:star:rearrangement} \hfill $\triangleright$ Procedure~\ref{algo:Rearrangement}
			
			\STATE \textbf{end while}\label{line:star:end_while}
			\STATE \textbf{return} $T$ \label{line:star:return}
		\end{algorithmic}
	\end{algorithm}

\subsection{The initialization of $FNN$}\label{sec:algo:FNNinitializatin}

We present a procedure InitialRound$(G,T)$ in Procedure~\ref{algo:intialRound}, for the initialization of $T$ and that of parameter $FNN$. 

\setcounter{algorithm}{0}
	\begin{algorithm}[t]
		\floatname{algorithm}{Procedure}
		\caption{InitialRound$(G,T)$}
		\label{algo:intialRound}\small
		\begin{algorithmic}[1]
			\renewcommand{\algorithmicrequire}{\textbf{Input:}}
			\renewcommand{\algorithmicensure}{\textbf{Output:}}
			\renewcommand{\algorithmiccomment}[1]{  #1}

			\STATE \textbf{for} each $n$ edges of $G$ \textbf{do} \label{line:initialround:while_1_start}
			\STATE \quad Replace $T$ with the DFS-Tree of the graph composed by $T$ and these $n$ edges \label{line:initialround:while_1_dfs_in}
			\STATE \quad Rearrangement($T,0$) \label{line:initialround:while_1_rearrangement} \hfill $\triangleright$ Procedure~\ref{algo:Rearrangement}
			\STATE \textbf{end for} \label{line:initialround:end_while1}
			\STATE $T'\leftarrow T$ \label{line:initialround:T'}
	
			\STATE \textbf{for} each $n$ edges of $G$ \textbf{do} \label{line:initialround:while_2_start}
			\STATE \quad Replace $T$ with the DFS-Tree of the graph composed by $T$ and these $n$ edges \label{line:initialround:while_2_dfs_in}
			\STATE \textbf{end for} \label{line:initialround:end_while2}
			\STATE \textbf{return} $T$ and $FNN\gets\mathcal{C}(T',T)$\label{line:initialround:end} \COMMENT{// $\Upsilon(T)\geq \mathcal{C}(T',T)$}	
		\end{algorithmic}
	\end{algorithm}

This procedure has two loops. Its first loop is utilized in Lines~\ref{line:initialround:while_1_start}-\ref{line:initialround:end_while1}. It scans $G$ once, as demonstrated in Line~\ref{line:initialround:while_1_start}. For each $n$ edges of $G$, it replaces the in-memory spanning tree $T$ of $G$ with the DFS-Tree of the graph composed by $T$ and these edges, as demonstrated in Line~\ref{line:initialround:while_1_dfs_in}. Moreover, in each iteration of this loop, we present a new algorithm to rearrange the orders of the nodes on $T$, which is named Rearrangement introduced in Procedure~\ref{algo:Rearrangement}. Comparing with the first loop, the second loop is simpler, in which we do not rearrange the nodes of $T$, as illustrated in Lines~\ref{line:initialround:while_2_start}-\ref{line:initialround:end_while2}. At the end of this function, Line~\ref{line:initialround:end} returns $T$, and sets $FNN$ to $\mathcal{C}(T',T)$, where $T'$ represents the form of $T$ obtained by the first loop~(Line~\ref{line:initialround:T'}). The correctness of setting $FNN$ to $\mathcal{C}(T',T)$ is proved in Theorem~\ref{theorem:min_initialize}, Section~\ref{sec:algo:correctness}.

	\begin{algorithm}[t]
		\floatname{algorithm}{Procedure}
		\caption{Rearrangement$(T,FNN)$}
		\label{algo:Rearrangement}
		\begin{algorithmic}[1]\small
			\renewcommand{\algorithmicrequire}{\textbf{Input:}}
			\renewcommand{\algorithmicensure}{\textbf{Output:}}
			\renewcommand{\algorithmiccomment}[1]{  #1}
			
			\STATE \textbf{for} $i$ from $n-1$ to $FNN$ \textbf{do} \label{line:nodeweight:for} 
			\STATE \quad $u$ represents the node on  $T$ whose depth-first order is $i$ 
			\STATE \quad Assuming $u$ has $k$ children on $T$ which are $v_1,v_2,\dots,v_k$
			
			\STATE \quad $\mathcal{W}(u)\longleftarrow 1+\mathcal{W}(v_1)+\mathcal{W}(v_2)+\dots+\mathcal{W}(v_k)$\label{line:nodeweight:Wcompute}
			
			\STATE \quad \textbf{for} each integer $j$ in the range of $[0,\frac{k}{10^4}]$ \textbf{do} \label{line:nodeweight:for_2}
			\STATE \quad \quad $l\gets j\times 10^4$, QS-Rearrange($l+1,l+10^4$)\label{line:nodeweight:Sort}
			\STATE \quad \textbf{end for} \label{line:nodeweight:endfor2}
			\STATE \quad $l\gets \lfloor\frac{k}{10^4}\rfloor\times 10^4$, if $l<k$ then QS-Rearrange($l+1,k$)\label{line:nodeweight:addtionalSort}
			\STATE \textbf{end for}	\label{line:nodeweight:endforfinal}
		\end{algorithmic}
	\end{algorithm}

Our rearrangement algorithm is introduced in  Procedure~\ref{algo:Rearrangement}. It aims to rearrange the children of each node $u$ on $T$, according to their node weights. Specifically, the weight of $v$ is the size of $T_v$, denoted by $\mathcal{W}(v)$. $T_v$ is the subtree of $T$, where the root of $T_v$ is $v$ and leaves of $T_v$ are the leaves of $T$. Firstly, to compute the weights of nodes on $T$, we scan the nodes on $T$ in the reverse total depth-first order of $T$. When node $u$ on $T$ is scanned, its node weight $\mathcal{W}(u)$ is set to the value of $1+\mathcal{W}(v_1)+\mathcal{W}(v_2)+\dots+\mathcal{W}(v_k)$, as shown in Line~\ref{line:nodeweight:Wcompute} of Procedure~\ref{algo:Rearrangement}. Here, we assume $u$ has $k$ children on $T$ which are $v_1,v_2,\dots, v_k$. The correctness of this node-weight computation process is proved in Theorem~\ref{theorem:node_weitht}, Section~\ref{sec:algo:correctness}. Secondly, when node $u$ is scanned, we also rearrange its out-neighborhoods. However, considering that some nodes on $T$ may have a large set of out-neighbors, when $u$ has more than $10^4$ nodes on $T$, procedure Rearrangement sorts at most $10^4$ children of $u$ at a time, as shown in Lines~\ref{line:nodeweight:for_2}-\ref{line:nodeweight:addtionalSort}. Assuming $l_1$ and $l_2$ are two integers, where $l_1>l_2$. Each time a function QS-Rearrange is used to rearrange the $l_1$th to the $l_2$th out-neighbors of $u$ from the left to right, as shown in Line~\ref{line:nodeweight:Sort} and Line~\ref{line:nodeweight:addtionalSort}. It worth noting that, the sort algorithm used here is  quick-sort~\cite{DBLP:books/daglib/0037819}. Besides, if the rearranged $l_1$th to the rearranged $l_2$th out-neighbors of $u$ from the left to right are $v_{l_1},v_{l_1+1}\dots, v_{l_2}$, then $\mathcal{W}(v_{l_1})\geq\mathcal{W}(v_{l_1+1})\geq\dots\geq\mathcal{W}(v_{l_2})$.

Our rearrangement strategy is efficient, which is discussed below. For one thing, our rearrangement algorithm only traverse $T$ once, while traditional algorithms have to traverse $T$ at least twice, where the former is used to compute the weight of each node on $T$, and the latter is for sorting the children of each node on $T$ in a required order. Even though traversing an in-memory spanning tree is quite fast, it is still necessary to decrease the number of traversing $T$. That is because, in a semi-external algorithm, $T$ is changed without predictability so that it is also stored in the main memory with an unstructured data structure. To restructure an in-memory spanning tree to a DFS-Tree of $G$, a semi-external algorithm may have to rearrange $T$ many times, and the size of $G$ is often large. Note that, traversing $T$ in a semi-external algorithm does not involve any I/O costs since all the nodes are stored in the main memory and there is no input and output disk access. For another, traditional algorithms have to consume  $O(n\log n)$ time to rearrange the out-neighbors of nodes on $T$, since a node $u$ on $T$ may have up to $n$ out-neighbors. However, in EP-DFS, the time consumption of procedure Rearrangement is only $O(n)$, as discussed in the end of  Section~\ref{sec:algo:discussion}.
 
An example of our rearrangement algorithm is given below.

\begin{figure}[h]
	\centering
	\renewcommand{\thesubfigure}{}
	\subfigure[(a)]{
		\includegraphics[scale = 0.35]{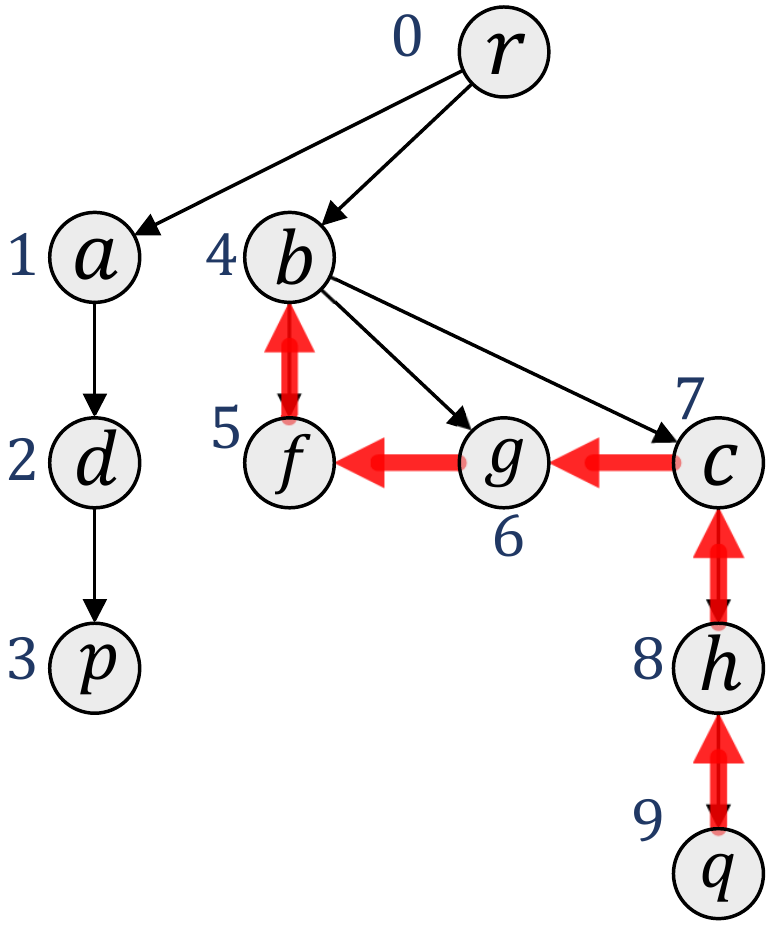}}
	\quad
	\subfigure[(b)]{
		\includegraphics[scale = 0.35]{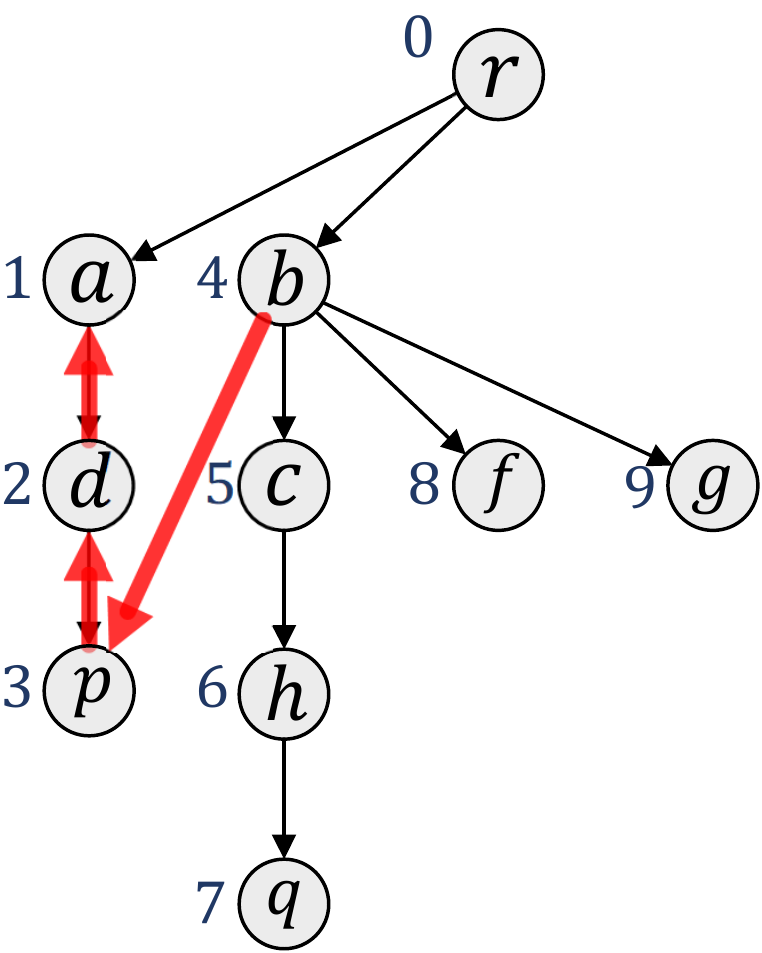}}
	\quad
	\subfigure[(c)]{
		\includegraphics[scale = 0.35]{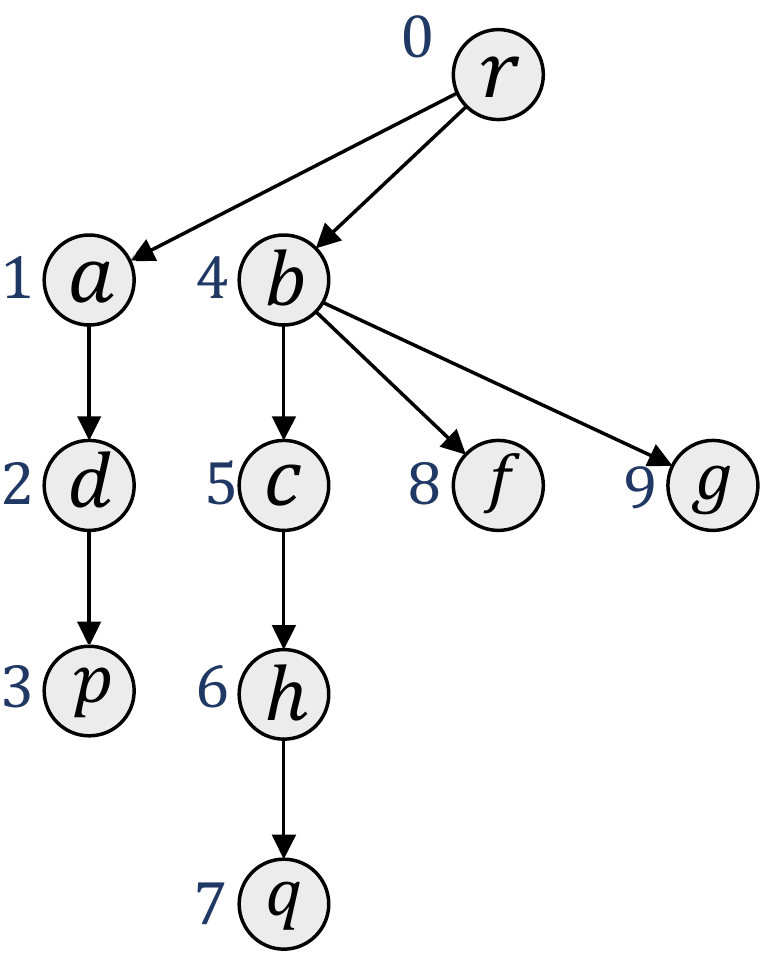}}
	\caption{An example of Procedure~\ref{algo:Rearrangement}, where the input spanning tree of $G$ is in the form of $\mathcal{T}_1$ shown in Figure~\ref{fig:basic_idea_example_G}(b), and the value of the input parameter $FNN$ is $1$. }
	\label{fig:rearrangement_example}
\end{figure}

\begin{example}
	Figure~\ref{fig:rearrangement_example} is a vivid schematic view of how the rearrangement strategy works, when $T$ is in the form of $\mathcal{T}_1$~(Figure~\ref{fig:basic_idea_example_G}(b)), and the input $FNN$ is set to $1$. Figure~\ref{fig:rearrangement_example}(a) shows Procedure~\ref{algo:Rearrangement} first restructures the out-neighborhoods of nodes $q,h,c,\dots, b$. Figure~\ref{fig:rearrangement_example}(b) illustrates that the last node whose out-neighborhood is rearranged by Procedure~\ref{algo:Rearrangement} is $a$ not $r$, since $FNN$ is set to $1$. Figure~\ref{fig:rearrangement_example}(c) depicts the rearranged $T$.
\end{example}

\subsection{How to efficiently obtain an edge batch} \label{sec:algo:Bobtain}
	As mentioned below, in the naive EP-DFS, restructuring $T$ with an edge set $\mathcal{B}$~(defined in Definition~\ref{def:B}) is impractical, since it may cause the naive algorithm cannot be terminated. To be specific, when $G$ has more than $n$ edges related to the node whose depth-first order on $T$ is $FNN$, the edge batch $\mathcal{B}$ in the naive algorithm cannot fit into the main memory under our restriction~(Section~\ref{sec:preliminaries}). Moreover, since the edge entries of $G$ is stored on disk with an unknown order, it is quite difficult to obtain an edge batch of $G$ with certain requirements.

	In order to address the problem that the edge batch $\mathcal{B}$ may not be loaded in memory under the restriction that at most $2n$ edges of $G$ could be maintained in the main memory, in EP-DFS, $\mathcal{B}^+$ is used instead of $\mathcal{B}$.
	
	\begin{definition}
		\label{def:B^+}
		$\mathcal{B}^+$ is an edge batch of $G$, where 
		
		(\romannumeral1) $\mathcal{B}^+$ contains an edge $e=(u,v)$ of $G$ if $FNN\leq d\negthinspace f\negthinspace o(u,T)$ and $d\negthinspace f\negthinspace o(v,T)>FNN$; 
		
		(\romannumeral2) $e'=(u',v')\in \mathcal{B}^+$, if  $\exists e=(u,v)\in \mathcal{B}^+$ and $d\negthinspace f\negthinspace o(u',T)\leq d\negthinspace f\negthinspace o(u,T)$; 
		
		(\romannumeral3) $\exists e=(u,v)$, $d\negthinspace f\negthinspace o(u,T)=Max(\mathcal{B}^+)$, and $\forall e'=(u',v')\in \mathcal{B}^+$, $d\negthinspace f\negthinspace o(u',T)\leq Max(\mathcal{B}^+)$.
	\end{definition}

	$\forall u\in V(T)$ such that $FNN\leq d\negthinspace f\negthinspace o(u,T)\leq Max(\mathcal{B}^+)$,  $\mathcal{B}^+$ contains only the outgoing edges $(u,v)$ of $u$ where $d\negthinspace f\negthinspace o(v,T)$ is larger than $FNN$. Thus, for each node $v$ in $G$, as the number of $v$'s outgoing edges is less than $n$ discussed in Section~\ref{sec:preliminaries}, there is an edge batch $\mathcal{B}^+$ that  $|\mathcal{B}^+|\leq n$.  As an example, given $FNN=6$ and $Max(\mathcal{B}^+)=7$, $\mathcal{B}^+=\{(g,q),(c,h)\}$, in Figure~\ref{fig:basic_idea_example_G}(c).
		
	In order to efficiently access $G$ and obtain the edge batch $\mathcal{B}^+$, we devise a lightweight index, named $\mathcal{N}^+\negthickspace$\textit{-index}. Its indexing algorithm is presented in Procedure~\ref{algo:Indexing}.
	
	\begin{algorithm}[h]
		\floatname{algorithm}{Procedure}
		\caption{Indexing$(FNN,T,G)$}
		\label{algo:Indexing}\small
		\begin{algorithmic}[1]
			\renewcommand{\algorithmicrequire}{\textbf{Input:}}
			\renewcommand{\algorithmicensure}{\textbf{Output:}}
			\renewcommand{\algorithmiccomment}[1]{  #1}
			
			\STATE $i\gets0$, $\mathcal{E}_i\gets$ an empty edge list \label{line:indexing:first_line}
			\STATE \textbf{for} each edge $e=(u,v)$ in $G$ \textbf{do} \label{line:indexing:for}
			\STATE \quad \textbf{if} $d\negthinspace f\negthinspace o(u,T)<FNN$ \textbf{or} $d\negthinspace f\negthinspace o(v,T)\leq FNN$ \textbf{then} \label{line:indexing:for_if1}
			\STATE \quad \quad \textbf{continue} \label{line:indexing:for_if1_continue}
			\STATE \quad \textbf{end if} \label{line:indexing:for_end_if1}
			\STATE \quad $\mathcal{E}_i\gets \mathcal{E}_i\cup e$\label{line:indexing:E_i_cup}
			\STATE \quad \textbf{if} $\mathcal{E}_i$ cannot be enlarged any more \textbf{then}\label{line:indexing:for_if2}
			\STATE \quad \quad Sort the edges in $\mathcal{E}_i$, and then stored them on disk \label{line:indexing:for_if2_sort}
			\STATE \quad \quad $i\gets i+1$, $\mathcal{E}_i\gets$ an empty edge list\label{line:indexing:for_if2_iadd}
			\STATE \quad \textbf{end if}\label{line:indexing:for_end_if2}
			\STATE \textbf{end for}\label{line:indexing:for_end}
			\STATE Sort the edges in $\mathcal{E}_i$\label{line:indexing:sort}
			\STATE Obtain an edge stream $\mathcal{S}_v$ by merging all the ordered edge lists $\mathcal{E}_0$, $\mathcal{E}_1, \dots, \mathcal{E}_i$\label{line:indexing:S_v}
			\STATE \textbf{return} $\mathcal{N}^+\negthickspace$\textit{-index} by compressing $\mathcal{S}_v$\label{line:indexing:return}	 	
		\end{algorithmic}
	\end{algorithm}
	
	To construct $\mathcal{N}^+\negthickspace$\textit{-index}, the edges $e$$=$$(u,v)$ of $G$ are sequentially scanned in Lines~\ref{line:indexing:for}-\ref{line:indexing:for_end}. When $d\negthinspace f\negthinspace o(u,T)$ is smaller than $FNN$ or $d\negthinspace f\negthinspace o(v,T)$ is no larger than $FNN$, $e$ is discarded directly. That is because $FNN<\Upsilon(T)$, based on Theorem~\ref{theorem:min_initialize}. Hence, $\forall (u,v)\in E(G)$, if $d\negthinspace f\negthinspace o(u,T)< FNN$ and $d\negthinspace f\negthinspace o(v,T)\leq FNN$, then $e$ cannot be a forward cross edge as classified by $T$ in Procedure~\ref{algo:Indexing}. 
	
	Otherwise, Line~\ref{line:indexing:E_i_cup} adds $e$ into an edge list $\mathcal{E}_i$, where $\mathcal{E}_i$ is initialized in Line~\ref{line:indexing:first_line} and Line~\ref{line:indexing:for_if2_iadd}. When the edge list $\mathcal{E}_i$ cannot be enlarged anymore~(Line~\ref{line:indexing:for_if2}), we sort the edges in $\mathcal{E}$ and store them on disk, as demonstrated in Line~\ref{line:indexing:for_if2_sort}. Here, the edges in $\mathcal{E}_i$ are ordered by the following rules. (1) Assuming the nodes of $G$ are $u_1, u_2, \dots, u_j,\dots, u_{n}$. (2) $\mathcal{E}_i(u_j)=[]$, if $\forall v\in V(G)$, there is no edge $(u_j,v)$ in $\mathcal{E}_i$; otherwise $\mathcal{E}_i(u_j)=[(u_j,v_1), (u_j,v_2),\dots, (u_j,v_{k})]$, iff, $\forall l\in[i,k]$, $(u_j,v_l)\in \mathcal{E}_i$. (3) The ordered $\mathcal{E}_i$ is composed of $\mathcal{E}_i(u_1),\mathcal{E}_i(u_2),\dots$, and $\mathcal{E}_i(u_p)$ connected end to end. After scanning $G$, an edge stream $\mathcal{S}_v$ is obtained by a multi-way merge of all the ordered edge lists $\mathcal{E}_0, \mathcal{E}_1, \dots, \mathcal{E}_i$, as illustrated in Line~\ref{line:indexing:S_v}. For clarity, an instance is given in Example~\ref{example:indexing}.
	
	\begin{example}
		\label{example:indexing}
		Assuming the input graph $G$ is constituted by all the solid and dotted lines shown in Figure~\ref{fig:indexingExample}(a). One of its spanning trees $T$ is constituted by all the solid lines demonstrated in Figure~\ref{fig:indexingExample}(a). The given value of $FNN$ is $6$. According to Procedure~\ref{algo:Indexing}, given an edge $e=(u,v)$ of $G$, if $d\negthinspace f\negthinspace o(u,T)<FNN$ or $d\negthinspace f\negthinspace o(v,T)\leq FNN$, then $e$ is discarded. Only a small set of edges in $E(G)$ could be used to construct $\mathcal{N}^+\negthickspace$\textit{-index}, which are depicted in Figure~\ref{fig:indexingExample}(b). Here, to illustrated the following process of Procedure~\ref{algo:Indexing}, we assume the main memory sorts $3$ edges at a time. Then, all these edges in Figure~\ref{fig:indexingExample}(b) are divided into three edge lists, where these unordered edge lists are demonstrated in Figure~\ref{fig:indexingExample}(c) and their ordered forms are given in Figure~\ref{fig:indexingExample}(d). After using external sort to merge all these ordered edge lists, an edge streaming can be obtained, as shown in Figure~\ref{fig:indexingExample}(e).
	\end{example}

	\begin{figure}[t]
		\centerline{\includegraphics[scale = 0.4]{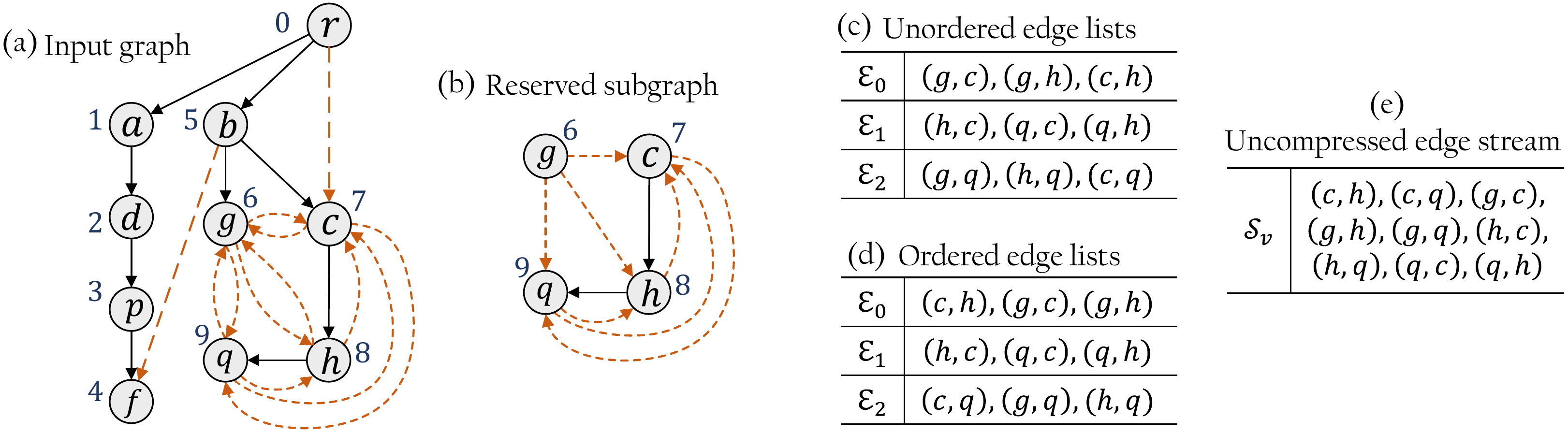}}
		\caption{An example of Procedure~\ref{algo:Indexing}, where (\romannumeral1) the input graph is constituted by all the edges demonstrated in subgraph (a), (\romannumeral2) the input spanning tree is constituted by the black solid edges of subgraph (a), and (\romannumeral3) the value of $FNN$ is $6$.}
		\label{fig:indexingExample}
	\end{figure}

	Finally, $\mathcal{N}^+\negthickspace$\textit{-index} could be obtained by compressing $\mathcal{S}_v$. Here, the compression algorithm used in this paper is \cite{BoVWFI,DBLP:conf/dcc/BoldiV04}, which could achieve the best compression rates~(about 2-3 bits per link), as far as we know. In order to utilize our $\mathcal{N}^+\negthickspace$\textit{-index}, at least two attributes should be maintained for each node $u$ on $T$, which are $u.OF$ and $u.OD$. The former $u.OF$ is the \textit{offset value}~\cite{DBLP:conf/dcc/BoldiV04} of the node $u$ in $\mathcal{N}^+\negthickspace$\textit{-index}. The latter $u.OD$ represents the out-degree of node $u$ on the graph composed by the edges in $\mathcal{N}^+\negthickspace$\textit{-index}.

	Based on $\mathcal{N}^+\negthickspace$\textit{-index}, we construct the edge batch $\mathcal{B}^+$ by procedure ObtainingEdges$(FNN,\mathcal{N}^+$\textit{-index}$,T)$, as shown in Procedure~\ref{algo:ObtainingEdges}.  It is presented for obtaining the edge batch $\mathcal{B}^+$ with a given variable $FNN$ on index $\mathcal{N}^+\negthickspace$\textit{-index}, and for avoiding random disk accesses.

	\begin{algorithm}[t]
		\floatname{algorithm}{Procedure}
		\caption{ObtainingEdges$(FNN ,\mathcal{N}^+$\textit{-index}$, T)$}
		\label{algo:ObtainingEdges}
		\small
		\begin{algorithmic}[1]
			\renewcommand{\algorithmicrequire}{\textbf{Input:}}
			\renewcommand{\algorithmicensure}{\textbf{Output:}}
			\renewcommand{\algorithmiccomment}[1]{  #1}
			
			\STATE $\mathcal{O}\leftarrow$ an empty offset list  \label{line:e+:initializeO}
			\STATE $\kappa\leftarrow0$, $Max(\mathcal{B}^+)\leftarrow FNN$ and $\mathcal{B}^+\leftarrow$ an empty edge set \label{line:e+:initializeOthers}
			\STATE \textbf{while} $Max(\mathcal{B}^+)<n$ \textbf{then} \label{line:e+:while}
			\STATE \quad $u$ is the node on $T$ whose depth-first order is $Max(\mathcal{B}^+)$  \label{line:e+:while_u}
			\STATE \quad \textbf{if} $\kappa+u.OD>n$ \textbf{then} \label{line:e+:while_if_1}
			\STATE \quad \quad $\mathcal{B}^+ \leftarrow \mathcal{B}^+\cup\,$LoadSequentially$(\mathcal{N}^+$\textit{-index},$\mathcal{O})$ \label{line:e+:while_if_1_load}
			\STATE \quad \quad $\mathcal{O}\leftarrow$ an empty offset list \label{line:e+:while_if_1_initializeO}, $\kappa\leftarrow|\mathcal{B}^+|$ \label{line:e+:while_if_1_kappa}
			\STATE \quad \textbf{end if}
			\STATE \quad \textbf{if} $\kappa+u.OD>n$ \textbf{then} \label{line:e+:while_if_2}
			\STATE \quad \quad \textbf{break} \label{line:e+:while_if_2_break}
			\STATE \quad \textbf{end if}
			\STATE \quad \textbf{if} $u.OD\neq0$ \textbf{then} \label{line:e+:while_if_3}
			\STATE \quad \quad $\mathcal{O}=\mathcal{O}\cup \{u.OF\}$ \label{line:e+:while_if_3_O},  $\kappa=\kappa+u.OD$ \label{line:e+:while_if_3_kappa}
			\STATE \quad \textbf{end if}
			\STATE \quad $Max(\mathcal{B}^+)\leftarrow Max(\mathcal{B}^+)+1$ \label{line:e+:while_eta_add_1}
			\STATE \textbf{end while}\label{line:e+:while_end}
			\STATE \textbf{return} $\mathcal{B}^+\cup\,$LoadSequentially$(\mathcal{N}^+$\textit{-index},$\mathcal{O})$ \label{line:e+:return}	
		\end{algorithmic}
	\end{algorithm}

First of all, Lines~\ref{line:e+:initializeO}-\ref{line:e+:initializeOthers} initialize $\mathcal{O}$ and $\mathcal{B}^+$ as an empty edge set and an empty offset list, respectively. $Max(\mathcal{B}^+)=FNN$, and $\kappa=0$. A loop, in Lines~\ref{line:e+:while}-\ref{line:e+:while_end}, runs until the value of $Max(\mathcal{B}^+)$ is no smaller than $n$, or the sum of  $\kappa$ and $u.OD$ is larger than $n$, as shown in Line~\ref{line:e+:while_if_2}. Here, $u$ represents the node on $T$ whose depth-first order on $T$ is equal to $Max(\mathcal{B}^+)$, as demonstrated in Line~\ref{line:e+:while_u}. When the sum of $\kappa$ an $u.OD$ is larger than $n$, as illustrated in Lines~\ref{line:e+:while_if_1}-\ref{line:e+:while_if_1_kappa}, we load the edges that are related to the offsets in $\mathcal{O}$ sequentially to reduce random disk seek operations. Besides, we reset $\mathcal{O}$ and $\kappa$ to an empty offset list and $|\mathcal{B}^+|$, respectively. If $u.OD$ is not equal to $0$ in Line~\ref{line:e+:while_if_3}, then $\mathcal{O}=\mathcal{O}\cup \{u.OF\}$ and $\kappa=\kappa+u.OD$~($\kappa<n$, according to Line~\ref{line:e+:while_if_2}). At the end of each iteration of this loop, $Max(\mathcal{B}^+)$ is updated to $Max(\mathcal{B}^+)+1$ as shown in Line~\ref{line:e+:while_eta_add_1}.

\subsection{How to update $FNN$} \label{sec:algo:FNNupdate}

In the naive EP-DFS, after replacing $T$ with the DFS-Tree $T_\mathcal{B}$ of the graph composed by $T$ and the edge batch $\mathcal{B}$~(Definition~\ref{def:B}), the value of parameter $FNN$ is updated by the smaller value of $\mathcal{C}(T,T_\mathcal{B})$ and $Max(\mathcal{B}^+)$. Even though the way that the naive EP-DFS updates parameter $FNN$ is correct, the difference between the value of $\Upsilon(T)$~(Definition~\ref{def:U(T)}) and the updated $FNN$ in the naive EP-DFS is still large. Since our EP-DFS can be terminated only when the value of $FNN$ is no smaller than $n$, we have put in a lot of efforts to further increase the value of $FNN$, after replacing $T$ with $T_{\mathcal{B}^+}$. Here, $T_{\mathcal{B}^+}$ represents the DFS-Tree of the graph composed by $T$ and the edge batch $\mathcal{B}^+$~(Definiton~\ref{def:B^+}).

Our current result~(Theorem~\ref{theorem:E+}) shows that, we could set the value of $FNN$ to the smaller value of $\mathcal{C}^+(T, T_{\mathcal{B}^+})$ and $Max(\mathcal{B}^+)$, as defined in Definition~\ref{def:S+(T, T)}.
\begin{definition}
	\label{def:S+(T, T)}
	$\mathcal{C}^+(T, T_{\mathcal{B}^+})=d\negthinspace f\negthinspace o(u, T_{\mathcal{B}^+})$, iff, (\romannumeral1) $d\negthinspace f\negthinspace o(u, T)> Max(\mathcal{B}^+)$, and (\romannumeral2) $\forall v\in V(G)$, if $d\negthinspace f\negthinspace o(v, T)> Max(\mathcal{B}^+)$ and $v\neq u$, then $d\negthinspace f\negthinspace o(v,T_{\mathcal{B}^+})>d\negthinspace f\negthinspace o(u, T_{\mathcal{B}^+})$.
\end{definition}

Example~\ref{example:EP-DFS} is an instance of one iteration of EP-DFS.

\begin{example}
	\label{example:EP-DFS}
	Given $T$ in the form of $\mathcal{T}_0$ shown in Figure~\ref{fig:basic_idea_example_G}(a). Assuming $FNN = 0$, and $\mathcal{B}^+=\{(r,a),(r,b),(r,c),$ $(a,d),(d,p),(p,f),(b,f),(b,g),(b,c)\}$. That is $Max(\mathcal{B}^+)=5$, $\mathcal{T}_2$~(Figure~\ref{fig:basic_idea_example_G}(c)) is the DFS-Tree of the graph composed by $T$ and $\mathcal{B}^+$, and $\mathcal{C}^+(\mathcal{T}_0,\mathcal{T}_2)=6$. Hence, $FNN$ will be updated to $6$.
\end{example}

\subsection{Optimization}\label{sec:algo:optimization}

In order to reduce the iteration times of EP-DFS, an optimization is devised in this part. Its pseudo-code is presented in Lines~\ref{line:star:while_if}-\ref{line:star:while_if_end} of Algorithm~\ref{algo:up_down}. For clarity, notation $F_1$ is used to denote the value of $FNN$ initialized in Line~\ref{line:star:initialmap} or updated in Line~\ref{line:star:roundIandReduction}. Notation $F_2$ is used to denote the value of $FNN$ before it is updated in Lines~\ref{line:star:min1}-\ref{line:star:min2}. $\gamma$ is a threshold for determining when to restructure $\mathcal{N}^+\negthickspace$\textit{-index}, which normally will be set to $10\%$, in Line~\ref{line:star:while_if}. $\frac{FNN-F_2}{n}\to 0$ indicates that the difference between the total depth-first orders of $T$ and $T_{\mathcal{B}^+}$ is small. 

\textit{RoundI$(\mathcal{N}^+\negthickspace$\textit{-index}$, FNN, T)$.} This procedure of EP-DFS is used to restructure $T$ with all the edges contained in $\mathcal{N}^+$\textit{-index}, by the following way. First of all, it loads the edges contained in $\mathcal{N}^+$\textit{-index} by batches, and each batch contains at most $n$ edges. For an edge $e=(u,v)$ in this index, $e$ is loaded into the main memory, iff, the depth-first order of $u$ on $T$ is no smaller than $FNN$ and the depth-first order of  $v$ on $T$ is larger than $FNN$. Second, it executes function Rearrangement after every five invocations of DFS. Assuming the input spanning tree $T$ is in the form of $T_{in}$ and the output spanning tree $T$ is in the form of $T_{out}$, then RoundI updates $FNN$ to $\mathcal{C}(T_{in},T_{out})$.

\textit{RoundI}$\&$\textit{Reduction$(\mathcal{N}^+\negthickspace$\textit{-index}$, FNN, T)$.} This procedure is similar to procedure RoundI, which uses edges in $\mathcal{N}^+\negthickspace$\textit{-index} to restructure $T$, and updates $FNN$ to $\mathcal{C}(T_{in},T_{out})$ as discussed above.  In addition to that, it also restructures $\mathcal{N}^+\negthickspace$\textit{-index} with the way of Procedure~\ref{algo:Indexing}. To be specific, it scans $\mathcal{N}^+\negthickspace$\textit{-index} sequentially. For each edge $e=(u,v)$ contained in $\mathcal{N}^+\negthickspace$\textit{-index}, it discards $e$ if $d\negthinspace f\negthinspace o(u,T)<FNN$ or $d\negthinspace f\negthinspace o(v,T)\leq FNN$, since $e$ cannot be a forward cross edge as classified by $T$ in the following iterations of EP-DFS. The rest of the edges, which are not discarded, are processed by batches: $\mathcal{E}_0, \mathcal{E}_1,\dots, \mathcal{E}_i, \dots$. For an edge batch $\mathcal{E}_i$, $T$ is replaced by the DFS-Tree of the graph composed by $T$ and $\mathcal{E}_i$. Then, $\mathcal{E}_i$ is ordered in the same way that $\mathcal{E}_i$ is ordered in Procedure~\ref{algo:Indexing}, an example of which is shown in Example~\ref{example:indexing}. After all the edge bathes are processed, an edge stream $\mathcal{S}_v$ is obtained by merging all the ordered edge batches~(lists), with external sort. Then, $\mathcal{N}^+\negthickspace$\textit{-index} is replaced by the compressed $\mathcal{S}_v$.

\subsection{Discussion and Implementation details} \label{sec:algo:discussion}

Compared with traditional algorithms~(Section~\ref{sec:related_works}), EP-DFS requires  \textit{simpler CPU calculation}, \textit{fewer random disk accesses} and \textit{lower memory space consumption}. The reason is as follows. Firstly, EP-DFS prunes the edges of the input graph $G$ efficiently, and only based on the total depth-first order of the nodes on $T$. Secondly, EP-DFS accesses the input disk-resident graph only sequentially; EP-DFS accesses the edges in $\mathcal{N}^+\negthickspace$\textit{-index} either sequentially or ObtainingEdges. Third, EP-DFS only needs to hold $2n$ edges of $G$ in the main memory, and keeps $3$ attributes for each node on $T$, i.e. its depth-first order, $u.OF$ and $u.OD$. 

Our implementation method for EP-DFS is presented below, which could protect it from being affected by the garbage collection mechanism of the implementation language, e.g. C\# and java. We assume each node of the input graph $G$ could be represented by a $32$-bit integer. Then, an integer array $\mathbb{A}_1$ of length $n$ is used for maintaining the node list of $G$. And an integer array $\mathbb{A}_2$ of length $4 n$ is used for maintaining $2 n$ edges of $G$, where a half is related to $T$ while the others correspond to $\mathcal{B}^+$. Plus, an integer array $\mathbb{A}_3$ of length $3 n$ is used for maintaining node attributes. Since we assume that each node could be represented as a $32$-bit integer, it is obvious that $d\negthinspace f\negthinspace o(u,T)$ and $u.OD$ can be represented as a $32$-bit integer. The reason why $u.OF$ can also be denoted as a $32$-bit integer is discussed in \cite{BoVWFI}.

The in-memory spanning tree $T$ of $G$ and one edge batch are maintained in the main memory by arrays $\mathbb{A}_1$ and $\mathbb{A}_2$. The elements in $\mathbb{A}_2$ are organized as one-way linkedlists. They may represent (\romannumeral1) the unused memory space, (\romannumeral2) the out-neighborhoods of nodes on $T$, and (\romannumeral3) a stack discussed later. We let $\mathbb{A}_1[i]$ to denote the $i$th element of $\mathbb{A}_1$ and let $\mathbb{A}_2[i]$ to denote the $i$th element of $\mathbb{A}_2$. Assuming $\mathbb{A}_1[i]=k$, and node $v_i$, $v_j$ and $v_l$ are the $i$th node, the $j$th node and the $l$th node of $G$, respectively. Then, (\romannumeral1) $\mathbb{A}_2[2k+1]=j$ represents the rightmost child of $v_i$ is $v_j$, and (\romannumeral2) if $\mathbb{A}_2[2k]=l\in[0,2n]$, then $v_l$ is the left brother of $v_j$. Thus, every two integers of $\mathbb{A}_2$ are used to represent an edge $(u,v)$, and each integer of $\mathbb{A}_1$ is used to denote the index of an edge in $\mathbb{A}_2$ whose head is the rightmost child of its tail. To be more specific, an example of how these two arrays work $\mathbb{A}_1$ and $\mathbb{A}_2$ in EP-DFS is given in Example~\ref{example:implementation}.

\begin{example}
	\label{example:implementation}
	Supposing $\mathcal{G}$ of Figure~\ref{fig:basic_idea_example_G} is the input graph. The nodes $r,a,b,c,d,f,g,h,p,q$ in $G_1$ are mapped into integers $0,1,2,3,4,5,6,7,8,9$. An array $\mathbb{A}_1$ of length $10$ is initialized, and an array $\mathbb{A}_2$ of length $10\times4$ is also initialized. The initialized arrays are in from (1) of Table~\ref{tab:implementation}. Then, firstly, we load the spanning $\mathcal{T}_0$, as shown in Figure~\ref{fig:basic_idea_example_G}(a), into the main memory, when these two arrays are in form (2) of Table~\ref{tab:implementation}. Secondly, we add an edge batch $\{e_1,e_2,e_3\}$ as shown in Figure~\ref{fig:basic_idea_example_G}(a), when these two arrays are in form (3) of Table~\ref{tab:implementation}. Secondly, when we execute an in-memory DFS algorithm to replace $\mathcal{T}_0$ with the DFS-Tree~(i.e. $\mathcal{T}_3$ shown in Figure~\ref{fig:basic_idea_example_G}(d)) of the graph composed by $\mathcal{T}_0$ and edge batch $\{e_1,e_2,e_3\}$, the two arrays are in from (4) of Table~\ref{tab:implementation}.	
\end{example}

\begin{table}[t]
	\caption{Four forms of arryas $\mathbb{A}_1$ and $\mathbb{A}_2$, in which $T$ represents the in-memory spanning tree maintained for the input graph, $\mathcal{T}_0$ is depicted in Figure~\ref{fig:basic_idea_example_G}(a), $e_1,e_2,e_3$ are edges shown in Figure~\ref{fig:basic_idea_example_G}(a), and $\mathcal{T}_3$ is depicted in Figure~\ref{fig:basic_idea_example_G}(d).}
	\label{tab:implementation}
	\small
	\begin{center}
		\setlength{\tabcolsep}{1.4mm}{
			\begin{tabular}{c|c|c|c|c|c|c|c|c|c|c|c|c|c|c|c|c|c|c|c|c|c|c}

				\hline 
				\multirow{6}*{(1)}&\multicolumn{2}{c|}{Description}&\multicolumn{20}{l}{After initialization}\\
				\cline{2-23}
				&\multirow{2}*{$\mathbb{A}_1$}	&Index&r&a&b&c&d&f&g&h&p&q&\multicolumn{5}{c|}{Empty space Index}&\multicolumn{5}{c}{\quad}\\
				&&Elements&-&-&-&-&-&-&-&-&-&-&\multicolumn{5}{c|}{19}&\multicolumn{5}{c}{\quad}\\
				\cline{2-23}
				&\multirow{3}*{$\mathbb{A}_2$}&Index/2&0&1&2&3&4&5&6&7&8&9&10&11&12&13&14&15&16&17&18&19\\
				&&\multirow{2}*{Elements}&-&0&1&2&3&4&5&6&7&8&9&10&11&12&13&14&15&16&17&18\\
				&&&-&-&-&-&-&-&-&-&-&-&-&-&-&-&-&-&-&-&-&-\\
				\hline \hline 
				\multirow{6}*{(2)}&\multicolumn{2}{c|}{Description}&\multicolumn{20}{l}{When $\mathcal{T}_0$ is maintained in the main memory}\\
				\cline{2-23}
				&\multirow{2}*{$\mathbb{A}_1$}	&Index&\textcolor{red}{r}&\textcolor{red}{a}&\textcolor{red}{b}&\textcolor{red}{c}&\textcolor{red}{d}&f&g&\textcolor{red}{h}&p&q&\multicolumn{5}{c|}{Empty space Index}&\multicolumn{5}{c}{\quad}\\
				&&Elements&\textcolor{red}{17}&\textcolor{red}{16}&\textcolor{red}{14}&\textcolor{red}{13}&\textcolor{red}{12}&-&-&\textcolor{red}{11}&-&-&\multicolumn{5}{c|}{10}&\multicolumn{5}{c}{\quad}\\
				\cline{2-23}
				&\multirow{3}*{$\mathbb{A}_2$}
				&Index/2&0&1&2&3&4&5&6&7&8&9&10&\textcolor{red}{11}&\textcolor{red}{12}&\textcolor{red}{13}&\textcolor{red}{14}&\textcolor{red}{15}&\textcolor{red}{16}&\textcolor{red}{17}&\textcolor{red}{18}&\textcolor{red}{19}\\
				&&\multirow{2}*{Elements}&-&0&1&2&3&4&5&6&7&8&9&\textcolor{red}{-}&\textcolor{red}{-}&\textcolor{red}{-}&\textcolor{red}{15}&\textcolor{red}{-}&\textcolor{red}{-}&\textcolor{red}{18}&\textcolor{red}{19}&\textcolor{red}{-}\\
				&&&-&-&-&-&-&-&-&-&-&-&-&\textcolor{red}{q}&\textcolor{red}{p}&\textcolor{red}{h}&\textcolor{red}{g}&\textcolor{red}{f}&\textcolor{red}{d}&\textcolor{red}{c}&\textcolor{red}{b}&\textcolor{red}{a}\\
				\hline
				\hline
				\multirow{6}*{(3)}&\multicolumn{2}{c|}{Description}&\multicolumn{20}{l}{When $\mathcal{T}_0$ and an edge batch $\{e_1,e_2,e_3\}$ are maintained in memory }\\
				\cline{2-23}
				&\multirow{2}*{$\mathbb{A}_1$}	&Index&r&a&\textcolor{red}{b}&c&d&f&\textcolor{red}{g}&h&\textcolor{red}{p}&q&\multicolumn{5}{c|}{Empty space Index}&\multicolumn{5}{c}{\quad}\\
				&&Elements&17&16&\textcolor{red}{9}&13&12&-&\textcolor{red}{8}&11&\textcolor{red}{10}&-&\multicolumn{5}{c|}{7}&\multicolumn{5}{c}{\quad}\\\cline{2-23}
				&\multirow{3}*{$\mathbb{A}_2$}&Index/2&0&1&2&3&4&5&6&7&\textcolor{red}{8}&\textcolor{red}{9}&\textcolor{red}{10}&11&12&13&14&15&16&17&18&19\\
				&&\multirow{2}*{Elements}&-&0&1&2&3&4&5&6&\textcolor{red}{-}&\textcolor{red}{14}&\textcolor{red}{-}&-&-&-&15&-&-&18&19&-\\
				&&&-&-&-&-&-&-&-&-&\textcolor{red}{q}&\textcolor{red}{c}&\textcolor{red}{f}&q&p&h&g&f&d&c&b&a\\
				\hline
				\hline
				\multirow{6}*{(4)}&\multicolumn{2}{c|}{Description}&\multicolumn{20}{l}{When $T$ is replaced to the DFS-Tree $\mathcal{T}_3$ of the graph composed by $\mathcal{T}_0$ and $\{e_1,e_2,e_3\}$}\\
				\cline{2-23}
				&\multirow{2}*{$\mathbb{A}_1$}	&Index&\textcolor{red}{r}&a&b&c&d&f&g&\textcolor{red}{h}&p&q&\multicolumn{5}{c|}{Empty space Index}&\multicolumn{5}{c}{\quad}\\
				&&Elements&\textcolor{red}{18}&16&9&13&12&-&8&\textcolor{red}{-}&10&-&\multicolumn{5}{c|}{17}&\multicolumn{5}{c}{\quad}\\
				\cline{2-23}
				&\multirow{3}*{$\mathbb{A}_2$} &Index/2&0&1&2&3&4&5&6&7&8&9&10&\textcolor{red}{11}&12&13&\textcolor{red}{14}&\textcolor{red}{15}&16&\textcolor{red}{17}&18&19\\
				&&\multirow{2}*{Elements}&-&0&1&2&3&4&5&6&-&14&-&\textcolor{red}{15}&-&-&\textcolor{red}{-}&\textcolor{red}{7}&-&\textcolor{red}{11}&19&-\\
				&&&-&-&-&-&-&-&-&-&q&c&f&\textcolor{red}{-}&p&h&\textcolor{red}{g}&\textcolor{red}{-}&d&\textcolor{red}{-}&b&a\\
				\hline
			\end{tabular}
		}
	\end{center}
\end{table}

One advantage, as mentioned above, of this implementation method is that it could protect EP-DFS from being affected by implementation languages, because once initialized, the above
arrays will be used until the end of the algorithms. Moreover, based on this implementation method, EP-DFS does not have to maintain an external stack when it needs to replace $T$. Instead, as (\romannumeral1) a stack is also a one-way linkedlist, and (\romannumeral2) when a node is added into the stack, there must be an edge removed from the main memory, the stack is also maintained by $\mathbb{A}_2$ as a one-way linkedlist as we manage all the empty space of $\mathbb{A}_2$. For each node $u$ in $G$, to record the most recent node $w$ visited before $u$ where $(w,u)$ belongs to $G$, we use the in-memory attribute space of the depth-first orders. That is because, we only need to preserve the order among a very small subset of $V(G)$ to update $FNN$, whose depth-first orders are in the range of $[FNN, Max(\mathcal{B}^+)]$, in order to update $FNN$ which is no more than $Max(\mathcal{B}^+)+1$. There are two ways to preserve that order. First, storing them directly on disk. Second, there is no additional disk access required when $k_2<2\times k_1$, where (\romannumeral1) $k_1$ represents the number of the nodes whose depth-first orders on $T$ are smaller than $FNN$; (\romannumeral2) $k_2$ represents the number of the nodes whose depth-first orders on $T$ are in the range of $[FNN, Max(\mathcal{B}^+)]$. The reason is discussed below. As when a node $u$ of $G$ has a depth-first order on $T$ which is no larger than $FNN$, the edges whose tails are $u$ are no need to be loaded into the main memory in EP-DFS, so that it is no need to maintain attributes $u.OF$ and $u.OD$. \textit{Of course, besides all the arrays mentioned above~($\mathbb{A}_1$ and $\mathbb{A}_2$), an integer array of length $\lceil\frac{n}{32}\rceil$ is required for recording  whether a node is marked as visited or not, when $G$ has more than $2^{31}$ nodes; otherwise, we will use the highest bit of each element in $\mathbb{A}_1$ for recording that.} 

Furthermore, there are also two benefits of our implementation method. First, procedure Rearrangement is not related to any input or output I/Os. One reason is that $T$ is stored in the main memory, where $T$ represents the input spanning tree that needs to be rearranged. Procedure Rearrangement rearranges all the out-neighborhoods of the nodes on $T$ only based on their weights, as discussed in Section~\ref{sec:algo:FNNinitializatin}. Another reason is that, no external space is used to store the weights of the nodes on $T$. In fact, we use $\mathbb{A}_2$ to maintain the node weights in procedure Rearrangement. As a node $u$ in EP-DFS and its weight $\mathcal{W}(u)$ are represented by integers, we denote $(u,\mathcal{W}(u))$ as a directed edge, and store it in $\mathbb{A}_2$ by letting $\mathcal{W}(u)$ to be the rightmost child of $u$ in $\mathbb{A}_2$. At the end of procedure Rearrangement, all the edges $(u,\mathcal{W}(u))$ could be easily and efficiently removed from $\mathbb{A}_2$. Second, it gives EP-DFS a chance to avoid the operation of sorting edge batches in procedure Indexing, without additional memory space requirement. For clarity, when the edge sorting operation required in procedure Indexing is utilized, the main memory only maintains a spanning tree $T$ for $G$, and no in-memory DFS algorithm is used in procedure Indexing. Each time it gets a set of edges $\mathcal{E}$ which is a subset of $E(G)$ and contains at most $2\times n$ edges. EP-DFS could use $\mathbb{A}_2$ to maintain all the elements of the edges of $\mathcal{E}$, by storing all the elements of $\mathbb{A}_1$ and $\mathbb{A}_2$ on disk temporarily and reinitializing these two arrays. Each time at most $2n$ edges can be loaded into the main memory since $\mathbb{A}_2$ in this implementation only has the ability of storing $2n$ edges. It is worth noting that, after $2n$ edges are loaded into the main memory, these edges could be immediately output on disk, because we already the out-neighbors of each node on the graph composed by the edges in $\mathcal{E}$.

This paper does not present the asymptotic upper bounds for the time and I/O consumption of EP-DFS. Because it is too complicated and deserves another paper, which will be our
future work. Given the size of the available
memory space, the I/O and CPU costs of EP-DFS are related
to (1) the initialized value of parameter $FNN$, in Line~\ref{line:star:FNN} of Algorithm~\ref{algo:up_down}; (2) the total iteration times of the loop in Lines~\ref{line:star:while_true}-\ref{line:star:end_while} of Algorithm~\ref{algo:up_down}; (3) the number of the edges pruned from $\mathcal{N}^+\negmedspace$\textit{-index} after executing procedure RoundI$\&$Reduction. Since the distribution of the edges and nodes in $G$ is unknown, and the order of $E(G)$ stored on disk is also unknown, it is hard to estimate the initialized value of parameter $FNN$ and the number of the edges contained in $\mathcal{N}^+\negmedspace$\textit{-index} in Line~\ref{line:star:index}. The number of the edges that contained in the index affects the total iteration times of the loop in Lines~\ref{line:star:while_true}-\ref{line:star:end_while} of Algorithm~\ref{algo:up_down}. Also, if some edges are pruned from $\mathcal{N}^+\negmedspace$\textit{-index} by procedure RoundI$\&$Reduction, then the total iteration times of such loop will also be affected. 

Our current results are shown below, which is under one assumption that the edges of $G$ are evenly distributed on disk. Supposing $B$ is the block size, $c$ is the value of parameter $FNN$ returned by Procedure~\ref{algo:intialRound}, and $p=\frac{n-c}{n}<1$. First of all, the I/O consumption of Procedure~\ref{algo:intialRound} is $O(\frac{2m}{B})=O(\frac{m}{B})$ since it has to sequentially loads $G$ into the main memory twice. The time consumption of Procedure~\ref{algo:intialRound} is $O(\lceil\frac{m}{n}\rceil\times(2n+n)+\lceil\frac{m}{n}\rceil\times(2n))=O(m)$, because (\romannumeral1) the in-memory spanning tree replacing operation scans at most $2n$ edges; (\romannumeral2) Procedure~\ref{algo:Rearrangement} requires $O(n)$ time, which is discussed in the followings; (\romannumeral3) each loop of Procedure~\ref{algo:intialRound} runs $\lceil\frac{m}{n}\rceil$ times. Assuming the nodes in $G$ are $v_1,v_2,\dots, v_n$, and in one invocation Procedure~\ref{algo:Rearrangement}, each node $v_i$~($i\in[1,n]$) of $G$ has $k_i$ out-neighbors on $T$. Thus, the time consumption of Procedure~\ref{algo:Rearrangement} is $O\big(\Sigma_{i\in[1,n]} \lceil\frac{k_i}{10^4}\rceil 10^4\log{10^4}\big)=O\big(\Sigma_{i\in [1,n]}k_i\big)=O(n)$. Secondly, Procedure~\ref{algo:Indexing} requires $O\big(\frac{m+2(p^2m)}{B}\big)=O(\frac{p^2m}{B})$ space on disk, if the edges are evenly distributed in $G$. That is because, when the edges are evenly distributed in $G$, the maximum number of the edges in $\mathcal{N}^+\negmedspace$\textit{-index} is $(\frac{n-c}{n})^2m\negmedspace=\negmedspace p^2m$. Thus, when the edges are evenly distributed in $G$, the external space required by $\mathcal{N}^+\negmedspace$\textit{-index} is at most $3\negmedspace\times\negmedspace p^2m$ bits, as discussed in Section~\ref{sec:algo:Bobtain}. The time consumption of Procedure~\ref{algo:Indexing} includes (\romannumeral1) $O(\frac{m}{B})$, the time cost of scanning all the edges in $G$, (\romannumeral2) $O(|\mathcal{E}_i|)$, the time cost of sorting the edges in $\mathcal{E}_i$~(Line~\ref{line:indexing:for_if2}, Procedure~\ref{algo:Indexing}), (\romannumeral3) $O(\frac{2|\mathcal{E}_i|}{m})$, the time cost of storing $\mathcal{E}_i$ on disk and accessing $\mathcal{E}_i$ from disk, (\romannumeral4) $O(|\mathcal{E}_0|+|\mathcal{E}_1|+\dots+|\mathcal{E}_i|)=O(p^2m)$, the time cost of merging all the ordered edge lists $\mathcal{E}_0, \mathcal{E}_1, \dots, \mathcal{E}_i$ as demonstrated in Line~\ref{line:indexing:S_v} and (\romannumeral5) $O(p^2m)$, the time cost of compressing $\mathcal{S}_v$. Hence, the time cost of Procedure~\ref{algo:Indexing} is $O(\frac{m}{B}+p^2m)$. Thirdly, the time and I/O costs of Procedure~\ref{algo:ObtainingEdges} are obvious which are $O(n)$ and $O(\frac{n}{m})$, respectively. Then, according to the above discussions, the time and I/O costs of function RoundI are $O(\lceil\frac{p^2m}{n}\rceil(n))=O(p^2m)$ and $O(\frac{p^2m}{B})$, respectively. The time and I/O costs of function RoundI$\&$Reduction are $O(p^2m)$ and $O(\frac{p^2m}{B})$, respectively.

\subsection{Correctness analysis} \label{sec:algo:correctness}

In this section, we present the correctness analysis for Algorithm~\ref{algo:naive_up_down} and Algorithm~\ref{algo:up_down}. Assuming Min$(i,j)$ represents the smaller value of $i$ and $j$.


Theorem~\ref{theorem:min_initialize} proves the correctness of the initialization of parameter $FNN$ in Section~\ref{sec:algo:FNNinitializatin}.

\begin{theorem}\label{theorem:min_initialize}
	$\Upsilon(T_k)\geq\mathcal{C}(T,T_k)$, where (\romannumeral1) $E(G)$ is divided into a series of edge batches $B_1, B_2, \dots, B_k$, where $B_1\cup B_2\cup\dots\cup B_k=E(G)$ and if $1\leq i< j\leq k$, then $B_i\cap B_j=\phi$; (\romannumeral2) $\forall i\in [1,k]$, $T_i$ represents the DFS-Tree of the graph composed by $T_{i-1}$ and $B_i$, assuming $T$ is in the form of $T_0$.
\end{theorem}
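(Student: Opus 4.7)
The plan is to proceed by induction on $k$. For the base case $k=1$, the sole batch $B_1$ equals $E(G)$, so $T_1$ is the DFS-tree of $T_0 \cup B_1 = G$ itself and is therefore a genuine DFS-tree of $G$ built under Stipulation~\ref{sti:stipulation}; any such tree has no forward cross edges by the definition of DFS, hence $\Upsilon(T_1) = +\infty \geq \mathcal{C}(T_0, T_1)$.

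For the inductive step I would first record a monotonicity lemma: whenever the inductive hypothesis $\Upsilon(T_{j-1}) \geq \mathcal{C}(T_0, T_{j-1})$ holds, we have $\mathcal{C}(T_0, T_j) \geq \mathcal{C}(T_0, T_{j-1})$. This is a transitivity argument: for any $u$ with $d\negthinspace f\negthinspace o(u,T_0) < \mathcal{C}(T_0, T_{j-1})$ the definition of $\mathcal{C}$ gives $d\negthinspace f\negthinspace o(u,T_{j-1}) = d\negthinspace f\negthinspace o(u,T_0)$, and from Observation~\ref{obs:second} applied to the step $T_{j-1} \to T_j$ combined with the hypothesis we obtain $\mathcal{C}(T_{j-1}, T_j) \geq \Upsilon(T_{j-1}) \geq \mathcal{C}(T_0, T_{j-1})$, which in turn forces $d\negthinspace f\negthinspace o(u,T_j) = d\negthinspace f\negthinspace o(u,T_{j-1}) = d\negthinspace f\negthinspace o(u,T_0)$.

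Next I would argue the inductive step itself by contradiction: assume some $(u,v) \in E(G)$ is a forward cross edge on $T_k$ with $d\negthinspace f\negthinspace o(u,T_k) < \mathcal{C}(T_0, T_k)$. A short Stipulation~\ref{sti:stipulation} argument on the DFS that builds $T_k$ shows that $(u,v) \notin T_{k-1} \cup B_k$: otherwise, because $u$ has the smaller dfo on $T_k$ it is discovered first, so while processing $u$ the DFS would use $(u,v)$ to reach the still-unvisited $v$, making $v$ a descendant of $u$ on $T_k$ and contradicting ``forward cross''. Since $E(G) = B_1 \sqcup \cdots \sqcup B_k$, we therefore have $(u,v) \in B_j$ for some $j < k$, and the same Stipulation~\ref{sti:stipulation} argument applied at step $j$ implies that $(u,v)$ is not a forward cross edge on $T_j$ --- it must be classified there as tree, forward, backward, or backward cross.

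The main obstacle, and the step I expect to require the most care, is to rule out any chain of transitions $T_j \to T_{j+1} \to \cdots \to T_k$ that flips $(u,v)$ into a forward cross edge while keeping $d\negthinspace f\negthinspace o(u,T_k) < \mathcal{C}(T_0, T_k)$. The chain-reaction catalog of Section~\ref{sec:chain_reaction} (summarised in Figure~\ref{fig:chain_reaction}(J)) shows that only backward and backward-cross edges can turn into forward cross edges in a single step, and each such flip at step $T_{l-1}\to T_l$ is driven by a forward cross edge within $B_l$ whose tail has dfo on $T_{l-1}$ of at least $\Upsilon(T_{l-1})$. Iteratively applying Observation~\ref{obs:second} together with the monotonicity lemma to propagate stability of $u$'s dfo across all intermediate trees $T_{j}, T_{j+1}, \ldots, T_{k-1}$ would then force $d\negthinspace f\negthinspace o(u,T_k) \geq \mathcal{C}(T_0, T_k)$, yielding the required contradiction and closing the induction.
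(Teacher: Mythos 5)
There is a genuine gap, and also a structural problem with the induction itself. Your inductive hypothesis is $\Upsilon(T_{j-1})\geq\mathcal{C}(T_0,T_{j-1})$ for the intermediate trees, but this is not an instance of Theorem~\ref{theorem:min_initialize} for fewer batches (the theorem requires the batches to exhaust $E(G)$), and it is in fact false for intermediate trees: $\Upsilon$ in Definition~\ref{def:U(T)} ranges over \emph{all} forward cross edges of $G$, including edges sitting in batches not yet processed. Concretely, take $k=2$, let $e^*=(u,v)$ be a forward cross edge of $T_0$ with $d\negthinspace f\negthinspace o(u,T_0)=2$, put $e^*$ alone in $B_2$, and let $B_1$ contain no forward cross edges of $T_0$; then $T_1=T_0$, so $\mathcal{C}(T_0,T_1)$ is maximal while $\Upsilon(T_1)=2$. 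In the same example $\mathcal{C}(T_0,T_2)<\mathcal{C}(T_0,T_1)$, so your monotonicity lemma rests on a premise that is simply unavailable, and the quantity it asserts to be monotone is not. The paper avoids this entirely: its proof is not an induction on $k$ but a direct argument about an arbitrary edge $(x,y)\in B_i$ whose tail has a depth-first order below $\mathcal{C}(T,T_k)$, using only that such a tail's order is unchanged and that $(x,y)$ is not a forward cross edge as classified by $T_i$ (the tree produced when its own batch is consumed).

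The second, and more serious, issue is that the step you yourself flag as ``the main obstacle'' is precisely the content of the theorem, and the mechanism you gesture at does not close it. Observation~\ref{obs:second} and prefix stability control the nodes whose depth-first orders lie below the threshold; but a transition of $(u,v)$ from backward or backward-cross to forward-cross is driven by the \emph{head} $v$ moving from before $u$ to after $u$, and $v$'s position is not a priori in the controlled prefix, so ``iterating Observation~\ref{obs:second}'' does not force $d\negthinspace f\negthinspace o(u,T_k)\geq\mathcal{C}(T_0,T_k)$. What is actually needed — and what the paper supplies — is a case analysis on the type of $(x,y)$ as classified by $T_i$: if it is a tree or forward edge, Stipulation~\ref{sti:stipulation} prevents $y$ from ever leaving the subtree rooted at the position-frozen node $x$ in a way that yields a forward cross edge; if it is a backward or backward-cross edge, then $d\negthinspace f\negthinspace o(y)<d\negthinspace f\negthinspace o(x)$ places $y$ itself inside the frozen prefix, so the order between $x$ and $y$ can never invert. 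Your proposal correctly handles the base case and the ``white-path'' argument showing $(u,v)\notin T_{k-1}\cup B_k$, and it correctly identifies where the difficulty lies, but it neither carries out this case analysis nor offers a working substitute for it, so the proof does not go through as written.
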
	
\begin{proof}
	According to Definition~\ref{def:S(T,T_i)} and Definition~\ref{def:U(T)}, the statement is correct, iff, the following statement is correct: ``For any edge $e$ in $G$, if the tail of $e$ has a depth-first order that is smaller than $\mathcal{C}(T,T_k)$, then $e$ is not a forward cross edge of $G$ as classified by $T_k$.'' Without loss of generality, we assume that (1) an edge $(x,y)\in B_i$; (2) $d\negthinspace f\negthinspace o(x,T_k)<\mathcal{C}(T,T_k)$. Thus, based on (1), $(x,y)$ is not a forward cross edge as classified by $T_i$, since $(x,y)$ belongs to batch $B_i$ and $T_i$ is the DFS-Tree of the graph composed by $T_{i-1}$ and $B_i$. Besides, based on (2) and Definition~\ref{def:S(T,T_i)}, $d\negthinspace f\negthinspace o(x,T)=d\negthinspace f\negthinspace o(x,T_1)=\dots=d\negthinspace f\negthinspace o(x,T_k)$, that is, the depth-first order of $x$ is unchanged during the whole process. 
	
	\noindent Hence, for all the nodes $w$, if $d\negthinspace f\negthinspace o(w,T_k)<d\negthinspace f\negthinspace o(x,T_k)$, then $d\negthinspace f\negthinspace o(w,T)=d\negthinspace f\negthinspace o(w,T_1)=\dots=d\negthinspace f\negthinspace o(w,T_k)$, because of $\mathcal{C}(T,T_k)<d\negthinspace f\negthinspace o(x,T_k)$.  Since $(x,y)$ is not a forward cross edge as classified by $T_i$, it could be a tree edge, a forward edge, a backward edge, and a backward cross edge as classified by $T_i$. For one thing, when $(x,y)$ is a tree/forward edge as classified by $T_i$, that is $d\negthinspace f\negthinspace o(x,T_i)<d\negthinspace f\negthinspace o(y,T_i)$ and $x$ is one of the ancestors of $y$ on $T_i$. To be a forward cross edge as classified by $T_k$, node $y$ should be removed from the subtree rooted at $x$, since a forward cross edge has a tail and a head which are from diffident subtrees of $T$, as discussed in Definition~\ref{def:edge_types}. That is the depth-first order of $y$ should be increased to be larger than the that of the right brother of $x$, which is impossible according to Stipulation~\ref{sti:stipulation}. For another, when $(x,y)$ is a backward~(cross) edge as classified by $T_i$. In this case $(x,y)$ should still be a backward edge as classified by any tree among $T_{i+1},T_{i+2},\dots,T_k$, since the depth-first order of $y$ is smaller than that of $x$ during the whole process as discussed above. \hfill$\Box$
\end{proof}

The correctness of Procedure~\ref{algo:Rearrangement} is proved in Theorem~\ref{theorem:node_weitht}.

\begin{theorem}\label{theorem:node_weitht}
	For a node $u$ in $T$, when we compute the weight of $u$, the weights of the children of $u$ all have been computed, in procedure Rearrangement.
\end{theorem}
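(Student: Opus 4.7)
The plan is to exploit the fact that procedure Rearrangement iterates $i$ downward from $n-1$ to $FNN$, and at iteration $i$ processes the unique node $u$ with $d\negthinspace f\negthinspace o(u,T)=i$. So the statement reduces to showing that every child $v$ of $u$ in $T$ satisfies $d\negthinspace f\negthinspace o(v,T) > d\negthinspace f\negthinspace o(u,T)$; once we have that, $v$ was processed in an earlier iteration (one where the loop variable was strictly greater than $i$), so $\mathcal{W}(v)$ is already available when $\mathcal{W}(u)$ is computed in Line~\ref{line:nodeweight:Wcompute}.

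The first step is therefore to establish the following parent--child inequality on any ordered spanning tree $T$ handled by the algorithm: if $v$ is a child of $u$ on $T$, then $d\negthinspace f\negthinspace o(v,T) > d\negthinspace f\negthinspace o(u,T)$. This is an immediate consequence of the DFS procedure together with Stipulation~\ref{sti:stipulation}: a node is declared \emph{visited} strictly before any of its children on $T$ are, because DFS has to reach $u$ first in order to subsequently descend to $v$ via the tree edge $(u,v)$. Hence the depth-first orders along any root-to-leaf path are strictly increasing, and in particular any child of $u$ has a strictly larger depth-first order than $u$.

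The second step is to combine this with the loop schedule. Let $v_1,\dots,v_k$ be the children of $u$ on $T$ and let $i_j = d\negthinspace f\negthinspace o(v_j,T)$. By the first step, $i_j > i = d\negthinspace f\negthinspace o(u,T)$ for every $j\in\{1,\dots,k\}$, and since $FNN\leq i$, we also have $i_j \leq n-1$. Thus each $i_j$ lies in the range $(i, n-1]$, which means that the loop in Lines~\ref{line:nodeweight:for}--\ref{line:nodeweight:endforfinal} already visited $v_j$ in a previous iteration and executed Line~\ref{line:nodeweight:Wcompute} for it. Therefore $\mathcal{W}(v_1),\dots,\mathcal{W}(v_k)$ have all been assigned before the current iteration writes $\mathcal{W}(u)$.

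There is essentially no obstacle beyond invoking the parent--child inequality for depth-first order; the rest is a bookkeeping argument on the loop counter. The one subtlety worth flagging explicitly is that the inequality must hold for the \emph{current} tree $T$ passed into Rearrangement (not just the final DFS-Tree of $G$): this is fine because every spanning tree maintained by EP-DFS is treated as an ordered tree consistent with Stipulation~\ref{sti:stipulation}, so the DFS argument above applies uniformly to each invocation of the procedure.
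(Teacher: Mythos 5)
Your proposal is correct and follows essentially the same route as the paper: the paper observes that the total depth-first order of $T$ is its preorder and hence the reverse-order loop reaches every child before its parent, which is exactly the parent--child inequality $d\negthinspace f\negthinspace o(v,T)>d\negthinspace f\negthinspace o(u,T)$ you establish and combine with the descending loop counter. Your version just makes the bookkeeping explicit.
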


\begin{proof}
	Based on the definitions in Section~\ref{sec:preliminaries}, the total depth-first order $\mathcal{O}_d$ of a tree $T$ is also the preorder~\cite{10.5555/500824} of $T$. Thus, if procedure Rearrangement computes the nodes in $T$ by the reverse order of $\mathcal{O}_d$, the statement is valid. \hfill$\Box$
\end{proof} 

Then, the correctness of Algorithm~\ref{algo:up_down} updating $FNN$ in each iteration is proved in Theorem~\ref{theorem:E+}.

\begin{theorem}\label{theorem:E+}
	$\Upsilon(T_{\mathcal{B}^+})\geq Min\big(\mathcal{C}^+(T,T_{\mathcal{B}^+}),Max(\mathcal{B}^+)+1\big)$, if $\Upsilon(T)\geq FNN$, where $T_{\mathcal{B}^+}$ is the DFS-Tree of the graph composed by $T$ and $\mathcal{B}^+$.
\end{theorem}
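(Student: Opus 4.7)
The plan is to follow the contradiction template used for Theorem~\ref{theorem:reduction_edges}, but with $\mathcal{B}^+$ and $\mathcal{C}^+$ playing the roles of $\mathcal{B}$ and $\mathcal{C}$. I would start by assuming $\Upsilon(T_{\mathcal{B}^+}) < \min\bigl(\mathcal{C}^+(T,T_{\mathcal{B}^+}),\, Max(\mathcal{B}^+)+1\bigr)$, which exhibits a forward cross edge $e=(u,v)$ of $G$ classified by $T_{\mathcal{B}^+}$, with $d\negthinspace f\negthinspace o(u,T_{\mathcal{B}^+}) < \min\bigl(\mathcal{C}^+(T,T_{\mathcal{B}^+}),\, Max(\mathcal{B}^+)+1\bigr)$. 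The goal is to force $(u,v) \in \mathcal{B}^+$; once this is done, Stipulation~\ref{sti:stipulation} --- which governs how $T_{\mathcal{B}^+}$ is assembled from $T$ and $\mathcal{B}^+$ --- forbids any edge of $\mathcal{B}^+$ from remaining a forward cross edge of $T_{\mathcal{B}^+}$, so we get the desired contradiction.

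First I would localize $u$ on the original tree $T$. Observation~\ref{obs:second} applied to $\mathcal{B}^+ \subseteq E(G)$ together with $\Upsilon(T)\ge FNN$ yields $\mathcal{C}(T,T_{\mathcal{B}^+})\ge FNN$, so every node with df order below $FNN$ on $T$ keeps the same df order on $T_{\mathcal{B}^+}$. The forward-cross hypothesis gives $d\negthinspace f\negthinspace o(v,T_{\mathcal{B}^+})>d\negthinspace f\negthinspace o(u,T_{\mathcal{B}^+})$, and a short case analysis (if $d\negthinspace f\negthinspace o(u,T_{\mathcal{B}^+})<FNN$, the df order of $u$ is preserved and the local tree structure among nodes with df order $<FNN$ is preserved too, so $(u,v)$ would already be a forward cross edge of $T$, contradicting $\Upsilon(T)\ge FNN$) gives $d\negthinspace f\negthinspace o(u,T_{\mathcal{B}^+})\ge FNN$ and, by the contrapositive of Observation~\ref{obs:second}, $d\negthinspace f\negthinspace o(u,T)\ge FNN$. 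Next, $d\negthinspace f\negthinspace o(u,T_{\mathcal{B}^+})<\mathcal{C}^+(T,T_{\mathcal{B}^+})$ combined with Definition~\ref{def:S+(T, T)} excludes $u$ from the set of nodes whose df order on $T$ exceeds $Max(\mathcal{B}^+)$, so $d\negthinspace f\negthinspace o(u,T)\le Max(\mathcal{B}^+)$. Applying the same contrapositive to $v$ gives $d\negthinspace f\negthinspace o(v,T)\ge FNN$; upgrading this to $>FNN$ and invoking Definition~\ref{def:B^+} places $(u,v)\in\mathcal{B}^+$, which completes the contradiction.

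The main obstacle is the boundary case $d\negthinspace f\negthinspace o(v,T)=FNN$, because clause (i) of Definition~\ref{def:B^+} demands the strict inequality $d\negthinspace f\negthinspace o(v,T)>FNN$ while Observation~\ref{obs:second} only delivers $\ge FNN$ in the contrapositive. I expect to dispatch this either via the closure clauses (ii)/(iii) of Definition~\ref{def:B^+} --- since $u$ is already admitted with $d\negthinspace f\negthinspace o(u,T)\le Max(\mathcal{B}^+)$, the batch $\mathcal{B}^+$ is closed downward in the tail's df order --- or, failing that, by a direct structural argument that under Procedure~\ref{algo:Indexing} the $FNN$-th node of $T$ cannot appear as the head of any edge stored in $\mathcal{N}^+\negthickspace$\textit{-index}, so any edge $(u,v)$ with $d\negthinspace f\negthinspace o(v,T)=FNN$ is automatically ignored and hence its existence as a forward cross edge of $T_{\mathcal{B}^+}$ would already have propagated to a forward cross edge of $T$, contradicting $\Upsilon(T)\ge FNN$. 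Apart from this boundary step, the remainder of the argument is a mechanical adaptation of the proof of Theorem~\ref{theorem:reduction_edges}.
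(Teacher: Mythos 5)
Your main line is the paper's own argument: localize the tail $u$ of a putative forward cross edge of $T_{\mathcal{B}^+}$ so that $d\negthinspace f\negthinspace o(u,T)\in[FNN,Max(\mathcal{B}^+)]$ (nodes below $FNN$ are handled by the Theorem~\ref{theorem:reduction_edges} argument, nodes above $Max(\mathcal{B}^+)$ are excluded by Definition~\ref{def:S+(T, T)}), force $(u,v)\in\mathcal{B}^+$, and contradict Stipulation~\ref{sti:stipulation}. That part is correct and matches the paper, which merely packages the same case split as ``first prove the $\mathcal{C}$-version, then treat the band $[FNN,Max(\mathcal{B}^+)]$.''

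The boundary case $d\negthinspace f\negthinspace o(v,T)=FNN$ that you flag is real --- the paper simply asserts $(u,x)\in\mathcal{B}^+$ there without justification --- but neither of your two escape routes closes it. The closure clauses (\romannumeral2)/(\romannumeral3) of Definition~\ref{def:B^+} only complete $\mathcal{B}^+$ within the set of \emph{eligible} edges, i.e.\ those with head order strictly above $FNN$ (see the paper's gloss after the definition and the construction in Procedures~\ref{algo:Indexing} and~\ref{algo:ObtainingEdges}), so they never admit an edge whose head sits at position $FNN$. And your fallback ``this would propagate to a forward cross edge of $T$, contradicting $\Upsilon(T)\geq FNN$'' does not yield a contradiction, because the tail $u$ has $d\negthinspace f\negthinspace o(u,T)>FNN$, which is perfectly compatible with $\Upsilon(T)\geq FNN$. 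The correct dispatch is that the case is vacuous: since no node with depth-first order below $FNN$ on $T$ has an outgoing edge in $\mathcal{B}^+$, the first $FNN$ steps of the DFS on $T\cup\mathcal{B}^+$ under Stipulation~\ref{sti:stipulation} coincide with those on $T$ (this sharpens Observation~\ref{obs:second} by one position), so the node $v$ with $d\negthinspace f\negthinspace o(v,T)=FNN$ satisfies $d\negthinspace f\negthinspace o(v,T_{\mathcal{B}^+})=FNN\leq d\negthinspace f\negthinspace o(u,T_{\mathcal{B}^+})$, and $(u,v)$ cannot be a forward cross edge as classified by $T_{\mathcal{B}^+}$ at all. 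With that substitution your proof goes through.
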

\begin{proof}
	Firstly, it could be easily proved that $\Upsilon(T_{\mathcal{B}^+})\geq Min\big(\mathcal{C}(T,T_{\mathcal{B}^+}),Max(\mathcal{B}^+)+1\big)$, in the same way that we prove Theorem~\ref{theorem:reduction_edges}. Secondly, assuming $d\negthinspace f\negthinspace o(u,T)\in [FNN,Max(\mathcal{B}^+)]$ and $d\negthinspace f\negthinspace o(u,T_{\mathcal{B}^+})\leq Max(\mathcal{B}^+)$. We prove the statement by contradiction. Supposing edge $(u,x)$ is a forward cross edge of $G$ as classified by $T_{\mathcal{B}^+}$. However, as $d\negthinspace f\negthinspace o(u,T)\in[FNN,Max(\mathcal{B}^+)]$, then $(u,x)\in\mathcal{B}^+$, which contradicts to our assumption according to Stipulation~\ref{sti:stipulation}. \hfill$\Box$
\end{proof}

Theorem~\ref{theorem:roundI} states the correctness of our optimization algorithm.

\begin{theorem} \label{theorem:roundI}
	If procedure RoundI or procedure RoundI$\&$Reduction returns $T$ and $FNN$, then $\Upsilon(T)\geq FNN$.
\end{theorem}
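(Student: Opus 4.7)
The plan is to establish $\Upsilon(T_{out})\geq FNN_{out}$ by showing, for every edge $(u,v)\in E(G)$ with $d\negthinspace f\negthinspace o(u,T_{out})<FNN_{out}$, that $(u,v)$ is not a forward cross edge as classified by $T_{out}$. I write $T_{in},FNN_{in}$ for the inputs to RoundI (respectively RoundI$\&$Reduction) and $T_{out},FNN_{out}=\mathcal{C}(T_{in},T_{out})$ for the outputs; the invariant $\Upsilon(T_{in})\geq FNN_{in}$ holds by induction on the iteration of Algorithm~\ref{algo:up_down}, being seeded by Theorem~\ref{theorem:min_initialize} and preserved by Theorem~\ref{theorem:E+}.

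First I would verify that $FNN_{out}\geq FNN_{in}$. Every in-memory DFS invocation inside RoundI is fed only edges $(u',v')$ with $d\negthinspace f\negthinspace o(u',T_{in})\geq FNN_{in}$ and $d\negthinspace f\negthinspace o(v',T_{in})>FNN_{in}$, so the first $FNN_{in}$ positions of the tree stay pinned through the entire restructuring sequence $T_{in}=T_0,T_1,\ldots,T_k=T_{out}$, and therefore $\mathcal{C}(T_{in},T_{out})\geq FNN_{in}$.

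Next I would partition $E(G)$ with respect to RoundI. Group~(a) consists of the edges consumed by some batch $B_i$ inside RoundI; for these, the proof of Theorem~\ref{theorem:min_initialize} applies essentially verbatim, because it reasons one batched edge at a time and never uses that the remaining batches cover the rest of $E(G)$. Hence every such $(u,v)$ with $d\negthinspace f\negthinspace o(u,T_{out})<\mathcal{C}(T_{in},T_{out})=FNN_{out}$ is not a forward cross edge classified by $T_{out}$.

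The main obstacle is Group~(b): the edges that RoundI never touches, either because they were filtered when $\mathcal{N}^+\negthickspace$\textit{-index} was constructed or because they sit in the index but fail the dynamic RoundI filter. A short check shows each such edge satisfies $d\negthinspace f\negthinspace o(u,T_{in})<FNN_{in}$ or $d\negthinspace f\negthinspace o(v,T_{in})\leq FNN_{in}$, so by $\Upsilon(T_{in})\geq FNN_{in}$ it is already not a forward cross edge classified by $T_{in}$. The delicate part is lifting this classification from $T_{in}$ to $T_{out}$, because agreement of depth-first orders does not on its own imply agreement of LCA relations. I would do a case analysis on the type of $(u,v)$ in $T_{in}$ --- tree edge, forward edge, backward edge, backward cross edge --- exactly as in Section~\ref{sec:chain_reaction}, and in each case combine Stipulation~\ref{sti:stipulation} with the pinning of $u$'s position to show that any relocation of $v$ during RoundI either keeps $u$ as an ancestor of $v$ or lowers $d\negthinspace f\negthinspace o(v,T_{out})$ below $d\negthinspace f\negthinspace o(u,T_{out})$, neither of which matches the forward-cross pattern. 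Finally, RoundI$\&$Reduction inherits the same conclusion, because its reduction step prunes only index entries $(u,v)$ with $d\negthinspace f\negthinspace o(u,T)<FNN$ or $d\negthinspace f\negthinspace o(v,T)\leq FNN$ --- precisely the edges that Group~(b) has just certified to be harmless --- and so the output tree $T$ is unaffected by the pruning.
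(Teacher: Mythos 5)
Your proposal is correct and follows essentially the same route as the paper's own proof: both split $E(G)$ into the edges actually fed to the in-memory DFS inside RoundI (handled by the batch-sequence argument of Theorem~\ref{theorem:min_initialize}) and the edges filtered out by $\mathcal{N}^+\negthickspace$\textit{-index} or by the dynamic test in RoundI (handled by the invariant $\Upsilon(T_{in})\geq FNN_{in}$ together with the fact that such edges cannot later become forward cross edges). Your treatment of Group~(b) merely makes explicit the case analysis that the paper compresses into ``it can be derived from Theorem~\ref{theorem:E+} that \dots'' and which already appears in the second half of the proof of Theorem~\ref{theorem:min_initialize}.
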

\begin{proof}
	It can be derived from Theorem~\ref{theorem:E+} that: ``In the $i$th iteration of EP-DFS, $\forall e=(u,v)\in E(G)$, if $d\negthinspace f\negthinspace o(u,T)< FNN$ and $d\negthinspace f\negthinspace o(v,T)\leq FNN$, then $e$ cannot be a forward cross edge as classified by $T$ in the $j$th~($j\geq i$) iteration of $G$ in EP-DFS''. Thus, the correctness of this statement could be proved based on Theorem~\ref{theorem:min_initialize}, which is discussed below. For one thing, our $\mathcal{N}^+\negthickspace$\textit{-index} is obtained after the value of $FNN$ is given. Moreover, in the computation process of $\mathcal{N}^+\negthickspace$\textit{-index}, $\forall e=(u,v)\in E(G)$, $e$ is contained in $\mathcal{N}^+\negthickspace$\textit{-index}, iff, $d\negthinspace f\negthinspace o(u,T)\geq FNN$ and $d\negthinspace f\negthinspace o(v,T)>FNN$. For another, in procedure RoundI or procedure RoundI$\&$Reduction, an edge $e=(u,v)$ is discarded, iff, $d\negthinspace f\negthinspace o(u,T)< FNN$ and $d\negthinspace f\negthinspace o(v,T)\leq FNN$. \hfill$\Box$
\end{proof}

The termination proof and the correctness proof of EP-DFS are given in Theorem~\ref{theorem:terminated_correctness}.

\setcounter{theorem}{5}
\begin{theorem}\label{theorem:terminated_correctness}
	Algorithm~\ref{algo:up_down} can be finally terminated, and returns $T$ as a DFS-Tree of $G$.
\end{theorem}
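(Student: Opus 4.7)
My plan is to prove the two claims hand-in-hand using the loop invariant $\Upsilon(T)\geq FNN$ maintained through the while loop of Lines~\ref{line:star:while_true}--\ref{line:star:end_while}. For correctness, I would first verify the invariant inductively. It holds after the assignment in Line~\ref{line:star:FNN} by Theorem~\ref{theorem:min_initialize} applied to the partition of $E(G)$ scanned in the second loop of Procedure~\ref{algo:intialRound}. It is preserved by the update in Lines~\ref{line:star:min1}--\ref{line:star:min2} via Theorem~\ref{theorem:E+}, by the fallback branches in Lines~\ref{line:star:roundIandReduction} and~\ref{line:star:roundI} via Theorem~\ref{theorem:roundI}, and by Procedure~\ref{algo:Rearrangement}, which only permutes children of nodes whose depth-first order is at least $FNN$ and therefore never moves any node to a position below $FNN$ nor creates a forward cross edge with tail of order below $FNN$. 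When the while loop exits we have $FNN\geq n$, hence $\Upsilon(T)\geq n$. Since every depth-first order of a node of $G$ lies in $\{1,\dots,n\}$ and Definition~\ref{def:U(T)} requires the tail of a forward cross edge to admit a head with a strictly larger depth-first order, $\Upsilon(T)=n$ is impossible. Thus $\Upsilon(T)=+\infty$, i.e.\ $G$ has no forward cross edge with respect to $T$, and the characterization recalled at the start of Section~\ref{sec:related_works} forces the returned $T$ to be a DFS-Tree of $G$.

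For termination, the plan is to show that each pass through the body of the while loop strictly increases $FNN$ by at least one, which suffices because $FNN$ is a monotone integer bounded above by $n$. The first step is to observe that $Max(\mathcal{B}^+)\geq FNN_{\text{old}}+1$ once Procedure~\ref{algo:ObtainingEdges} returns: in its first while-loop iteration, $\kappa=0$ and $u.OD\leq n-1$, so the test $\kappa+u.OD>n$ at Lines~\ref{line:e+:while_if_1} and~\ref{line:e+:while_if_2} fails and the increment at Line~\ref{line:e+:while_eta_add_1} fires before any break can occur. The second step is to show $\mathcal{C}^+(T,T_{\mathcal{B}^+})\geq FNN_{\text{old}}+1$. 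By the loop invariant every edge in $\mathcal{B}^+$ has tail of order at least $FNN_{\text{old}}$, so under Stipulation~\ref{sti:stipulation} DFS on $T\cup\mathcal{B}^+$ is indistinguishable from DFS on $T$ until the node of old order $FNN_{\text{old}}$ is reached (tree children are always tried first and no $\mathcal{B}^+$ edge can be taken from a node with old order below $FNN_{\text{old}}$). Consequently every node with old order at most $FNN_{\text{old}}$ keeps its order in $T_{\mathcal{B}^+}$, and every node $u$ with $dfo(u,T)>Max(\mathcal{B}^+)\geq FNN_{\text{old}}$ must therefore receive $dfo(u,T_{\mathcal{B}^+})>FNN_{\text{old}}$. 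Combining the two bounds, Lines~\ref{line:star:min1}--\ref{line:star:min2} assign $FNN=\min\bigl(\mathcal{C}^+(T,T_{\mathcal{B}^+}),\,Max(\mathcal{B}^+)+1\bigr)\geq FNN_{\text{old}}+1$; by Theorem~\ref{theorem:roundI} the subsequent RoundI fallbacks can only leave $FNN$ unchanged or raise it further, so the loop terminates after at most $n$ passes.

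The main obstacle will be the second step, namely making the match between DFS on $T$ and DFS on $T\cup\mathcal{B}^+$ precise. I would prove it by induction on the DFS step count: using Stipulation~\ref{sti:stipulation}, the $i$-th node visited by DFS on $T\cup\mathcal{B}^+$ coincides with the $i$-th node visited by DFS on $T$ for every $i\leq FNN_{\text{old}}$, because at each such step the current node has old order strictly less than $FNN_{\text{old}}$, hence admits no outgoing $\mathcal{B}^+$ edge, leaving DFS no choice but to follow the tree edges of $T$ exactly as on $T$ alone, including whenever it must backtrack through an ancestor at order below $FNN_{\text{old}}$. Once this matching and the auxiliary bound on $Max(\mathcal{B}^+)$ are in hand, the strict-progress inequality $FNN\geq FNN_{\text{old}}+1$ is immediate, and termination followed by the correctness argument above yields that Algorithm~\ref{algo:up_down} finally halts and returns a DFS-Tree of $G$.
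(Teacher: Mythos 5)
Your proposal is correct and follows essentially the same route as the paper: maintain the invariant $\Upsilon(T)\geq FNN$ (the paper phrases it as the edge condition $d\negthinspace f\negthinspace o(u,T)<FNN$ or $d\negthinspace f\negthinspace o(v,T)\leq FNN$ ruling out forward cross edges), conclude at exit that $FNN\geq n$ forces $\Upsilon(T)=+\infty$ and hence a DFS-Tree, and obtain termination from the fact that $\mathcal{B}^+$ fits in memory and $FNN$ strictly increases each iteration. Your treatment is more detailed where the paper is terse — in particular your explicit derivation of $Max(\mathcal{B}^+)\geq FNN_{\text{old}}+1$ and $\mathcal{C}^+(T,T_{\mathcal{B}^+})\geq FNN_{\text{old}}+1$ supplies the strict-progress argument that the paper simply attributes to Theorem~\ref{theorem:E+}.
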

\begin{proof}
	\textit{Termination proof.} The loop in Lines~\ref{line:star:while_true}-\ref{line:star:end_while} of Algorithm~\ref{algo:up_down} ends when $FNN\geq n$. In other words, Algorithm~\ref{algo:up_down} can be terminated iff the value of $FNN$ could exceed $n-1$, which can be proved based on the following two points. One is, based on the discussions on Section~\ref{sec:preliminaries}, there exists an edge batch $\mathcal{B}^+$, where $|\mathcal{B}^+|<n$. Another is, according to Theorem~\ref{theorem:E+}, the value of $FNN$ must increase in each iteration of the loop~(Lines~\ref{line:star:while_true}-\ref{line:star:end_while} of Algorithm~\ref{algo:up_down}).  
		
	\noindent\textit{Correctness proof.} The correctness of EP-DFS is equivalent to ``In each iteration of Algorithm~\ref{algo:up_down}, if $\forall (u,v)\in E(G)$, if $d\negthinspace f\negthinspace o(u,T)<FNN$ or $d\negthinspace f\negthinspace o(v,T)\leq FNN$, $(u,v)$ is not a forward cross edge as classified by $T$''. According to Definition~\ref{def:S+(T, T)}, Theorem~\ref{theorem:E+}, and Theorem~\ref{theorem:roundI} then the above statement is certainly correct. \hfill$\Box$
\end{proof}

\section{Experimental Evaluation}\label{sec:experiments}
In this section, we evaluate the performance of the proposed algorithm, EP-DFS, against the EB-DFS and DC-DFS algorithms, on both synthetic and real graphs. Specifically, we are interested in the efficiency and the number of I/Os for each algorithm on each graph, where we measure the former by the running time, and the latter by the total size of disk accesses. Besides, with the assumption that each input graph are stored on disk in the form of edge list, we are also interested on the effects of the different disk edge storage methods, i.e. random list~(\textit{default storage method}) and adjacency list~(the directed edges with the same tail are continuous stored in disk.). Our experiments run on a machine with the intel i7-9700 CPU, 64 GB RAM and 1TB disk space. All the algorithms in our experiments are implemented by Java. Note that, we limit each experiment within eight hours, and we restrict that at most $2n$ edges could be hold in the main memory, as discussed in our problem statement~(Section~\ref{sec:preliminaries}).

\textit{Datasets.} We utilize various large-scale datasets including both real and synthetic graphs. The storage method for all utilized graphs is default to random list on disk.

The real datasets are presented in Table~\ref{tab:real_datasets}, which include one relatively small graph, two social networks, and several large crawls or massive networks from different domains\footnote{https://github.com/google/guava/wiki/InternetDomainNameExplained}. \textit{cnr-2000} is a relatively small crawl based on the Italian CNR domain. \textit{amazon-2008} describes the similarity among the books of Amazon store, which is a symmetric graph. \textit{hollywood-2011} is one of the most popular social graphs, in which the nodes are actors, and each edge links two actors appeared in a movie together. \textit{eu-2015-host} is the host~(the maximum number of pages per host is set to 10M) graph of eu-2015, which is a large snapshot of the domains of European Union countries, taken in 2015 by BUbiNG~\cite{BMSB} and starting from the site ``http://europa.eu/''. \textit{uk-2002} is a 2002 crawl of the .uk domain performed by UbiCrawler~\cite{BCSU3}. \textit{gsh-2015-tpd} is the graph of top private domains of gsh-2015, which is a large snapshot of the web taken in 2015 by BUbiNG, similar to graph eu-2015 but without any domain restriction. \textit{it-2004} is a fairly large crawl of the .it domain. \textit{sk-2005} is 2005 crawl of the .sk domain performed by UbiCrawler. All the above utilized datasets can be accessed from the website ``http://law.di.unimi.it/datasets.php''. 

\begin{table*}[t]
	
	\caption{The experimental results on the real datasets, where (\romannumeral1) LCC is an abbreviation for largest connected component; (\romannumeral2) the running time~(RT) is in seconds; (\romannumeral3) the number of I/Os and the index size~(IS) are in megabytes; (\romannumeral4) ``-'' indicates that the method timed out on this dataset.}
	\label{tab:real_datasets}
	\small
	\begin{center}
		\setlength{\tabcolsep}{1.5mm}{
			\begin{tabular}{c|cccc|cc|cc|cccc}
				\hline \multirow{2}*{Dataset}& \multirow{2}*{$n/10^6$} &\multirow{2}*{$m/10^6$}&\multirow{2}*{$m/n$}&\multirow{2}*{LCC$/10^6$} & \multicolumn{2}{c|}{EB-DFS} &\multicolumn{2}{c|}{DC-DFS }& \multicolumn{3}{c}{EP-DFS}\\
				&	& & & & RT & I/O& RT & I/O& RT & I/O & IS\\
				\hline
				cnr-2000	&$0.33$&$3$	&$9.88$	&$0.11 (34.4\%)$&$190$&$10,529$	&$47$&$1,809$	&$7$&$225$			&$1.6$\\	
				amazon-2008	&$0.74$&$5$	&$7.02$	&$0.63 (85.3\%)$&$125$&$4,445$		&$59$&$1,434$	&$18$&$318$		&$1.5$\\
				hollywood-2011&$2.18$&$229$&$105$	&$1.92(87.9\%)$&-&-				&$3,539$&$118,208$&$277$&$10,318$		&$4.4$\\
				eu-2015-host&$11.3$&$387$	&$34.4$	&$6.51(57.8\%)$&$15,423$&$908,045$&$25,218$&$1,346,455$&$759$&$29,087$	&$146$\\
				uk-2002		&$18.5$&$298$	&$16.1$	&$12.1(65.3\%)$&$19,821$&$923,335$	&$22,934$&$456,917$&$833$&$32,837$	&$131$\\
				gsh-2015-tpd&$30.8$&$602$	&$19.5$	&$20.0(64.9\%)$&-&-				&-&-			&$1,466$&$39,833$		&$102$\\
				it-2004		&$41.3$&$1,151$&$27.9$	&$29.9(72.3\%)$&-&-				&-&-			&$3,927$&$163,805$	&$500$\\
				sk-2005		&$50.6$&$1,949$&$38.5$	&$35.9(70.9\%)$&-&-				&-&-			&$5,723$&$222,836$	&$843$\\
				\hline				
			\end{tabular}
		}
	\end{center}
\end{table*}

The synthetic datasets are randomly generated~\cite{DBLP:journals/csur/DrobyshevskiyT20}, according to Erdös-Rényi~(ER) model~(\textit{default model}) and scale-free~(SF) model. Firstly, for a dataset $G=(V,E)$ in ER model, we randomly and repeatedly generate an edge $e=(u,v)$ that $u,v\in V$ and $u\neq v$, where the edges in $E$ are unique. Then, for the datasets following SF model, the generation method is in the way in~\cite{SF}, where the parameters $p$, $q$ and $m$ are set to $0.9$, $0$ and $1$, respectively. 

\textit{Comparison algorithms and implementation details.} In literature, many algorithms are proposed for addressing the DFS problem~\cite{DBLP:books/daglib/0037819}. However, only a few of them could be used on semi-external memory model, since it is non-trivial to solve the DFS problem under the restriction that only a spanning tree of the input graph could be maintained in memory. These algorithms include EE-DFS, EB-DFS, and DC-DFS, as discussed in Section~\ref{sec:related_works}. Among these algorithms, EE-DFS is extremely inefficient, because it processes the edges of $G$ one by one instead of edge batches. Thus, in this section, we evaluate our EP-DFS against EB-DFS and DC-DFS.

We prefer the most efficient data structures in the limited main memory space~(as discussed in Section~\ref{sec:preliminaries} and Section~\ref{sec:algorithm}). For example, in EB-DFS, we utilize our rearrangement algorithm. The division technique used in the evaluated DC-DFS algorithm is Divide-TD, as the other division technique is inefficient reported in \cite{DBLP:conf/sigmod/ZhangYQS15}. In addition, we develop the DC-DFS algorithm based on Tarjan algorithm~(fast strongly connected component algorithm)~\cite{DBLP:journals/siamcomp/Tarjan72} and Farach-Colton and Bender Algorithm~(fast LCA algorithm)~\cite{DBLP:journals/jal/BenderFPSS05} to ensure the efficiency of DC-DFS algorithm. Note that, in EP-DFS, if $F_2-FNN<\,$Min$(100,\frac{n}{1000})$ in Line~\ref{line:star:while_if} and Line~\ref{line:star:while_else_if}, Algorithm~\ref{algo:up_down}, we say $\frac{F_2-FNN}{n}\rightarrow 0$, and we set the threshold $\gamma$ to $10\%$ by default.

\subsection{Exp 1: Performance on real large graphs} \label{sec:experiments:exp1}
We evaluate the semi-external DFS algorithms on eight real graphs. The disk storage method for all the utilized graphs is default to random list. The evaluation results on the real datasets are demonstrated in Table~\ref{tab:real_datasets}, where we also present (external) space~(index size, IS) cost for function Indexing~(Line~\ref{line:star:index}) of EP-DFS. The indexing or restructuring time of $\mathcal{N}^+\negthickspace$\textit{-index} in EP-DFS is included in the running time~(RT) of EP-DFS. 

Table~\ref{tab:real_datasets} shows that, compared to EB-DFS and DC-DFS, our EP-DFS could achieve a great performance on the complex real datasets. In other words, EP-DFS is an order of magnitude faster than EB-DFS and DC-DFS, on the reported results. Besides, EP-DFS's I/O consumption is also greatly lower than one-tenth of EB-DFS's and DC-DFS's I/O consumption. For example, to obtaining a DFS-Tree of dataset eu-2015-host, EB-DFS costs $15,423$s and $908,045$MB I/Os; DC-DFS requires $25,128$s and $1,346,455$MB I/Os; EP-DFS could be accomplished within $759$s and $29,087$MB I/Os. That is EB-DFS and DC-DFS consume $20$ and $33$ times as much time as EP-DFS, respectively, and they require $31$ and $46$ times as much space as EP-DFS, respectively. Note that, the experimental results also reflect that the indexing process of $\mathcal{N}^+\negthickspace$\textit{-index} is efficient, and the external space that such index requires is considerably small compared to $m$ or $n$.

In addition, we test the three algorithms on hollywood-2011, uk-2002 and sk-2005, by randomly selecting edges from such datasets, as demonstrated in Figures~\ref{fig:hollywood}-\ref{fig:sk}. We vary the number of edges from $20\%m$ to $100\%m$, as shown in the x-axes of Figures~\ref{fig:hollywood}-\ref{fig:sk}. The chosen reason of such datasets against the others is as follows: (\romannumeral1) hollywood-2011 is the dataset with the highest average node degree~($\frac{m}{n}$), i.e. $105$; (\romannumeral2) uk-2002 is a relatively large graph among the given eight real graphs whose largest connected component contains more than half of nodes; (\romannumeral3) sk-2005 is the graph with the largest scale including a large connected component with about $35.9$M nodes. Specially, in order to generate a random graph $G_p=(V_p,E_p)$ of an input graph $G=(V,E)$ that $\frac{|E_p|}{m}\negthickspace=\negthickspace p$, we scan all the edges in $E$, where, for each edge $e\in E$, $e$ is selected independently, and added into $G_p$ with $p$ probability. Since the number of the edges in $G$ is huge, the size of the generated edge set $E_p$ is $p\times m$, according to the law of large numbers~\cite{Statistic}.

\begin{figure*}
	\begin{minipage}[t]{0.49\linewidth}
		
		\centering
		\renewcommand{\thesubfigure}{}
		\subfigure[(a) Efficiency]{
			\includegraphics[scale = 0.6]{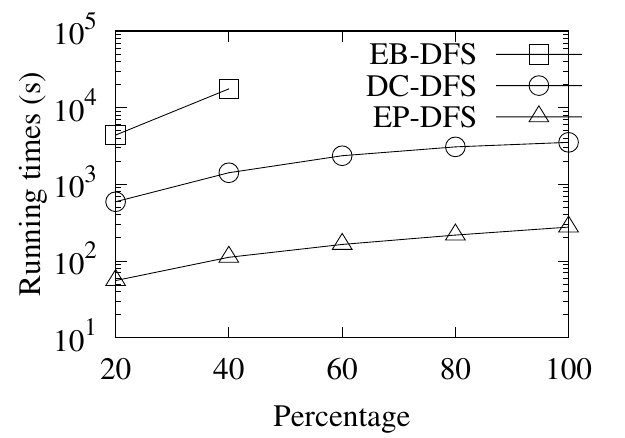}}
		\subfigure[(b) I/O]{
			\includegraphics[scale = 0.6]{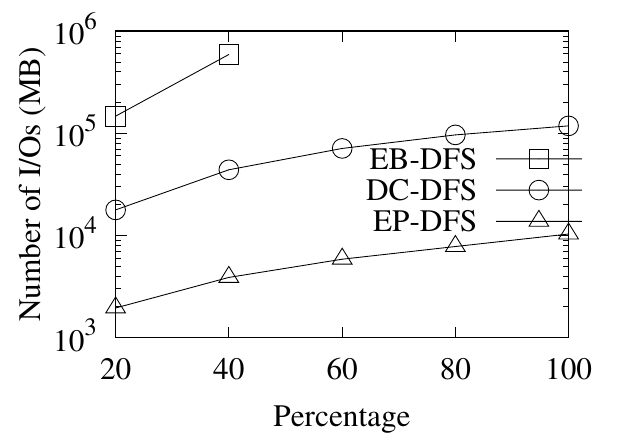}}
		\caption{The experimental results on hollywood-2011.}
		\label{fig:hollywood}
		
	\end{minipage}
	\begin{minipage}[t]{0.48\linewidth}
		
		\centering
		\renewcommand{\thesubfigure}{}
		\subfigure[(a) Efficiency]{
			\includegraphics[scale = 0.6]{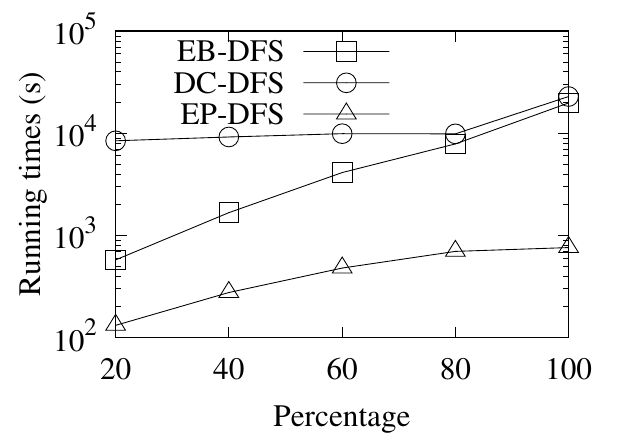}}
		\subfigure[(b) I/O]{
			\includegraphics[scale = 0.6]{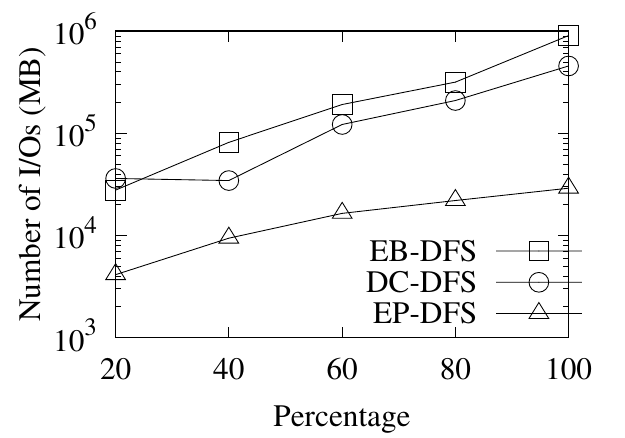}}
		\caption{The experimental results on uk-2002.}
		\label{fig:uk}
		
	\end{minipage}
\end{figure*}

\begin{figure*}
	\begin{minipage}[t]{0.49\linewidth}
		
		\centering
		\renewcommand{\thesubfigure}{}
		\subfigure[(a) Efficiency]{
			\includegraphics[scale = 0.6]{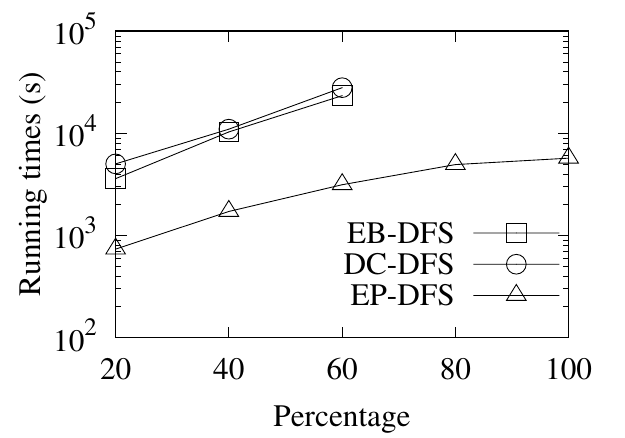}}
		\subfigure[(b) I/O]{
			\includegraphics[scale = 0.6]{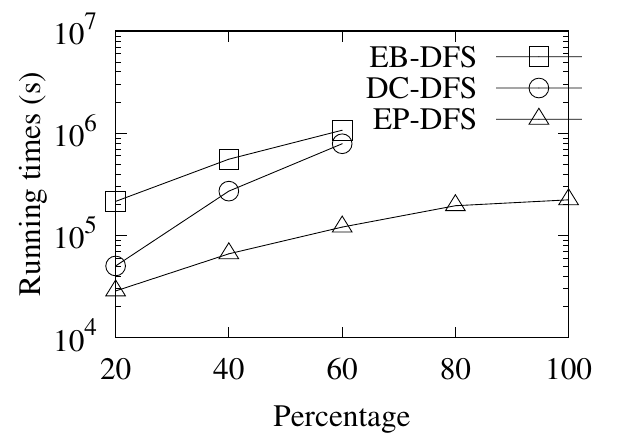}}
		\caption{The experimental results on sk-2005.}
		\label{fig:sk}
		
	\end{minipage}
	\begin{minipage}[t]{0.48\linewidth}
		
		\centering
		\renewcommand{\thesubfigure}{}
		\subfigure[(a) Efficiency]{
			\includegraphics[scale = 0.6]{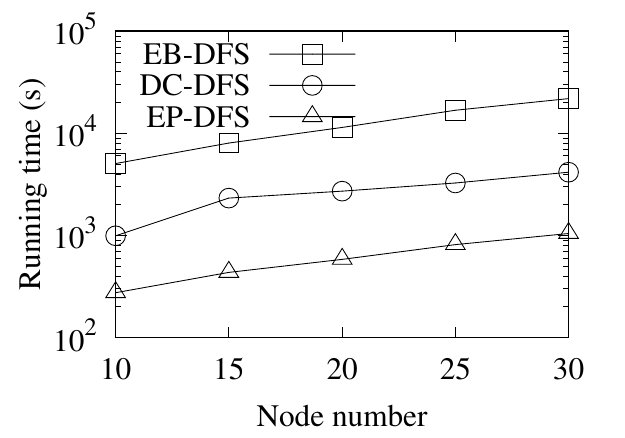}}
		\subfigure[(b) I/O]{
			\includegraphics[scale = 0.6]{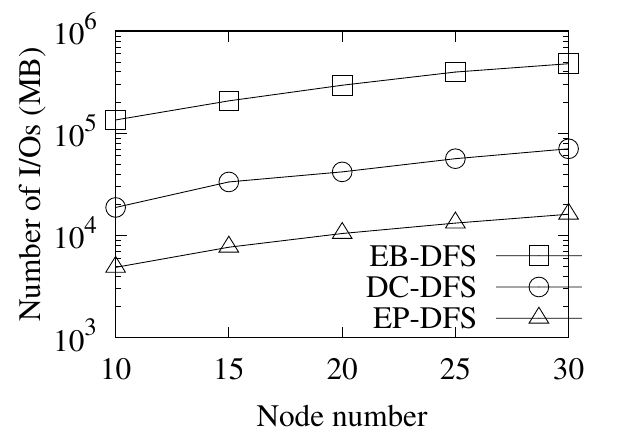}}
		\caption{Varying $n$ on the synthetic graphs of ER model.}
		\label{fig:ER_V}
		
	\end{minipage}
\end{figure*}

The experimental results in Figures~\ref{fig:hollywood}-\ref{fig:sk} confirm that our EP-DFS outperforms the traditional algorithms on the real large graphs with different structures. Firstly, in Figure~\ref{fig:hollywood}, the EB-DFS algorithm cannot construct the DFS-Tree when the generation percent $p$ exceeds $40\%$, while, even on entire hollywood-2011 dataset, the cost of EP-DFS is only about $10^2$s. The reason is that EB-DFS needs to execute function Round many times, when the structure of the input graph goes more complex, according to the discussion about the ``chain reaction'' in Section~\ref{sec:overview}. Secondly, in Figure~\ref{fig:uk}, the performance of DC-DFS is poor, which consumes more than $10^4$s for each generated graphs of uk-2002, compared to the performance of EP-DFS, which requires less than $10^3$s on the entire uk-2002 dataset. Besides, even though the I/O costs of DC-BFS on the $20\%$, $40\%$ uk-2002 graphs are less than that on the $60\%$ uk-2002 graph, the time costs are nearly the same. That is because the processes of DC-DFS on such datasets are related to random disk I/O accesses, which is discussed in Section~\ref{sec:preliminaries}. Then, in Figure~\ref{fig:sk}, both DC-DFS and EB-DFS are terminated because of the time limitation, when $p$ exceeds $60\%$ on the sk-2005 dataset.

\subsection{Exp 2: The impact of varying $n$ on synthetic graphs} \label{sec:experiments:exp2}
We vary the number of the nodes from $10,000,000$ to $30,000,000$, for the graphs in ER model, and we set the average node degree~($\frac{m}{n}$) to $10$, for each generated graph. All the graphs are stored on disk in the form of random list. The experimental results about the time and I/O consumption of the evaluated algorithms are presented in Figure~\ref{fig:ER_V}(a) and Figure~\ref{fig:ER_V}(b), respectively. As the number of nodes grows, the running time and the number of I/Os required by each evaluated algorithm increase. However, among all the algorithms, EP-DFS has the lowest increasing rate, and EB-DFS has the highest increasing rate. The reason is that, when the number of the nodes increases, the size of the entire input graph grows, i.e. from $100$M to $300$M. Since the ``chain reaction'' exists, restructuring the in-memory spanning tree to a DFS-Tree goes harder, where the invocation times of both function Round and function Reduction-Rearrangement increase in EB-DFS. In contrast, our EP-DFS, after constructing $\mathcal{N}^+\negthickspace$\textit{-index}, could avoid scanning the entire input graphs. Besides, our EP-DFS greatly reduce the number of the I/Os, which only requires about $10^4$MB total size of disk accesses.

\subsection{Exp 3: The impact of varying $\frac{m}{n}$ on synthetic graphs}\label{sec:experiments:exp3}

We vary the average degree of the nodes from $10$ to $30$, for the graphs in ER model, in this part. The node number is default to $10,000,000$, and the storage method is random list by default. The experimental results about the time and I/O cost are depicted in Figure~\ref{fig:ER_D}(a) and Figure~\ref{fig:ER_D}(b), respectively, which demonstrate that: with the increase of $\frac{m}{n}$~(the average node degree), the numbers of the running time and the disk I/O accesses are increased. Since the chain reaction exists, the number of I/Os required by algorithm EB-DFS is far beyond $10^5$, and the running time reaches the time limit, i.e. 8 hours, when $\frac{m}{n}=30$. Plus, the performance of DC-DFS is acceptable, even though that of DC-DFS is worse than that of EP-DFS which requires less than $10^3$s and $10^4$ I/Os.

\subsection{Exp 4: The impact of different disk storage methods} \label{sec:experiments:exp4} For the two kinds of disk storage algorithms, we evaluate all the algorithms on the graphs, where the node numbers are set to $10,000,000$, and we vary the average node degree from $10$ to $30$. All the graphs are synthetic datasets and in the form of ER model. In other words, we restore all the utilized synthetic datasets in Exp~3 in the form of adjacency list. The experimental results on the graphs stored in the form of random list are presented in Figure~\ref{fig:ER_D}, while that in the form of adjacency list are depicted in Figure~\ref{fig:ER_AD}. Such results show that there is an increase trend when the number of $\frac{m}{n}$ increases, no matter what disk storage method is. However, the performances of both the two traditional algorithms are worse when the input graphs are stored in the form of adjacency list, while, the performance of EP-DFS is slightly better. Furthermore, both EB-DFS and DC-DFS reach the time limit in the experiments depicted in Figure~\ref{fig:ER_AD}. Such experimental results indicate that (\romannumeral1) the different graph storage method changes the orders of the edges on disk, which affects the process of restructuring an in-memory spanning tree to a DFS-Tree; (\romannumeral2) compared to the traditional algorithms, our EP-DFS algorithm is more adaptable to the different disk-resident graph storage methods.

\begin{figure*}
	\begin{minipage}[t]{0.49\linewidth}
		
		\centering
		\renewcommand{\thesubfigure}{}
		\subfigure[(a) Efficiency]{
			\includegraphics[scale = 0.6]{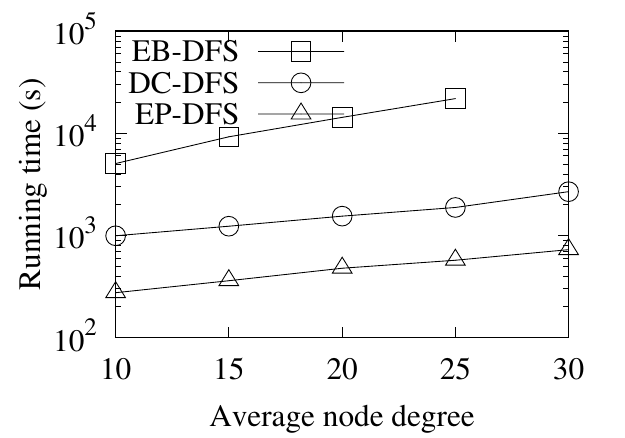}}
		\subfigure[(b) I/O]{
			\includegraphics[scale = 0.6]{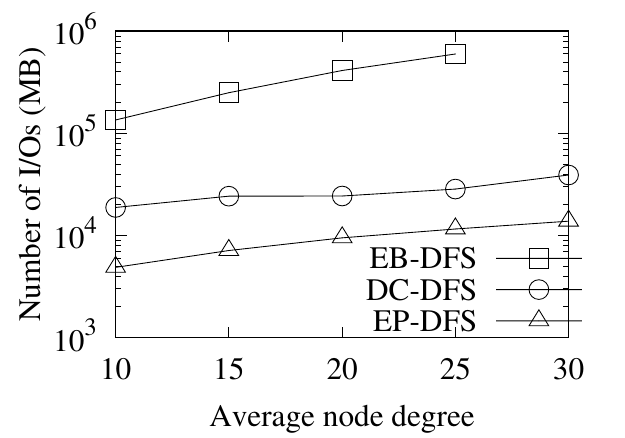}}
		\caption{Varying $\frac{m}{n}$ on the synthetic graphs of ER model.}
		\label{fig:ER_D}
		
	\end{minipage}
	\begin{minipage}[t]{0.48\linewidth}
		
		\centering
		\renewcommand{\thesubfigure}{}
		\subfigure[(a) Efficiency]{
			\includegraphics[scale = 0.6]{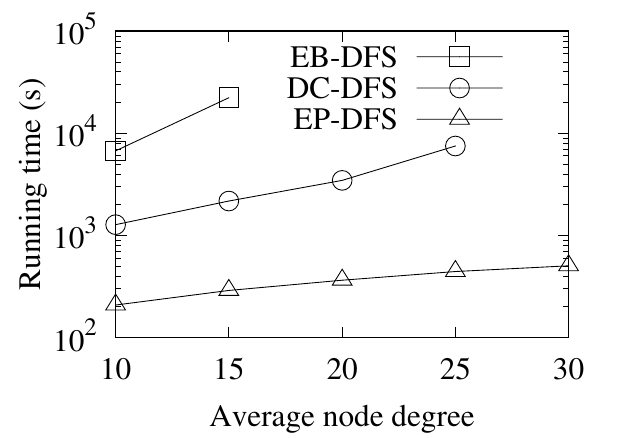}}
		\subfigure[(b) I/O]{
			\includegraphics[scale = 0.6]{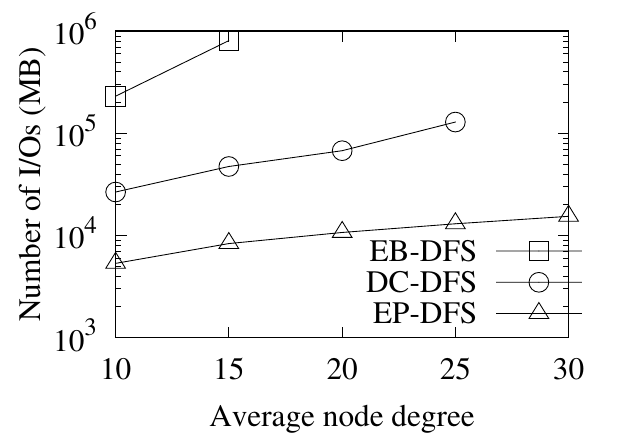}}
		\caption{The results on the ER graphs stored by adjacency list.}
		\label{fig:ER_AD}
		
	\end{minipage}
\end{figure*}

\begin{figure*}
	\begin{minipage}[t]{0.49\linewidth}
		
		\centering
		\renewcommand{\thesubfigure}{}
		\subfigure[(a) Efficiency]{
			\includegraphics[scale = 0.6]{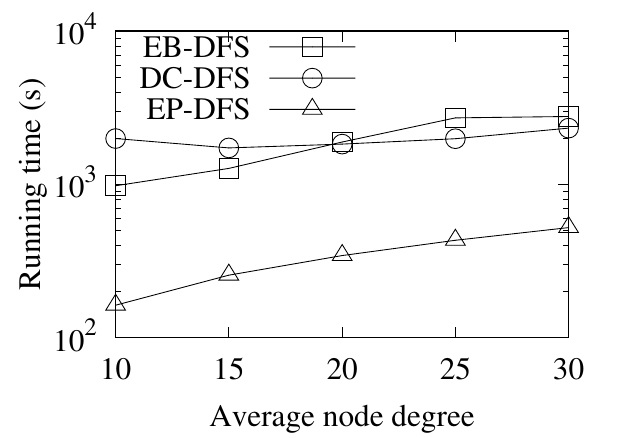}}
		\subfigure[(b) I/O]{
			\includegraphics[scale = 0.6]{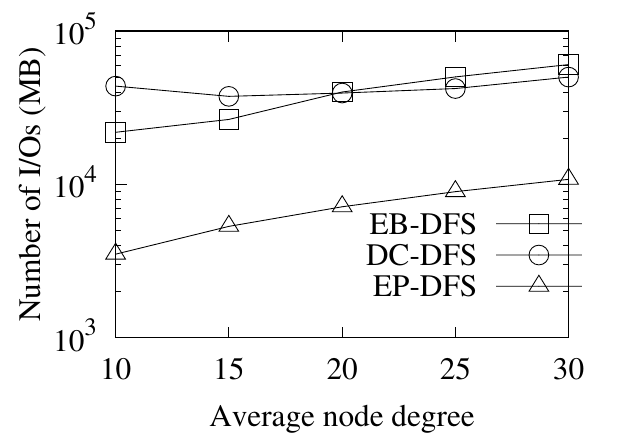}}
		\caption{Varying $\frac{m}{n}$ on the synthetic graphs of SF model.}
		\label{fig:SR_D}
		
	\end{minipage}
	\begin{minipage}[t]{0.48\linewidth}
		
		\centering
		\renewcommand{\thesubfigure}{}
		\subfigure[(a) Efficiency]{
			\includegraphics[scale = 0.6]{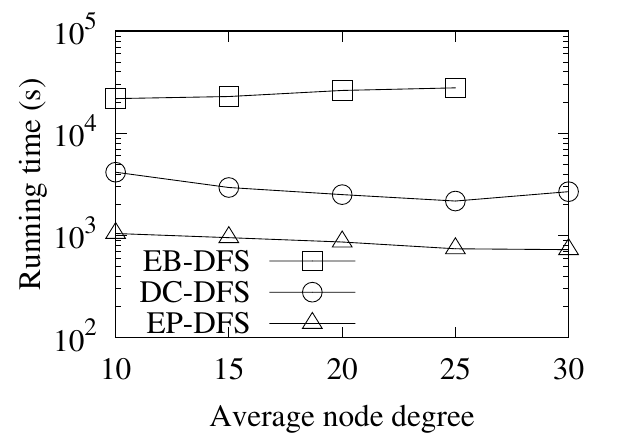}}
		\subfigure[(b) I/O]{
			\includegraphics[scale = 0.6]{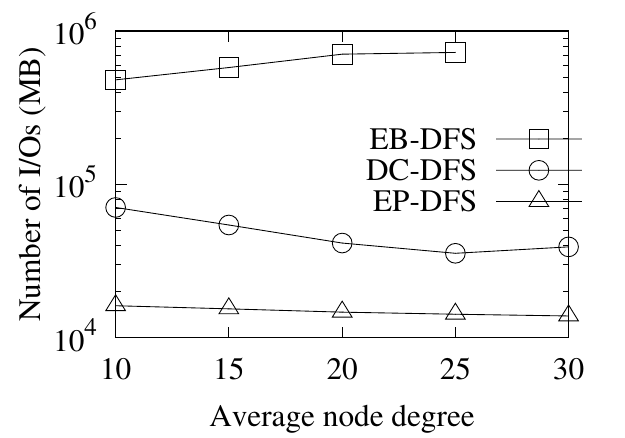}}
		\caption{Varying $\frac{m}{n}$ on synthetic graphs of ER model with fixed $m$.}
		\label{fig:ER_E}
		
	\end{minipage}
\end{figure*}

\subsection{Exp 5: The impact of different graph structures} \label{sec:experiments:exp5} 
In this part, we evaluate all the three semi-external DFS algorithms on both ER and SF graphs. Each group of the experiments runs on the graphs with fixed node number, i.e. $n=10,000,000$, in which we vary the average node degree $\frac{m}{n}$ in the range of $[10,30]$. Plus, the disk storage method of each graph is default to random list. The experimental results of the running time and the number of disk I/O accesses on ER graphs are depicted in Figure~\ref{fig:ER_D}(a) and Figure~\ref{fig:ER_D}(b), respectively, and that on SF graphs are demonstrated in Figure~\ref{fig:SR_D}(a) and Figure~\ref{fig:SR_D}(b), respectively. The performances of the evaluated algorithms on the SF graphs are better than that on the ER graphs. However, the time and I/O consumption of DC-DFS on certain SF graphs are higher than that of the other two algorithms. The reason is that, the division process of DC-DFS is hard on the SF graphs generated in the way of~\cite{SF}, in which, the more links that a node $v$ is connected to, the higher the probability that it adds a new edge related to $v$.

\subsection{Exp 6: The impact of fixing $m$ on synthetic graphs} \label{sec:experiments:exp6}
 Since the performance of the semi-external DFS algorithms is related to the scales of the input graphs, we are interested in the performance of the three algorithms on the synthetic graphs with fixed edge number. Specifically, in this part, we set $m$ to $300,000,000$ for each generated graph, and vary average degree $\frac{m}{n}$ from $10$ to $30$, as depicted in Figure~\ref{fig:ER_E}. The disk storage method is default to random list. The node numbers of the graphs are $30$M, $20$M, $15$M, $12$M and $10$M, respectively. The experimental results of the running time and the number of required I/Os are demonstrated in Figure~\ref{fig:ER_E}(a) and Figure~\ref{fig:ER_E}(b), respectively. According to the depicted results, the performance of EB-DFS algorithm goes worse, when the average degree of input graph increases. Especially, when $\frac{m}{n}= 30$ and $m= 30$, the EB-DFS reaches the time limit. Because, with the fixed size of $m$, the larger number of $\frac{m}{n}$, the more complex the graph structure is, which causes numerous chain reactions in the restructuring process. In contrast, EP-DFS and DC-DFS could address the given input graphs with higher efficiency and less I/Os, according to Figure~\ref{fig:ER_E}, with the increase of the average degree.

\section{Conclusion} \label{sec:conclusion}
This paper is a comprehensive study of the DFS problem on semi-external environment, where the entire graph cannot be hold in the main memory. This problem is widely utilized in many applications. Assuming that at least a spanning tree $T$ can be hold in the main memory, semi-external DFS algorithms restructure $T$ into a DFS-Tree of $G$ gradually. This paper discusses the main challenge of the non-trivial restructuring process with theoretical analysis, i.e. the ``chain reaction'', which causes the traditional algorithms to be inefficient. Then, based on the discussion, we devise a novel semi-external DFS algorithm, named EP-DFS, with a lightweight index $\mathcal{N}^+\negthickspace$\textit{-index}. The experimental evaluation on both synthetic and real large datasets confirms that our EP-DFS algorithm significantly outperforms traditional  algorithms. Our future work is to present the asymptotic upper bounds of the time and I/O costs of EP-DFS. It is interesting but intricate, since the performance of EP-DFS is affected by many interrelated factors as demonstrated in this paper.

\section*{Acknowledgments}
This paper was partially supported by NSFC grant 61602129.

%
%









\bibliographystyle{abbrv}

\end{document}